\title{Tight Guarantees for Cut-Relative Survivable Network Design via a Decomposition Technique} 
\titlerunning{Tight Guarantees for Cut-Relative Survivable Network Design} 
\author{Nikhil Kumar}{University of Waterloo}{nikhil.kumar2@uwaterloo.ca}{}{}
\author{JJ Nan}{University of Waterloo}{jnan@uwaterloo.ca}{}{}
\author{Chaitanya Swamy}{University of Waterloo}{cswamy@uwaterloo.ca}{https://orcid.org/0000-0003-1108-7941}{}
\authorrunning{N. Kumar, J. Nan, and C. Swamy} 
\keywords{Approximation algorithms, Network Design, Cut-requirement functions, 
  Weak Supermodularity, Iterative rounding, LP rounding algorithms} 
\newcommand{\clonelabel}[2]{\@bsphack
  \expandafter\ifx\csname r@#2\endcsname\relax
  \else\protected@write\@auxout{}{\string\newlabel{#1}%
    {\csname r@#2\endcsname}}%
  \fi
  \expandafter\ifx\csname r@#2@cref\endcsname\relax
  \else\protected@write\@auxout{}{\string\newlabel{#1@cref}%
    {\csname r@#2@cref\endcsname}}%
  \fi
  \@esphack}
\newcommand{\np}{{\em NP}\xspace}
\newcommand{\nphard}{\np-hard\xspace} 
\newcommand{\apx}{{\em APX}\xspace}
\newcommand{\apxhard}{\apx-hard\xspace}
\DeclareMathOperator{\argmax}{argmax}
\DeclareMathOperator{\defc}{def}
\DeclareMathOperator{\wdef}{wdef}
\newtheorem{fact}[theorem]{Fact}
\newenvironment{proofof}[1]{\begin{proof}[Proof of {#1}]}{\end{proof}}
\newcommand{\R}{\ensuremath{\mathbb R}}
\newcommand{\Z}{\ensuremath{\mathbb Z}}
\newcommand{\C}{\ensuremath{\mathcal{C}}}
\newcommand{\Lc}{\ensuremath{\mathcal L}}
\newcommand{\OPT}{\ensuremath{\mathit{OPT}}}
\newcommand{\es}{\ensuremath{\emptyset}}
\newcommand{\assign}{\ensuremath{\leftarrow}}
\newcommand{\sse}{\subseteq}
\newcommand{\hx}{\ensuremath{\widehat x}}
\newcommand{\bx}{\ensuremath{\overline x}}
\newcommand{\bA}{\ensuremath{\overline A}}
\newcommand{\bS}{\ensuremath{\overline S}}
\newcommand{\ld}{\ensuremath{\lambda}}
\newcommand{\al}{\ensuremath{\alpha}}
\newcommand{\dt}{\ensuremath{\delta}}
\newcommand{\sndp}{\ensuremath{\mathsf{SNDP}}\xspace}
\newcommand{\kecss}{\ensuremath{k\text{-}\mathsf{ECSS}}\xspace}
\newcommand{\kefts}{\ensuremath{k\text{-}\mathsf{EFTS}}\xspace}
\newcommand{\fndp}[1][f]{\ensuremath{{#1}\text{-}\mathsf{NDP}}\xspace}
\newcommand{\crsndp}{\ensuremath{\mathsf{CR}\text{-}\mathsf{SNDP}}\xspace}
\newcommand{\crfndp}[1][f]{\ensuremath{\mathsf{CR}\text{-}{#1}\text{-}\mathsf{NDP}}\xspace}
\newcommand{\lpname}[1][f,G]{\ensuremath{(\text{\textnormal{CRLP}}_{#1})}\xspace}
\newcommand{\fndlp}[1][f,G]{\ensuremath{(\text{\textnormal{NDP}}_{#1})}\xspace}
\newcommand{\scut}{A}
\newcommand{\scutc}{\bA}
\newcommand{\gs}[1][S]{\ensuremath{G[{#1}]}}
\newcommand{\sym}{\ensuremath{\mathsf{sym}}}
\newcommand{\crndpalg}{{\sc CRNDP-Alg}\xspace}
\newcommand{\scol}{\C}
\newcommand{\alg}{\ensuremath{\mathsf{Alg}}\xspace}
\newcommand{\thresh}{\tau}
\newcommand{\prop}{\pi}
\newcommand{\grace}{\ensuremath{\mathsf{grace}}\xspace}
\newcommand{\cdfunc}{h}
\newcommand{\smcuts}{Z}
\newcommand{\nf}{\gamma}
\newcommand{\newf}{\kappa}
\newcommand{\spart}{C}
\newcommand{\eqclass}{\ensuremath{\mathsf{EQ}}\xspace}
\begin{document}

\maketitle

\begin{abstract}
In the classical {\em survivable-network-design problem} (\sndp), we are given an
undirected graph \( G = (V, E) \), non-negative edge costs, and 
some $k$ tuples $(s_i,t_i,r_i)$, where $s_i,t_i\in V$ and $r_i\in\Z_+$.
The objective is to find a
minimum-cost subset \( H \subseteq E \) such that each $s_i$-$t_i$ pair remains connected
even after the failure of any $r_i-1$ 
edges. It is well-known
that \sndp can be equivalently modeled using a weakly-supermodular 
{\em cut-requirement function} \( f \), where the objective is to find the minimum-cost
subset of edges that picks at least \( f(S) \) edges across every cut \( S \subseteq V \). 

Recently, motivated by fault-tolerance in graph spanners, Dinitz, Koranteng, and Kortsartz
proposed a variant of \sndp that enforces a {\em relative} level of fault tolerance with
respect to \( G \). Even if a feasible \sndp-solution may not exist due to \( G \)
lacking the required fault-tolerance, the goal is to find a solution $H$ that is at least
as fault-tolerant as \( G \) itself. 
They formalize the latter condition in terms of paths and fault-sets, which gives rise to  
\emph{path-relative \sndp} (which they call relative \sndp). 
Along these lines, we introduce a new model of relative network design, called 
\emph{cut-relative \sndp} (\crsndp), where the goal is to select a minimum-cost subset of
edges that satisfies the given (weakly-supermodular)
cut-requirement function to the maximum extent possible, i.e., by picking \( \min\{ f(S),
|\delta_G(S)| \} \) edges across every cut \( S\sse V \). 

Unlike \sndp, the cut-relative and path-relative versions of \sndp are not equivalent. 
The resulting cut-requirement function for \crsndp (as also path-relative \sndp) is not
weakly supermodular, and extreme-point solutions to the natural LP-relaxation need not
correspond to a laminar family of tight cut constraints.
Consequently, standard techniques cannot be used directly to design approximation
algorithms for this problem. We develop a {\em novel decomposition technique} to circumvent this
difficulty and use it to give a 
{\em tight $2$-approximation algorithm for \crsndp}. 
We also show some new hardness results for these relative-\sndp problems. 
\end{abstract}

\section{Introduction} \label{intro}

In the classical {\em survivable-network-design problem} (\sndp), we are given an
undirected graph $G=(V,E)$, non-negative edge costs $\{c_e\}_{e\in E}$, and some $k$
source-sink pairs $s_i,t_i\in V$ with requirements $r_i\in\Z_+$, for $i=1,\ldots,k$. The
goal is to find a minimum-cost set $H\sse E$ of edges such that there are at least $r_i$
edge-disjoint $s_i$-$t_i$ paths in $H$, for all $i\in[k]$. We overload notation and
use $H$ to denote both the edge-set, and the corresponding subgraph of $G$.

\sndp imposes an absolute level of fault-tolerance in a solution subgraph $H$, %in the
by ensuring that for any bounded-size {\em fault-set} $F\sse E$ of edges that may fail,
the graph $H-F$ retains some connectivity properties.
More precisely, due to Menger's theorem, or the max-flow min-cut theorem, the feasibility
condition on $H$ can be stated equivalently as follows.
For every $i\in[k]$, and every fault-set $F\sse E$ with $|F|<r_i$: 
\begin{enumerate}[label=(S\arabic*), topsep=2pt, noitemsep, leftmargin=*]
\item (Path-version) there is an $s_i$-$t_i$ path in $H-F$; \label{pathsndp}
\item (Cut-version) 
for every $s_i$-$t_i$ cut $S\sse V$ (i.e., $|S\cap\{s_i,t_i\}|=1$), 
we have $\dt_{H-F}(S)\neq\es$, where $\dt_{H-F}(S):=\dt(S)\cap(H-F)$. \label{cutsndp}
\end{enumerate}

Recently, motivated by work on fault-tolerance in graph spanners, Dinitz, Koranteng, and 
Kortsartz~\cite{DinitzKK22}, proposed a variant of \sndp that aims to impose a  
{\em relative} level of fault-tolerance with respect to the graph $G$. 
The idea is that even if the \sndp instance in infeasible, because $G$ itself
does not possess the required level of fault-tolerance, one should not have to completely
abandon the goal of fault-tolerance: 
it is still meaningful and reasonable to seek a solution that is ``as fault-tolerant
as $G$'', and in this sense is fault-tolerant relative to $G$. 

To formalize this, it is useful to consider the definition of \sndp in terms
of fault-sets. 
Roughly speaking, we would like to capture that $H$ is a feasible solution if
\begin{equation}
\text{for every fault-set $F$ (valid for \sndp), \quad
$H-F$ and $G-F$ have similar connectivity.}
\tag{*} \label{infdefn}
\end{equation}
To elaborate, in \sndp, if there is {\em even one} fault-set $F'$ under which $G-F'$ fails
to have the required connectivity, i.e., 
$|F'|<r_i$ and $F'\supseteq\dt_G(S)$ for some $s_i$-$t_i$ cut $S$, 
then ``all bets are off''; we declare that the instance in infeasible.
But declaring infeasibility here feels unsatisfactory because we allow one specific
``problematic'' fault-set $F'$ 
to block us from obtaining any kind of fault-tolerance.  
In contrast, \eqref{infdefn} aims to provide
a {\em per-fault-set guarantee}, by 
asking for a solution $H$ that functions as well as $G$ in terms of connectivity under the  
failure of any fault-set $F$ (considered for \sndp). 
We can formalize ``$H-F$ and $G-F$ have similar connectivity'' in two ways, via paths or
cuts, and this gives rise to the following two problem definitions.

\begin{definition} \label{relsndp}
Let $\bigl(G=(V,E),\{c_e\}_{e\in E},\{s_i,t_i,r_i\}_{i\in[k]}\bigr)$ be an \sndp
instance. 
\begin{enumerate}[label=\textnormal{(R\arabic*)}, topsep=2pt, noitemsep, leftmargin=*]
\item {\em Path-relative \sndp}: $H\sse E$ is a feasible solution
if for every $i\in[k]$ and $F\sse E$ with $|F|<r_i$, \ \ 
$G-F$ has an $s_i$-$t_i$ path \ \ $\implies$ \ \ $H-F$ has an $s_i$-$t_i$ path.
\label{prsndp}

\item {\em Cut-relative \sndp} (\crsndp): $H\sse E$ is a feasible solution
if for every $i\in[k]$ and $F\sse E$ with $|F|<r_i$, 
for every $s_i$-$t_i$ cut $S$, \quad
$\dt_{G-F}(S)\neq\es\ \implies\ \dt_{H-F}(S)\neq\es$.
\label{crsndp-def}
\end{enumerate}
The goal in both problems is to find a minimum-cost feasible solution.
\end{definition}

Perhaps surprisingly, path-relative \sndp and cut-relative \sndp are 
{\em not equivalent}, even when the \sndp instance 
involves a single $s$-$t$ pair. It is not hard to show that if $H$ is feasible for
\crsndp, it is also feasible for path-relative SNDP% 
\footnote{Fix any $i\in[k]$ and $F\sse E$ with $|F|<r_i$. If $G-F$ has an
$s_i$-$t_i$ path, then $\dt_{G-F}(S)\neq\es$ for every $s_i$-$t_i$ cut. So since $H$ is
feasible for cut-relative \sndp, we have $\dt_{H-F}(S)\neq\es$ for every $s_i$-$t_i$ cut,
which implies that $H-F$ has an $s_i$-$t_i$ path.}
but the converse fails to hold; see Appendix~\ref{append-nonequiv} for an example.

Path-relative \sndp was defined by Dinitz et al.~\cite{DinitzKK22}, who referred to this
problem simply as relative \sndp.
They considered (among other problems) the path-relative version of  
$k$-edge connected subgraph (\kecss), which is the special case of \sndp where we have an
$s_i$-$t_i$ pair with $r_i=k$ for every pair of nodes. (In this case, the path-relative
and cut-relative versions do turn out to be equivalent~\cite{DinitzKK22}; see also
Lemma~\ref{kecss-equiv}.) 
They called the resulting path-relative problem, {\em $k$-edge fault-tolerant subgraph}
(\kefts), and observed that  
the feasibility condition for \kefts can be equivalently stated as: $H$ is feasible iff
$|\dt_H(S)|\geq g(S):=\min\bigl\{k,|\dt_G(S)|\bigr\}$ for all $S\sse V$. 
The function
$g:2^V\mapsto\Z$ is called a {\em cut-requirement function}, as it stipulates the (minimum)
number of edges across any cut in any feasible solution. 
Cut-requirement functions constitute a very versatile framework for specifying
network-design problems, where given
a cut-requirement function $f:2^V\mapsto\Z$, the
corresponding {\em $f$-network-design problem} (\fndp), also called the 
{\em $f$-connectivity problem}, 
is to find a minimum-cost set $H\sse E$ such that 
$|\dt_H(S)|\geq f(S)$ for all $S\sse V$. 
For instance, it is easy to see that \sndp corresponds to \fndp 
for the cut-requirement function $f^\sndp$, defined by
$f^\sndp(S):=\max\,\{r_i: |S\cap\{s_i,t_i\}|=1\}$ 
(and \kecss is $\fndp[{f^{\kecss}}]$ where $f^{\kecss}(S):=k$ for all 
$\es\neq S\subsetneq V$); 
various other network-design problems, such as the $T$-join problem, point-to-point
connection problem etc., can be modeled using suitable cut-requirement functions 
(see, e.g.,~\cite{GoemansW95,GoemansW96}). 

Cut-relative \sndp is a problem that we introduce in this paper. 
Its formulation in terms of cuts is the problem one 
obtains when we 
replace $f^{\kecss}$ in the cut-based formulation of
\kefts 
by the cut-requirement function $f^\sndp$ for (general) \sndp: that is, 
(similar to \kefts), we can say that $H\sse E$ is feasible for \crsndp iff 
$|\dt_H(S)|\geq g^{\crsndp}(S):=\min\bigl\{f^\sndp(S),|\dt_G(S)|\bigr\}$ for all $S\sse V$  
(we show this easy equivalence in Section~\ref{prelim}); in other words, \crsndp
corresponds to $\fndp[{g^{\crsndp}}]$. 
To gain some intuition for cut-relative \sndp, and contrast it with path-relative \sndp,  
consider \sndp with a single $s$-$t$ pair and requirement $r$, for simplicity.
A feasible solution $H$ to path-relative \sndp offers a per-fault-set guarantee as
encapsulated by~\ref{prsndp}.
But, for a given fault-set $F$ with $|F|<r$, we get a weak fault-tolerance
guarantee in terms of $s$-$t$ cuts: if $\delta_{G-F}(S)=\emptyset$ for \emph{even one}
$s$-$t$ cut $S$ then there is no requirement on $H$ for this fault-set $F$. Cut-relative
\sndp offers a per-fault-set \emph{and per-cut guarantee}, since for (every $F$ with
$|F|<r$ and) every $s$-$t$ cut $S$, we require that 
$\delta_{H-F}(S)\neq\emptyset$ if $\delta_{G-F}(S)\neq\emptyset$.

Along the lines of \crsndp, we can easily extend the cut-relative model to capture, more
broadly, 
the cut-relative version of any network-design problem specified by a 
cut-requirement function:   
given a cut-requirement function $f$, 
in the corresponding {\em cut-relative $f$-network-design problem}
(\crfndp), 
we seek a min-cost $H\sse E$ such that 
$|\dt_H(S)|\geq g^{\crfndp}(S):=\min\bigl\{f(S),|\dt_G(S)|\bigr\}$ for all $S\sse V$; 
that is, we seek to satisfy the cut-requirement function $f$ to the maximum extent
possible.
We refer to $f$ as the {\em base cut-requirement function}, to distinguish it from the
cut-requirement function $g^{\crfndp}$ that defines \crfndp.

\vspace*{-1ex}
\subparagraph*{Modeling power.}
We arrived at \crfndp, which can be seen as a fault-tolerant
cut-covering problem, 
as a technically natural extension of \kefts. We show below
that \crfndp offers a surprising amount of modeling power, 
and, in particular, allows one to capture stronger forms of relative fault-tolerance 
compared to \kefts.  
In \kefts, there is a sharp relative-fault-tolerance threshold at $k$: if $H$ is feasible,
then $G-F$ and $H-F$ have the same components whenever $|F|<k$, but there are
no guarantees for larger fault-sets. 
With \crfndp, 
one can capture 
a weaker guarantee also for other fault-sets, which
allows for a more {\em graceful degradation} of relative fault-tolerance as $|F|$
increases.   

\begin{example} \label{graceful}
One way of modeling graceful relative-fault-tolerance degradation is as follows: 
given a {\em non-increasing function} $\thresh:\Z_+\mapsto\Z_+$, 
we seek a min-cost subgraph $H$ such that for {\em every} fault-set $F$,  
$G-F$ and $H-F$ have exactly the same components with at most $\thresh(|F|)$ nodes. 
If $G$ models a communication network where 
connected nodes can communicate with each other, then 
this yields the desirable guarantee that,
{\em post-faults}, if a node can communicate with more than $\thresh(|F|)$ nodes in
$G$ then the same holds for $H$; 
if this number is at most $\thresh(|F|)$, then post-faults, it can communicate with the
same nodes in $G$ and $H$.
Note that if $\thresh(k-1)\geq n=|V|$, then $H$ is feasible for \kefts; 
by adding more ``levels'' to $\thresh$, we can obtain fault-tolerance guarantees for 
larger fault-sets. 

In Section~\ref{modeling}, we show that this problem can be modeled as \crfndp 
with the weakly-supermodular function $f=f^{\grace}$, where
$f^{\grace}(S):=\min\,\{\ell: \thresh(\ell)<|S|\}$, for $S\neq\es$. 
\end{example} 

\begin{example} \label{gengraceful}
Generalizing Example~\ref{graceful} substantially, suppose along with $\thresh$, 
we have a {\em monotone} function $\prop:2^V\mapsto\R_+$, i.e., $\prop(T)\leq\prop(S)$
if $T\sse S$.
We now seek a subgraph $H$ such that for every fault-set $F$ and $S\sse V$ with 
$\prop(S)\leq\thresh(|F|)$, we have that 
$S$ is (the node-set of) a component of $G-F$ iff it is a component of $H-F$.
Similar to Example~\ref{graceful}, 
we can model this as \crfndp by defining
$f(S):=\min\,\{\ell: \thresh(\ell)<\prop(S)\}$, for $S\neq\es$ 
(Theorem~\ref{gengraceful-thm}). 

This setup yields fault-tolerance degradation depending on monotone
properties of components other than just their size, which creates a rich space of
problems. 
For example, suppose $\prop(S)=\max_{u,v\in S}\min_{\text{$u$-$v$ paths $P$ in $G$}}|P|$
is the weak-diameter of $S$. 
Setting $\thresh(\ell)=n$ if $\ell<k$, and some $t<n$ for larger $\ell$,  
we seek a \kefts 
solution $H$ such that for $F\sse E$ with $|F|\geq k$, 
$G-F$ and $H-F$ have the same components of weak-diameter at most $t$. 
\end{example}

\subsection{Our contributions} \label{contrib} 
We introduce cut-relative network-design problems and develop strong approximation
guarantees for these problems.  
{\em We obtain an approximation guarantee of $2$ for cut-relative \sndp},
which, notably, 
matches the best-known approximation factor for \sndp. 
Our guarantee applies more generally to \crfndp, whenever the base cut-requirement
function $f$ satisfies certain properties 
and the natural LP-relaxation of the problem \ref{lp} can be solved efficiently.
Our guarantee is relative to the optimal value of this LP,
{\em and is tight in that it matches the integrality gap of this LP.}

We also show that even in the simplest \sndp setting with only one $s$-$t$ pair
(wherein \sndp is polytime solvable), cut-relative \sndp and path-relative \sndp 
capture \kecss as a special case, and are thus \apxhard
(Section~\ref{hardness}). Previously, even \nphard{}ness of path-relative \sndp in the
$s$-$t$ case (which is studied 
by~\cite{DinitzKK22,DinitzKKN23}) was not known.

\vspace*{-1ex}
\subparagraph*{Technical contributions and overview.}
Technically, our main contribution is to show that we can obtain such a strong
guarantee {\em despite the fact that 
the cut-requirement function defining cut-relative network-design (i.e., $g^{\crfndp}$)
is not weakly supermodular}, which is the key property that drives the
$2$-approximation algorithm for \sndp.  
To elaborate, there is a celebrated $2$-approximation algorithm for \sndp due to
Jain~\cite{Jain99} based on iterative rounding. This {\em crucially} exploits the fact
that the underlying cut-requirement function $f^\sndp$ is {\em weakly supermodular}: for
any two node-sets $A,B\sse V$, we have: 
\[
f^\sndp(A)+f^\sndp(B)\leq\max\bigl\{f^\sndp(A\cap B)+f^\sndp(A\cup B),\ f^\sndp(A-B)+f^\sndp(B-A)\}.
\]
This property allows one to argue that given an LP-solution $x$, any two {\em tight} sets
that cross---i.e., sets $A, B$ for which $x(S)=f^\sndp(S)$ holds for $S\in\{A,B\}$ and 
$A\cap B, A-B, B-A$ are all non-empty---can be ``uncrossed'', and thereby an
extreme-point solution $\hx$ to the LP can be defined via a {\em laminar family} of tight
cut constraints. By a laminar family, we mean that any two sets $A, B$ in the family
satisfy $A\cap B=\es$ or $A\sse B$, and uncrossing two crossing tight sets means
that they can be replaced by an equivalent laminar family of tight sets.
Jain's seminal contribution was to show that 
given such a laminar family defining an extreme point, there always exists an edge $e$ for
which $\hx_e\geq\frac{1}{2}$; picking such an edge and iterating then yields the
$2$-approximation. This technique of uncrossing tight cut-constraints to obtain a
laminar family of tight cut-constraints has proved to be quite versatile,
becoming a staple technique 
that has been leveraged to obtain strong guarantees in various network-design settings
such as, most notably, network-design problems with degree
constraints~\cite{Goemans06,LauNSS09,SinghL07,BansalKN09,BansalKKNP13,KleinO23}.   
More generally, this technique of uncrossing to obtain a structured family of tight
constraints has been utilized for various combinatorial-optimization problems; see,
e.g.,~\cite{LauRS11}. 

As mentioned earlier, the main complication with \crfndp is that the
underlying cut-requirement function $g^{\crfndp}$ is usually {\em not} weakly-supermodular, 
even if the base cut-requirement function $f$ is. In particular, the cut-requirement
function $g^{\crsndp}$ defining cut-relative \sndp is not weakly-supermodular. This
difficulty was also noted by Dinitz et al.~\cite{DinitzKK22} when they considered $k$-edge
fault-tolerant subgraph (\kefts, or path- or cut- relative \kecss in our terminology). For
this special case, they identify a notion called local weak-supermodularity that is
satisfied by the cut-requirement function $\min\{k,|\dt_G(S)\}$ for \kefts, and
observed that Jain's machinery can be used as is if this property holds; that is, one can
uncross tight sets to obtain a laminar family, and hence obtain a $2$-approximation for
\kefts proceeding as in Jain's algorithm for \sndp.

However, for general \crsndp, such an approach breaks down, because, as we show in
Appendix~\ref{append-nolaminar}, one can construct quite simple \crsndp instances where an
(optimal) extreme point 
{\em cannot be defined by any laminar family of tight cut constraints}.    
In particular, this implies that the cut-requirement function $g^{\crfndp}$ is (in
general) not locally weakly-supermodular (even when $f$ is weakly-supermodular); moreover,  
it does not satisfy any property that allows for the uncrossing of tight sets. 
The upshot here is that the standard technique of uncrossing tight sets to obtain a
laminar family does not work, and one needs new ideas to deal with cut-relative network
design. 

The chief technical novelty of our work is to show how to overcome the impediment posed by 
the lack of any such nice structural property for $g^{\crfndp}$, 
via a novel {\em decomposition technique} that 
allows one to {\em reduce} cut-relative \fndp with a weakly supermodular 
{\em base cut-requirement function} to (standard) \fndp with (different) weakly
supermodular cut-requirement functions (see Theorem~\ref{redn-cor}).  

We introduce a little notation to describe this.
Let $\crfndp[(h,D)]$ denote \crfndp 
on graph $D$ with base  cut-requirement function $h$. 
For $S\sse V$, define its
{\em deficiency} to be $f(S)-|\dt_G(S)|$; we say that $S$ is a {\em small cut} if its
deficiency is positive (i.e., $|\dt_G(S)|<f(S)$). 
We use $\gs$ 
to denote the subgraph induced by $S$.
Let $\bS$ denote $V-S$. 

Assume that the underlying base cut-requirement function $f$ is
weakly supermodular (as is the case with \sndp). 
Observe first that if there is no small cut then $\crfndp$ is the same as $\fndp$.
Our key technical insight is that, otherwise,
we can utilize a novel splitting operation to simplify our problem. 
We show (see Theorem~\ref{main-redn}) that if we pick a 
small cut $\scut\sse V$ with {\em maximum deficiency},
then one can define suitable weakly-supermodular functions 
$f_\scut:2^\scut\mapsto\Z$ and $f_{\scutc}:2^{\scutc}\mapsto\Z$ such that {\em any
solution to the 
original \crfndp-instance yields solutions to $\crfndp[{(f_\scut,\gs[\scut])}]$ 
and $\crfndp[{(f_{\scutc},\gs[\scutc])}]$, 
and vice-versa}. 
The intuition here is as follows. Since $\scut$ is a small cut, any feasible \crfndp
solution must include all edges in $\dt_G(\scut)$. The functions $f_\scut$ and
$f_{\scutc}$ are essentially 
the restrictions of $f$ to $\scut$ and $\scutc$ respectively, taking this into account. 
Consequently, moving to $\crfndp[{(f_\scut,\gs[\scut])}]$ and
$\crfndp[{(f_{\scutc},\gs[\scutc])}]$ does not really 
impact the constraints for cuts that do not cross $\scut$;
but we do eliminate the constraints for crossing cuts. 
This is the simplification that we obtain via the splitting operation,
and one needs to argue that this does not hurt feasibility; that is, the
constraints for such cuts are still satisfied when we take $\dt_G(\scut)$ along with the
$\crfndp[{(f_\scut,\gs[\scut])}]$ and $\crfndp[{(f_{\scutc},\gs[\scutc])}]$ solutions.  
Here we exploit the weak-supermodularity of $f$, which implies 
that the
deficiency function $\defc(S):=f(S)-|\dt_G(S)|$ is also weakly supermodular. 
One can use this, and the fact that $\scut$ has maximum deficiency, to bound the
deficiency of any crossing set $T$ in terms of the deficiency of non-crossing sets, and
hence show that the cut-constraint for $T$ is satisfied.

By recursively applying the above splitting operation to $\scut$ and $\scutc$, we end up
with a partition $V_1,\ldots,V_k$ of $V$ and weakly-supermodular functions
$f_i:2^{V_i}\mapsto\Z$ for all $i\in[k]$ such that, for each $i\in[k]$, the graph $\gs[V_i]$
contains no small cut with respect to $f_i$. It follows that 
solving $\fndp[f_i]$ on the graph $\gs[V_i]$ for all $i\in[k]$ yields a solution to the
original \crfndp instance (Theorem~\ref{redn-cor}).  

\medskip
While we can obtain this decomposition efficiently, assuming that one has suitable
algorithmic primitives involving the base cut-requirement function $f$ (see
Section~\ref{decomp-comp}), in fact we only
need this decomposition in the {\em analysis}: 
in order to obtain the stated
LP-relative $2$-approximation algorithm, we only need to be able to find an extreme-point
optimal solution to the natural LP-relaxation \ref{crlp} for \crfndp (also after fixing
some variables to $1$). This is because the translation between \crfndp-solutions
for $G$, and $\fndp[f_i]$-solutions on $\gs[V_i]$ for all $i\in[k]$, also applies to
{\em fractional solutions}, 
and implies that any
extreme-point solution to \ref{crlp} yields extreme-point solutions to the
LP-relaxations of $\fndp[f_i]$ on $\gs[V_i]$ for all $i\in[k]$ (parts (b) and (c) of
Theorem~\ref{redn-cor}); hence, by Jain's result,
we can always find some fractional edge of value at least $\frac{1}{2}$, round its value
to $1$, and iterate. For \sndp, we can 
find an extreme-point optimal solution efficiently because we argue that one can devise an 
efficient separation oracle for \ref{crlp} (Lemma~\ref{crlp-solve}). 
This yields a $2$-approximation algorithm for \crsndp, and more
generally \crfndp, whenever $f$ is weakly supermodular and \ref{crlp} can be solved
efficiently (see Algorithm \crndpalg in Section~\ref{algresults}).

\subparagraph*{Other related work.}
Network design is a fairly broad research topic, with a vast amount of literature. 
We limit ourselves to the work that is most closely connected to our work.

As mentioned earlier, path-relative \sndp was introduced by Dinitz et
al.~\cite{DinitzKK22}. 
In addition to \kefts, which is the path-relative version of \kecss, 
they study the path-relative versions of two other special
cases of \sndp, for which they developed constant-factor approximation algorithms: 
(1) \sndp with $r_i\leq 2$ for all $i$;
(2) one $s_i$-$t_i$ pair with $r_i\leq 3$.
Both results were subsequently extended by~\cite{DinitzKKN23} who gave a 
$2$-approximation for \sndp with $r_i\leq 3$ for all $i$, and
a $2^{O(k^2)}$-approximation when there is one $s_i$-$t_i$ pair with $r_i=k$.
The only result known for general path-relative \sndp is due to~\cite{ChekuriJ23}, who
(among other results) devise an $O(k^2\log^2n\log\log n)$-approximation algorithm in
$n^{O(k)}$ time. 
As noted earlier, on the hardness side, prior to our work, it was not known if the setting
with one $s_i$-$t_i$ pair is even \nphard.

An influential paper by Goemans and Williamson~\cite{GoemansW95} (see also the
survey~\cite{GoemansW96}), which built upon the 
work of~\cite{AgrawalKR95} for the Steiner forest problem, developed the primal-dual method
for network-design problems. Their work also popularized the use of cut-requirement
functions to specify network-design problems, by showing how this framework can be used to
capture a slew of problems, and how their primal-dual framework leads to a
$2$-approximation algorithm for $\{0,1\}$-cut-requirement functions satisfying some
properties. This work was extended to handle general integer-valued cut-requirement
functions, which captures \sndp, in~\cite{WilliamsonGMV95,GoemansGPSTW94}, which led to an
$O(\log\max_i r_i)$-approximation for \sndp. The seminal work of Jain~\cite{Jain99} later
improved this to a $2$-approximation, and in doing so, introduced the powerful technique of
{\em iterative rounding}. Later work of~\cite{LauNSS09,SinghL07} on degree-bounded network
design added another ingredient to iterative rounding, namely iteratively
dropping some constraints. This paradigm of {\em iterative rounding and relaxation} has
proved to be an extremely versatile and powerful tool in developing algorithms (and
structural results) for
network-design problems, and more generally in combinatorial optimization;
see~\cite{LauRS11} for an extensive study.

Iterative rounding and relaxation derives its power from the fact that an extreme-point
solution to an LP-relaxation of the problem can be defined using
a structured family of tight constraints, such as, often a laminar family of cut
constraints in the case of network-design problems. As noted earlier, this key property
does not hold for cut-relative \sndp (and path-relative \sndp). Recently,
various works have considered network-design problems that give rise to cut constraints
that are less structured and not quite uncrossable;
see~\cite{Bansal25,BansalCGI24,BoydCHI24} and the references therein. Our work can be seen
as furthering this research direction.

\section{Preliminaries and notation} \label{prelim}
Recall that we are given an undirected graph $G=(V,E)$ with nonnegative edge costs
$\{c_e\}_{e\in E}$. For a subset $S\sse V$ and subset $Z\sse E$, which we will
interchangeably view as the subgraph $(V,Z)$ of $G$, we use $\dt_Z(S)$ to denote
$\dt_G(S)\cap Z$, i.e., the edges of $Z$ on the boundary of $S$.
Recall that $\bS$ denotes $V-S$, and $\gs$ denotes the subgraph induced by $S$.
A cut-requirement function on $G$ is a function $f:2^V\mapsto\Z$.
(We allow $f(S)$ to be negative chiefly for notational simplicity; 
this does not affect anything, as we will only examine sets for which $f(S)\geq 0$.)
Recall the following network-design problems.
\begin{enumerate}[label=$\bullet$, topsep=0.4ex, itemsep=0.2ex, leftmargin=*]
\item {\em $f$-network-design problem} (\fndp): find a min-cost edge-set $H$ such that
$|\dt_H(S)|\geq f(S)$ for all $S\sse V$.

\item {\em Cut-relative network-design problem} (\crfndp): find a min-cost edge-set $H$
such that $|\dt_H(S)|\geq g^{\crfndp}(S):=\min\bigl\{f(S),|\dt_G(S)|\bigr\}$ for all 
$S\sse V$. This corresponds to $\fndp[{g^{\crfndp}}]$. We call $f$ the base cut-requirement
function. 

\item {\em Survivable network design problem} (\sndp) is a special case of \fndp. 
The input also specifies $k$ tuples $\{(s_i,t_i,r_i)\}_{i\in[k]}$, where 
$s_i,t_i\in V$ and $r_i\in\Z_+$ for all $i\in[k]$. Defining 
$f^\sndp(S):=\max\,\{r_i: |S\cap\{s_i,t_i\}|=1\}$ for all $S\sse V$, 
we obtain that \sndp is the $f^\sndp$-network-design problem.

\item {\em Cut-relative \sndp} (\crsndp): 
The input here is the same as in \sndp. In Definition~\ref{relsndp} \ref{crsndp-def},
\crsndp was defined as: find a min-cost $H\sse E$ such that for every $i\in[k]$, every
$F\sse E$ with $|F|<r_i$, and every $s_i$-$t_i$ cut $S$, we have $\dt_{H-F}(S)\neq\es$
whenever $\dt_{G-F}(S)\neq\es$. Later in Section~\ref{intro}, we stated 
that this can be equivalently formulated as
\crfndp with base cut-requirement function $f^\sndp$. 
We now justify this statement.

Suppose that $H$ is feasible under \ref{crsndp-def} 
but $|\dt_H(S)|<\min\bigl\{f^\sndp(S),|\dt_G(S)|\bigr\}$ for some $S\sse V$. Suppose
$f^\sndp(S)=r_i$. Then taking $F=\dt_H(S)$, we have $|F|<r_i$, $\dt_{G-F}(S)\neq\es$, but
$\dt_{H-F}(S)=\es$, yielding a contradiction. So $H$ is feasible for $\crfndp[f^\sndp]$.
Conversely, suppose $H$ is feasible for $\crfndp[f^\sndp]$.
Consider any $i\in[k]$, any $F\sse E$ with $|F|<r_i$, and any $s_i$-$t_i$ cut $S$ with
$\dt_{G-F}(S)\neq\es$. Then, 
$|\dt_F(S)|<\min\bigl\{r_i,|\dt_G(S)|\bigr\}\leq\min\bigl\{f(S),|\dt_G(S)|\bigr\}\leq|\dt_H(S)|$; 
so $\dt_{H-F}(S)\neq\es$, showing that $H$ satisfies \ref{crsndp-def}.
\end{enumerate}

We will consider 
network-design problems defined on various
graphs and with various cut-requirement functions.  
Given a graph $D=(V_D,E_D)$ and cut-requirement function $h:2^{V_D}\mapsto\Z$, 
we use $\fndp[(h,D)]$ to denote \fndp on the graph $D$ with cut-requirement function $h$, 
and $\crfndp[(h,D)]$ to denote \crfndp on the graph $D$ with base cut-requirement function
$h$. 

\vspace*{-1ex}
\subparagraph*{Properties of cut-requirement functions.}
Let $f:2^V\mapsto\R$. We say that $f$ is {\em weakly supermodular} if for any two
node-sets $A,B\sse V$, we have 
$f(A)+f(B)\leq\max\bigl\{f(A\cap B)+f(A\cup B),\,f(A-B)+f(B-A)\bigr\}$. 
We say that $f$ is {\em symmetric} if $f(S)=f(V-S)$ for all $S\sse V$, and is 
{\em normalized} if $f(\es)=f(V)=0$.

Since we are working on undirected graphs, we 
can always replace any cut-requirement function $f$ by its symmetric, normalized version
$f^\sym$, where $f^\sym(S):=\max\{f(S),f(V-S)\}$ for all $\es\neq S\subsetneq V$ and
$f^\sym(\es)=f^\sym(V):=0$, without changing the resulting  
network-design problem(s), i.e., we have $\fndp[f]\equiv\fndp[f^\sym]$ and
$\crfndp[f]\equiv\crfndp[f^\sym]$. 

Lemma~\ref{wsupm-props} shows that weak-supermodularity is preserved under various
operations. This will be quite useful in the analysis of our decomposition technique. 
Part~\ref{wsupm-sym} of the lemma shows that weak supermodularity is preserved 
under symmetrization and normalization, 
so for any \fndp or \crfndp instance involving a weakly-supermodular
base cut-requirement function $f$, we may assume that $f$ is symmetric and normalized.  

\begin{lemma} \label{wsupm-props} \label{wsupm-prop}
Let $f:2^V\mapsto\R$ be weakly supermodular. Then
\begin{enumerate}[label=(\alph*), topsep=0.2ex, itemsep=0.1ex, leftmargin=*]
\item \label{wsupm-sym} \label{wsupm-norm}
$f^\sym$ is weakly supermodular.

\item \label{wsupm-res}
Let $w\in\R_+^E$. Define $g_1(S):=f(S)-w\bigl(\dt_G(S)\bigr)$ for all $S\sse V$.
Then $g_1$ is weakly supermodular, and $g_1$ is symmetric if $f$ is symmetric.
In particular, 
for any $Z\sse E$, the function $f(S)-|\dt_Z(S)|$
is weakly supermodular, and is symmetric if $f$ is symmetric.

\item \label{wsupm-proj}
Let $X\sse V$, and $\scol$ be a collection of subsets of $V-X$ closed under taking
set intersections, unions, differences, and complements with respect to $V-X$, 
i.e., if
$S\in\scol$ then $V-X-S\in\scol$, and 
$S,T\in\scol$ implies that $\{S\cap T, S\cup T, S-T, T-S\}\sse\scol$. 

Define the {\em projection of $f$ onto $X$ with respect to $\scol$} as the function 
$g_2:2^X\mapsto\R$ given by
$g_2(\es)=g_2(X):=0$, and $g_2(S):=\max_{T\in\scol}f(S\cup T)$ for all 
$\es\neq S\subsetneq X$. 
If $f$ is symmetric, then $g_2$ is weakly supermodular and symmetric. 
\end{enumerate}
\end{lemma}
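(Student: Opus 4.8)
I would prove all three parts by the same two-step recipe: first dispose of the ``degenerate'' pairs $A,B$ — those with $A\cap B=\es$, or $A\sse B$, or $B\sse A$, or $A\union B$ equal to the whole ground set — for which the weak-supermodularity inequality for the new function holds trivially, using only that the new function is symmetric and vanishes on $\es$ and on the full set; then handle the remaining ``properly crossing'' case, where all six of $A,B,A\cap B,A\union B,A-B,B-A$ are proper and non-empty, by feeding a suitable pair of subsets of $V$ into the weak-supermodularity inequality for $f$. Carrying this out for part~\ref{wsupm-sym}: the degenerate cases are immediate — e.g.\ if $A\union B=V$ then $A-B=V-B$ and $B-A=V-A$, so $f^\sym(A-B)+f^\sym(B-A)=f^\sym(A)+f^\sym(B)$ since $f^\sym$ is symmetric by construction. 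In the crossing case, write $f^\sym(A)=f(A')$ and $f^\sym(B)=f(B')$ with $A'\in\{A,V-A\}$ and $B'\in\{B,V-B\}$ attaining the respective maxima, and apply weak supermodularity of $f$ to $(A',B')$; a short check over the four choices of $(A',B')$ shows that the pair $\{A'\cap B',A'\union B'\}$ equals one of $\{A\cap B,A\union B\}$, $\{A-B,B-A\}$ and the pair $\{A'-B',B'-A'\}$ the other, each up to complementing individual members, and since $f(W)\le f^\sym(W)=f^\sym(V-W)$ for every $W$, the bound for $f$ on $(A',B')$ upgrades to the claimed bound for $f^\sym$ on $(A,B)$. (Normalization alters $f^\sym$ only at $\es$ and $V$, cases already absorbed above.)

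Part~\ref{wsupm-res} is the quickest. The cut function $b(S):=w(\dt_G(S))$ is both submodular and posimodular, i.e.\ $b(A)+b(B)\ge b(A\cap B)+b(A\union B)$ and $b(A)+b(B)\ge b(A-B)+b(B-A)$ for all $A,B$ (the second following from the first applied to $A,V-B$ together with $b(S)=b(V-S)$). Given $A,B$, take whichever of $\{f(A\cap B)+f(A\union B),\,f(A-B)+f(B-A)\}$ realizes the weak-supermodularity bound for $f$, and add the matching one of the two inequalities for $-b$; this yields $g_1(A)+g_1(B)\le g_1(A\cap B)+g_1(A\union B)$ or $g_1(A)+g_1(B)\le g_1(A-B)+g_1(B-A)$, so $g_1$ is weakly supermodular. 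Symmetry of $g_1$ when $f$ is symmetric is clear since $\dt_G(S)=\dt_G(V-S)$, and taking $w=\bon_Z$ (the indicator of $Z$) gives the special case $f(S)-|\dt_Z(S)|$.

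For part~\ref{wsupm-proj}, write $Y:=V-X$. For \emph{symmetry of $g_2$}: for $\es\ne S\subsetneq X$ and $T\in\scol$, the identity $V-(S\union T)=(X-S)\union(Y-T)$ with $Y-T\in\scol$ and symmetry of $f$ gives $f(S\union T)=f((X-S)\union(Y-T))$; letting $T$ range over $\scol$ (so $Y-T$ ranges over $\scol$ as well) shows $g_2(S)=g_2(X-S)$, consistent with $g_2(\es)=g_2(X)=0$. For \emph{weak supermodularity}: given $A,B\sse X$, the degenerate cases are handled exactly as in part~\ref{wsupm-sym}, now using symmetry of $g_2$, so assume $A,B$ properly cross in $X$ and $A\union B\ne X$, whence $A\cap B,A\union B,A-B,B-A$ are all proper non-empty subsets of $X$. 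Choose $T_A,T_B\in\scol$ with $g_2(A)=f(A\union T_A)$, $g_2(B)=f(B\union T_B)$, and apply weak supermodularity of $f$ to $(A\union T_A,\,B\union T_B)$. Because $A,B\sse X$ and $T_A,T_B\sse Y$ sit on disjoint ground sets, the four sets this produces are precisely $(A\cap B)\union(T_A\cap T_B)$, $(A\union B)\union(T_A\union T_B)$, $(A-B)\union(T_A-T_B)$, $(B-A)\union(T_B-T_A)$, whose ``$Y$-parts'' lie in $\scol$ by closure; hence $f$ on each is at most $g_2$ of the corresponding subset of $X$, and substituting into the weak-supermodularity bound for $f$ gives $g_2(A)+g_2(B)=f(A\union T_A)+f(B\union T_B)\le\max\{g_2(A\cap B)+g_2(A\union B),\,g_2(A-B)+g_2(B-A)\}$.

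I expect part~\ref{wsupm-proj} to be the main obstacle: one has to pick the right auxiliary sets $A\union T_A$ and $B\union T_B$ so that weak supermodularity of $f$ yields sets whose $Y$-components remain in $\scol$ — this is exactly where all the closure hypotheses on $\scol$ are used — and one must first peel off the degenerate pairs so that none of the relevant subsets of $X$ collapses to $\es$ or $X$, where $g_2$ is defined by fiat rather than by the $\max$-over-$\scol$ formula.
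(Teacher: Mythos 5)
Your proposal is correct and follows essentially the same route as the paper's proof: the same split into degenerate versus properly crossing pairs, the same choice of representatives $A'\in\{A,V-A\}$, $B'\in\{B,V-B\}$ with a four-case check for part (a), the same submodularity/posimodularity argument for the cut function in part (b), and the same choice of maximizers $A\cup T_A$, $B\cup T_B$ with the closure properties of $\scol$ for part (c). No gaps.
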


\begin{proof} 
Part~\ref{wsupm-sym} follows from a simple case analysis;
part~\ref{wsupm-res} follows because $w\bigl(\dt_G(S)\bigr)$ is symmetric, submodular.
For part~\ref{wsupm-proj}, the closure properties of $\scol$ enable us to
transfer the weak-supermodularity of $f$ to the function $g_2$.

For part~\ref{wsupm-sym}, consider any $A,B\sse V$. 
If $A\sse B$ or $B\sse A$, 
then $f^\sym(A)+f^\sym(B)=f^\sym(A\cap B)+f^\sym(A\cup B)$;
if $A\cap B=\es$ or $V-(A\cup B)=\es$, then $\{A-B,B-A\}=\{A,B\}$ or
$\{A-B,B-A\}=\{V-B,V-A\}$, so  
$f^\sym(A)+f^\sym(B)=f^\sym(A-B)+f^\sym(B-A)$, where we also utilize the symmetry of $f^\sym$.

So assume that $A-B, B-A, A\cap B, V-(A\cup B)$ are all non-empty. 
We have $f^\sym(A)=f(X)$, where $X\in\{A,V-A\}$, and
$f^\sym(B)=f(Y)$, where $Y\in\{B,V-B\}$. So
\begin{equation*}
\begin{split}
f^\sym(A)+f^\sym(B) & \leq\max\bigl\{f(X\cap Y)+f(X\cup Y),\,f(X-Y)+f(Y-X)\bigr\} \\
& \leq\max\bigl\{f^\sym(X\cap Y)+f^\sym(X\cup Y),\ f^\sym(X-Y)+f^\sym(Y-X)\bigr\}.
\end{split}
\end{equation*}
We show that
\begin{equation}
\begin{split}
f^\sym(X\cap Y)+f^\sym(X\cup Y) & \in 
\bigl\{f^\sym(A\cap B)+f^\sym(A\cup B),\ f^\sym(A-B)+f^\sym(B-A)\bigr\}, \\
f^\sym(X-Y)+f^\sym(Y-X) & \in 
\bigl\{f^\sym(A\cap B)+f^\sym(A\cup B),\ f^\sym(A-B)+f^\sym(B-A)\bigr\}. 
\end{split}
\label{symineq}
\end{equation}
which will complete the proof of part~\ref{wsupm-sym}.
If $X=A$ and $Y=B$, then \eqref{symineq} clearly holds. So consider the remaining cases.
\begin{enumerate}[label=$\bullet$, topsep=0.1ex, itemsep=0.1ex, leftmargin=*]
\item $X=A$,\, $Y=V-B$. 
We have $X\cap Y=A-B$, and $X\cup Y=V-(B-A)$, so $f^\sym(X\cup Y)=f^\sym(B-A)$.
Also, $X-Y=A\cap B$, and $Y-X=V-(A\cup B)$, so $f^\sym(Y-X)=f^\sym(A\cup B)$. 
So \eqref{symineq} holds.

\item $X=V-A$,\, $Y=B$. This is completely symmetric to the above case.
We have $f^\sym(X\cap Y)=f^\sym(B-A)$ and $f^\sym(X\cup Y)=f^\sym(A-B)$.
Also, $f^\sym(X-Y)=f^\sym(A\cup B)$ and $f^\sym(Y-X)=f^\sym(A\cap B)$. 

\item $X=V-A$,\, $Y=V-B$. 
We have $X\cap Y=V-(A\cup B)$, so $f^\sym(X\cap Y)=f^\sym(A\cup B)$, and
$X\cup Y=V-(A\cap B)$, so $f^\sym(X\cup Y)=f^\sym(A\cap B)$.
Also, $X-Y=B-A$ and $Y-X=A-B$.
\end{enumerate}

\medskip
For part~\ref{wsupm-res}, since $w\bigl(\dt_G(S)\bigr)$ is a
symmetric, submodular function, we have 
\[
w\bigl(\dt_G(A)\bigr)+w\bigl(\dt_G(B)\bigr)\geq
\max\bigl\{w\bigl(\dt_G(A\cap B)\bigr)+w\bigl(\dt_G(A\cup B)\bigr),\ 
w\bigl(\dt_G(A-B)\bigr)+w\bigl(\dt_G(B-A)\bigr)\bigr\}
\]
for all $A,B\sse V$. Coupled with the weak-supermodularity of $f$, this yields that $g_1$
is weakly supermodular, and is symmetric if $f$ is symmetric.

\medskip
Finally, consider part~\ref{wsupm-proj}. We first argue that $g_2$ is symmetric.
If $A\in\{\es,X\}$, we have $g_2(A)=g_2(X-A)$ by definition.
Otherwise, if $g_2(A)=f(A\cup T_A)$, we have
$f(A\cup T_A)=f\bigl(V-(A\cup T_A)\bigr)=f\bigl((X-A)\cup (V-X-T_A)\bigr)\leq g_2(X-A)$
where the last inequality follows since $V-X-T_A\in\scol$. 
We also have $g_2(X-A)\leq g_2(A)$ by the same type of reasoning, and so $g_2(A)=g_2(X-A)$
for all $A\sse X$.

Now we show that $g_2$ is weakly supermodular. Consider $A,B\sse X$. 
If any of the sets 
$A-B, B-A, A\cap B, X-(A\cup B)$ are empty, then 
$g_2(A)+g_2(B)=g_2(A\cap B)+g_2(A\cup B)$ or $g_2(A)+g_2(B)=g_2(A-B)+g_2(B-A)$. 
The former holds when $A-B$ or $B-A$ is empty; the
latter holds when $A\cap B$ or $X-(A\cup B)$ is empty, where we also utilize the fact that
$g_2$ is symmetric. So assume otherwise.
Then, for every $S\in\{A, B, A\cap B, A\cup B, A-B, B-A\}$, we have
$g_2(S)=\max_{T\in\scol}f(S\cup T)$. 

Let $g_2(A)=f(A\cup T_A)$ and
$g_2(B)=f(B\cup T_B)$, where $T_A, T_B\in\scol$. Let $A'=A\cup T_A$ and 
$B'=B\cup T_B$. 
For any $S\in\{A'\cap B', A'\cup B', A'-B', B'-A'\}$, we can write $S=(S\cap X)\cup T$,
where $T\in\scol$ due to the closure properties of $\scol$.
It follows that $g_2(S\cap X)\geq f(S)$ for all $S\in\{A'\cap B',A'\cup B',A'-B',B'-A'\}$. 
So if $f(A')+f(B')\leq f(A'\cap B')+f(A'\cup B')$, we obtain that
$g_2(A)+g_2(B)\leq g_2(A\cap B)+g_2(A\cup B)$, and if $f(A')+f(B')\leq f(A'-B')+f(B'-A')$,
we obtain that $g_2(A)+g_@(B)\leq g_2(A-B)+g_2(B-A)$.
\end{proof}

The intuition behind the projection operation is as follows. Suppose 
$f$ is a cut-requirement function and we know that no edges are picked from
$E_G(V-X)\cup\dt_G(X)$. 
Then, roughly speaking, $g_2(S)$ captures the constraints that arise on
$S\sse X$ due to $f$ and $\scol$. 

%\vspace*{-1ex}
\subsection{LP-relaxation for \boldmath \crfndp}
We consider the following natural LP-relaxation for \crfndp. From now on, we assume that
the cut-requirement function $f:2^V\mapsto\Z$ is weakly supermodular, symmetric, and
normalized. Given $\al\in\R^E$ and $F\sse E$, let $\al(F)$ denote $\sum_{e\in F}\al_e$.
\begin{alignat}{3}
\min &\quad & \sum_{e\in E}c_ex_e \tag*{\lpname} \label{lp} \\
\text{s.t.} \quad 
&& x\bigl(\dt_{G}(S)\bigr) & \geq\min\bigl\{f(S),|\dt_G(S)|\bigr\} 
\qquad && \forall S\sse V \label{cuts} \\ 
&& 0 \leq x_e &\leq 1 && \forall e\in E. \label{edges} 
\end{alignat}
\clonelabel{crlp}{lp}
\noindent
The LP-relaxation for \fndp is similar, with the cut-constraints taking the simpler form
$x\bigl(\dt_G(S)\bigr)\geq f(S)$ for all $S\sse V$. We denote this LP by $\fndlp$.
Our algorithm is based on iterative rounding, where we iteratively 
fix some $x_e$ variables to $1$. 
Suppose that we 
have fixed 
$x_e=1$ for all $e\in Z_1$.
Then, in that iteration, we are considering the residual problem on the subgraph
$G'=(V,E'=E-Z_1)$, 
which leads to the following variant of \ref{lp}:
\begin{alignat}{3}
\min & \quad & \sum_{e\in E'}c_ex_e \tag{P'} \label{lpp} \\
\text{s.t.} \quad && x\bigl(\dt_{G'}(S)\bigr) & \geq 
\min\bigl\{f(S),|\dt_G(S)|\bigr\}-|\dt_{Z_1}(S)| 
\qquad && \forall S\sse V \label{rescon} \\
&& 0 \leq x_e & \leq 1 && \forall e\in E'. \notag
\end{alignat}
Defining $f'(S)=f(S)-|\dt_{Z_1}(S)|$ for all $S\sse V$, we can rephrase constraints 
\eqref{rescon} as 
$x\bigl(\dt_{G'}(S)\bigr)\geq\min\bigl\{f'(S),|\dt_{G'}(S)|\}$ for all $S\sse V$. 
Since $f'$ is weakly supermodular, symmetric (Lemma~\ref{wsupm-props}~\ref{wsupm-res}),
and normalized, 
\eqref{lpp} is of the same form as \ref{lp}: we have 
\eqref{lpp} $\equiv \lpname[f',G']$. This 
will be convenient for iterative rounding, as the residual problem is of the same form as
the original problem.
\ref{lp} and \eqref{lpp} involve an exponential number of constraints, but using
the ellipsoid method, we can find an extreme-point optimal solution provided we have a
separation oracle for them. Recall that a separation oracle is a procedure that given a
candidate point determines if it is feasible, or returns a violated 
inequality. Observe that a separation oracle for \ref{lp} also yields a separation
oracle for \eqref{lpp}, since \eqref{lpp} is simply \ref{lp} after fixing some variables to
$1$. 

\begin{fact}[Ellipsoid method~\cite{GrotschelLS93}] \label{ellipsoid}
Given a polytime separation oracle for \ref{lp}, one can find an extreme-point optimal
solution to LPs \ref{lp} and \eqref{lpp} in polytime.
\end{fact}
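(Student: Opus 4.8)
The plan is to invoke the polynomial-time equivalence of separation and optimization via the ellipsoid method~\cite{GrotschelLS93}. Both LPs are bounded, since the box constraints \eqref{edges} confine the feasible region to $[0,1]^E$; they have polynomially many variables; and they have rational data of polynomially bounded encoding length. For the bit-complexity of vertices, the key observation is that every cut-constraint has coefficient vector in $\{0,1\}^E$, so if a cut-constraint for a set $S$ is binding at a feasible point $x$, its right-hand side equals $x\bigl(\dt_G(S)\bigr)$ (resp.\ $x\bigl(\dt_{G'}(S)\bigr)$ for \eqref{lpp}), which is a nonnegative integer of value at most $|E|$. Hence, by Cramer's rule applied to the system of constraints active at a vertex, every vertex of \ref{lp} or of \eqref{lpp} is a rational vector whose entries have polynomially bounded bit-length. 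With the supplied separation oracle, the ellipsoid method therefore runs in polytime, and I would first use it to compute the optimal value $\OPT$ of \ref{lp}.

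To obtain an \emph{extreme-point} optimum rather than an arbitrary optimal point, I would apply the standard vertex-finding routine: restrict attention to the optimal face by having the oracle additionally report the valid inequality $\sum_{e}c_ex_e\le\OPT$ when it is violated, and then iteratively fix coordinates to the boundary of the current face (solving an auxiliary optimization in each step, again via the ellipsoid method) until a basic solution is reached; since this face is a face of $[0,1]^E$ intersected with the cut-constraints, such a basic solution is a vertex of the original polytope. Equivalently, one can directly invoke the strong-optimization routine of~\cite{GrotschelLS93}, which returns a vertex solution for a bounded, well-described polytope given only a separation oracle.

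It remains to supply a separation oracle for \eqref{lpp}, which comes for free from the oracle for \ref{lp}: given a candidate $\bx\in\R^{E'}$, first check the bounds $0\le\bx_e\le 1$ directly; otherwise extend $\bx$ to $\hx\in\R^E$ by setting $\hx_e:=\bx_e$ for $e\in E'$ and $\hx_e:=1$ for $e\in Z_1$, and call the oracle for \ref{lp} on $\hx$. Since $\hx\bigl(\dt_G(S)\bigr)=\bx\bigl(\dt_{G'}(S)\bigr)+|\dt_{Z_1}(S)|$ for every $S\sse V$, the point $\hx$ satisfies constraint \eqref{cuts} for $S$ precisely when $\bx$ satisfies constraint \eqref{rescon} for $S$; so any violated cut-constraint returned for $\hx$ translates, after substituting the fixed values, into a violated constraint of \eqref{lpp}. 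Running the above procedure with this oracle yields an extreme-point optimal solution to \eqref{lpp} in polytime. The only real subtlety is the passage from an optimal value/point to an optimal \emph{vertex}; this is exactly what the vertex-rounding (strong-optimization) machinery of~\cite{GrotschelLS93} provides, and it goes through here precisely because of the polynomial bit-length bound on vertices noted above.
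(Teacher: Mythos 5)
Your proposal is correct and matches the paper's treatment: the paper states this as a Fact cited to Gr\"otschel--Lov\'asz--Schrijver, with only the brief observation (which you also make, via extending a candidate point by $1$'s on the fixed edges) that a separation oracle for \ref{lp} yields one for \eqref{lpp}. Your elaboration---boundedness, the $\{0,1\}$ coefficient vectors with integer right-hand sides at most $|E|$ giving polynomially bounded vertex complexity, and the standard vertex-finding routine on the optimal face---is exactly the standard argument the citation is meant to invoke.
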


For \crsndp, i.e., when $f=f^\sndp$, a polytime separation oracle for \ref{lp} can be
obtained using min-cut computations. This was shown by~\cite{DinitzKK22} 
for path-relative \kecss, which
is equivalent to cut-relative \kecss (Lemma~\ref{kecss-equiv}); refining their ideas 
yields a separation oracle for \ref{lp} for general \sndp. 

\begin{lemma} \label{crlp-solve}
When $f=f^\sndp$, there is a polytime separation oracle for \ref{lp}. 
Hence, one can obtain extreme-point optimal solutions to \ref{lp} and
\eqref{lpp} in polytime. 
\end{lemma}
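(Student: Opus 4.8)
The goal is to build a polytime separation oracle for \ref{lp} when $f = f^\sndp$; the ``hence'' part then follows immediately from Fact~\ref{ellipsoid}. Given a candidate $x \in [0,1]^E$ (the edge constraints are trivial to check), I must decide whether $x(\dt_G(S)) \geq \min\{f^\sndp(S), |\dt_G(S)|\}$ for all $S \sse V$, or exhibit a violated set. The first reduction is to split on the value of $\min\{f^\sndp(S),|\dt_G(S)|\}$: the constraint for $S$ is violated iff either (i) $f^\sndp(S) \le |\dt_G(S)|$ and $x(\dt_G(S)) < f^\sndp(S)$, or (ii) $f^\sndp(S) > |\dt_G(S)|$ and $x(\dt_G(S)) < |\dt_G(S)|$. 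Case (i) is exactly the \sndp separation problem, which is solvable by $k$ min-cut computations: for each pair $i$, check whether the minimum $x$-weight $s_i$-$t_i$ cut has value $< r_i$, and if a violating cut $S$ is found, verify $|\dt_G(S)| \ge r_i$ (if not, that particular $S$ is not violated for \ref{lp}, but see below). So the real work is handling the interaction between the two cases.

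The cleaner approach is to directly separate the single constraint $x(\dt_G(S)) \ge \min\{f^\sndp(S), |\dt_G(S)|\}$ pair-by-pair. Fix a pair $(s_i, t_i, r_i)$; I want to know whether some $s_i$-$t_i$ cut $S$ has $x(\dt_G(S)) < \min\{r_i, |\dt_G(S)|\}$ (it suffices to range over pairs since $f^\sndp(S)$ is attained by some pair separating $s_i,t_i$, and if a cut is violated it is violated for the pair attaining the max). Writing $\mathbf{1}$ for the all-ones vector on $E$, observe that $x(\dt_G(S)) < \min\{r_i, |\dt_G(S)|\}$ holds iff $x(\dt_G(S)) < r_i$ \emph{and} $x(\dt_G(S)) < |\dt_G(S)| = \mathbf{1}(\dt_G(S))$, i.e.\ iff $x(\dt_G(S)) < r_i$ and $\dt_G(S)$ contains some edge $e$ with $x_e < 1$. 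This suggests the following: guess an edge $e = (u,v)$ that is ``slack'' ($x_e < 1$) and is required to lie in $\dt_G(S)$; then we want a cut $S$ separating $s_i,t_i$ and also separating $u,v$, minimizing $x(\dt_G(S))$, and we test whether that minimum is $< r_i$. Forcing $e \in \dt_G(S)$ while separating $s_i,t_i$ is a standard min-cut variant: contract appropriately (or solve two min-cut problems, one for each of the two ways $e$ can straddle the cut relative to $s_i$), using $x$ as capacities. Iterating over all $e$ with $x_e < 1$ (at most $|E|$ of them) and all $k$ pairs gives $O(k|E|)$ min-cut computations, which is polytime, and returns a violated set whenever one exists.

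The main obstacle, and the place where I expect to spend care, is the logical equivalence in the middle: showing that the $e$-by-$e$, pair-by-pair search is both \emph{complete} (if any $S$ violates \ref{lp}, the search finds a violated set) and \emph{sound} (anything the search reports is genuinely violated). Soundness is easy: if the search finds $S$ separating $s_i, t_i$ and $u,v$ with $x(\dt_G(S)) < r_i$, then since $e \in \dt_G(S)$ we have $|\dt_G(S)| \ge 1 > x_e \ge$ ... hmm, more precisely $x(\dt_G(S)) < r_i$ and $x(\dt_G(S)) < \mathbf 1(\dt_G(S)) = |\dt_G(S)|$ because $e\in\dt_G(S)$ has $x_e<1$ while all other edge-values are $\le 1$; hence $x(\dt_G(S)) < \min\{r_i, |\dt_G(S)|\} \le \min\{f^\sndp(S), |\dt_G(S)|\}$. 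For completeness: suppose $S^*$ violates \ref{lp}. Pick the pair $i$ with $r_i = f^\sndp(S^*)$; then $x(\dt_G(S^*)) < \min\{r_i, |\dt_G(S^*)|\} \le r_i$. Also $x(\dt_G(S^*)) < |\dt_G(S^*)|$ forces some $e^* \in \dt_G(S^*)$ with $x_{e^*} < 1$. Then in the iteration for this $i$ and this $e^*$, the min over cuts separating both $s_i,t_i$ and the endpoints of $e^*$ is $\le x(\dt_G(S^*)) < r_i$, so the search reports a violated set for this $(i, e^*)$. I would also double-check the degenerate cases (e.g.\ $f^\sndp(S) = 0$, or $S$ not separating any pair — then $f^\sndp(S) = 0 \le x(\dt_G(S))$ trivially), and note explicitly that restricting attention to the at-most-$k$ values $r_i$ rather than all integers is legitimate because the binding requirement for any $S$ is some specific $r_i$. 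Finally, I would remark that a separation oracle for \ref{lp} also separates \eqref{lpp} (as already noted in the text before Fact~\ref{ellipsoid}), so extreme-point optimal solutions to both LPs are obtainable in polytime, completing the proof.
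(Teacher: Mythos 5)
Your proposal is correct and takes essentially the same approach as the paper: both reduce separation to, for each pair $i$ and each edge $e=uv$ with $x_e<1$, computing the minimum $x$-capacity $s_i$-$t_i$ cut that also separates $u$ and $v$ (via two min-cut computations) and comparing it to $r_i$. Your write-up of the soundness/completeness equivalence is in fact more explicit than the paper's, which merely asserts the rephrased feasibility condition.
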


\begin{proof} 
The second statement follows from Fact~\ref{ellipsoid}.
Let $(s_i,t_i,r_i)_{i\in[k]}$ be the $k$ tuples defining the \sndp instance, which recall leads
to the cut-requirement function $f^\sndp(S):=\max\,\{r_i: |S\cap\{s_i,t_i\}|=1\}$ for all
$S\sse V$. Say that $S\sse V$ is a small cut if $f^\sndp(S)>|\dt_G(S)|$, and it is a small
cut for $i$ if $|S\cap\{s_i,t_i\}|=1$ and $r_i>|\dt_G(S)|$.
Note that if $e\in\dt_G(S)$ for any small cut $S$, then we must have $x_e=1$ in any
feasible solution. 

Now we can equivalently phrase the constraints of \ref{lp} as follows:
for any edge $a=uv\in E$, we must have $x_a=1$ or, for every $i\in[k]$, the minimum
$s_i$-$t_i$ cut also separating $u$ and $v$ under the $\{x_e\}_{e\in E}$-capacities should
have capacity at least $r_i$.
So to detect if $x$ is a feasible solution, we simply need to consider every edge $a=uv$
with $x_a<1$
and every $i\in[k]$, and find the minimum $s_i$-$t_i$ cut also separating $u$ and $v$
under the $\{x_e\}_{e\in E}$ capacities. 
If an $s_i$-$t_i$ cut separates $u,v$, then $u$ lies on the $s_i$-side and $v$ lies on the
$t_i$-side, or the other way around.
So to find the minimum $s_i$-$t_i$ cut also separating $u$ and $v$, we find the minimum
$\{s_i,u\}$-$\{t_i,v\}$ cut---by which we mean the minimum $s_i$-$t_i$ cut where $s_i,u$
are on the same side, and $t_i,v$ are on the other side--and the minimum
$\{s_i,v\}$-$\{t_i,u\}$-cut and take the one with smaller capacity.  
\end{proof}

\subsubsection*{Extreme-point solution to \boldmath \ref{lp} not defined by a laminar family}
\label{append-examples} \label{append-nolaminar} \label{nolaminar}

We show that for the \crsndp instance shown in Fig.~\ref{nolaminar-fig}, the extreme-point
solution $\hx$ shown in the figure, 
cannot be expressed as the solution to {\em any laminar family of tight cut constraints}
\eqref{cuts} (along with tight edge constraints \eqref{edges}). 
The \sndp instance is $s$-$t$ $2$-edge connectivity: that is, the base
cut-requirement function is given by $f(S)=2$ if $S$ is an $s$-$t$ cut, and $f(S)=0$
otherwise.  

\begin{figure}[ht!]
\centering
\includegraphics[width=2.25in]{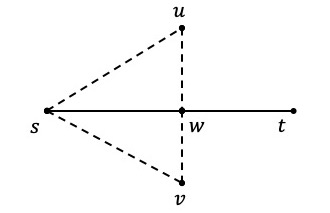}
\caption{\crsndp instance, and an extreme-point solution $\hx$ to \ref{lp}. 
The solid edges have $\hx_e=1$ and the dashed edges have $\hx_e=0.5$, so
$\hx_e=0.5$ for $e\in\{su,uw,sv,vw\}$ and $\hx_e=1$ for $e\in\{sw,wt\}$.
}
\label{nolaminar-fig}
\end{figure}

Note that the only small cut above containing $s$ is $S=\{s,u,v,w\}$. 
We argue that: (1) $\hx$ is an extreme-point solution to \ref{lp}; and (2) $\hx$ cannot be
defined using a laminar family of cut constraints. 

\vspace*{-2ex}
\subparagraph*{\boldmath $\hx$ is an (optimal) extreme point.}
It suffices to show that $\hx$ is feasible, and
that it is the unique solution to a collection of tight constraints. It is straightforward
to verify feasibility of $\hx$ by inspection. Consider the following set of equations,
which come from constraints that are tight at $\hx$.
\begin{alignat}{1}
x\bigl(\dt(S)\bigr) & =2 \qquad \forall S\in\bigl\{\{s\},\{s,u,v\},\{s,u,w\},\{s,v,w\}\bigr\},
\label{tightcuts} \\
x_e & =1 \qquad \forall e\in\{sw,wt\}. \label{tightedges}
\end{alignat}
We claim that $\hx$ is the unique solution to these equations, showing that $\hx$ is an
extreme-point solution to \ref{lp}. Indeed, after substituting $x_{sw}=x_{wt}=1$, by
considering 
every pair of sets in \eqref{tightcuts}, we obtain that all edges in $\{su,uw,sv,vw\}$
must have the same $x_e$ value. (For example, 
$x\bigl(\dt(s)\bigr)=2=x\bigl(\dt(\{s,u,w\})\bigr)$ implies that $x_{vw}=x_{su}$.) 
Together with $x\bigl(\dt(s)\bigr)=2$ and $x_{sw}=1$, this fixes the values of all edges
in $\{su,uw,sv,vw\}$ to $0.5$.

Taking edge costs $c_{sw}=0$ and $c_e=1$ for all other edges, it is not hard to see
that $\hx$ is an optimal solution. The constraints 
$x\bigl(\dt(S)\bigr)\geq 2$ for all
$S\in\bigl\{\{s\},\{s,u,v\},\{s,u,w\},\{s,v,w\}\bigr\}$ imply that (since $x_e\leq 1$
for all $e$)
\[
x_{su}+x_{sv}\geq 1, \quad x\bigl(\dt(u)\bigr)\geq 1, \quad
x\bigl(\dt(v)\bigr)\geq 1, \quad x_{uw}+x_{vw}\geq 1.
\]
Adding these constraints, we obtain that $x_{su}+x_{sv}+x_{uv}+x_{vw}\geq 2$. Therefore,
the cost of any feasible solution is at least $3$, and we have $c^T\hx=3$.

\vspace*{-1ex}
\subparagraph*{\boldmath $\hx$ cannot be defined via a laminar family.}
We now argue that $\hx$ cannot be defined by any laminar family of tight cut constraints
and tight edge constraints. 
Note that any tight cut constraint corresponds to an $s$-$t$
cut. So a laminar family $\Lc$ of tight cut constraints consists of a union of two chains
$\Lc_1\cup\Lc_2$, where a chain is a nested collection of sets: $\Lc_1$ comprising sets
containing $s$ but not $t$, and $\Lc_2$ comprising sets containing $t$ but not $s$. 
Then $\Lc'=\Lc_1\cup\{V-S: S\in\Lc_2\}$ is also a laminar family, yielding the same
cut-constraints as $\Lc$, and and all sets in $\Lc'$contain $s$ but not $t$. 
So we only need to consider laminar families of tight cut constraints corresponding to a  
chain of sets containing $s$ and not $t$.

A simple argument now rules out the existence of any such chain that can
uniquely define $\hx$ along with the tight edge constraints \eqref{tightedges}. 
Any such chain of sets $\C$ is obtained by ordering the elements $u,v,w$, and taking
sets of the form $\{s\}\cup T$, where $T$ is some prefix (including the empty prefix) of
this ordered sequence; so $\C$ contains at most $4$ sets. 
Since $\hx$ has $4$ fractional edges, $E'=\{su,sv,uw,vw\}$, it must be that
$\bigl\{\chi^{\dt_{E'}(S)}\}_{S\in\C}$ contains $4$ linearly independent vectors.
So $\C$ contains exactly $4$ sets, and in particular the set $\{s,u,v,w\}$,
and all vectors in $\bigl\{\chi^{\dt_{E'}(S)}\}_{S\in\C}$ are linearly independent. This
yields a contradiction, since $\chi^{\dt_{E'}(\{s,u,v,w\})}=\vec{0}$.

\section{Modeling power of \crfndp} \label{append-modeling} \label{modeling}

We prove that the \crfndp examples listed under ``Modeling power'' in
Section~\ref{intro} (Examples~\ref{graceful} and~\ref{gengraceful}) model the stated
problems. 
Recall that in both examples, 
we have a non-increasing function $\thresh:\Z_+\mapsto\Z_+$. 

\begin{theorem} \label{graceful-thm}
Consider the base cut-requirement function $f^{\grace}$ in Example~\ref{graceful}, 
where $f^\grace(S):=\min\,\{\ell: \thresh(\ell)<|S|\}$, for $S\neq\es$. 

\begin{enumerate}[label=(\alph*), topsep=0ex, itemsep=0.1ex, leftmargin=*]
\item The function $f^\grace$ is weakly-supermodular. 

\item $H$ is feasible for
$\crfndp[f^{\grace}]$ iff for every fault-set $F\sse E$, and every $S\sse V$ with
$|S|\leq\thresh(|F|)$, $S$ is (the node-set of) a component of $G-F$ iff it is a component
of $H-F$.
\end{enumerate}
\end{theorem}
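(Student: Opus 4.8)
\noindent\emph{Proof plan.}\ For part (a), the plan is to exploit that $f^\grace(S)$ depends only on $|S|$. Writing $\phi(m):=\min\{\ell\geq 0:\thresh(\ell)<m\}$ for $m\geq 1$ (and $\phi(0):=0$), we have $f^\grace(S)=\phi(|S|)$, and since $\thresh$ is non-increasing the predicate ``$\thresh(\ell)<m$'' only becomes easier to satisfy as $m$ grows; hence $\phi$ is non-increasing. Weak supermodularity of $f^\grace$ then follows by a short case check on $A,B\sse V$: if $A\sse B$ or $B\sse A$ then $f^\grace(A\cap B)+f^\grace(A\cup B)=f^\grace(A)+f^\grace(B)$; otherwise $A-B$ and $B-A$ are both non-empty, $|A-B|\leq|A|$ and $|B-A|\leq|B|$, so monotonicity of $\phi$ gives $f^\grace(A-B)\geq f^\grace(A)$ and $f^\grace(B-A)\geq f^\grace(B)$, whence $f^\grace(A-B)+f^\grace(B-A)\geq f^\grace(A)+f^\grace(B)$. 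Either way the $\max$ in the weak-supermodularity inequality dominates $f^\grace(A)+f^\grace(B)$.

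For part (b), I would first recast $\crfndp[f^{\grace}]$-feasibility in a form tailored to components. Since $\dt_H(S)\sse\dt_G(S)$ for every $S$, $H$ is feasible for $\crfndp[f^{\grace}]$ iff every $S$ with $\dt_H(S)\neq\dt_G(S)$ satisfies $|\dt_H(S)|\geq f^\grace(S)$; and from the definition of $f^\grace$ together with the monotonicity of $\thresh$ one reads off the elementary equivalence $f^\grace(S)>\ell\iff|S|\leq\thresh(\ell)$ (for $\es\neq S\sse V$ and $\ell\geq 0$). Combining these, feasibility of $H$ is equivalent to the statement $(\star)$: \emph{for every $S\sse V$ with $|S|\leq\thresh(|\dt_H(S)|)$, one has $\dt_H(S)=\dt_G(S)$.} The rest of the proof argues the ``iff'' of part (b) against $(\star)$.

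For the ``if'' direction, assume the component condition; to verify $(\star)$, fix $S$ with $|S|\leq\thresh(|\dt_H(S)|)$ and put $F:=\dt_H(S)$, so $|S|\leq\thresh(|F|)$ and $\dt_{H-F}(S)=\es$. Then $S$ is a disjoint union of components $S_1,\ldots,S_p$ of $H-F$ (since $\dt_{H-F}(S)=\es$), each of size at most $|S|\leq\thresh(|F|)$; by hypothesis each $S_j$ is a component of $G-F$ as well, so $\dt_G(S_j)\sse F$. Since every edge of $\dt_G(S)$ has its $S$-endpoint in some $S_j$ and therefore lies in $\dt_G(S_j)$, we get $\dt_G(S)\sse F=\dt_H(S)$, i.e.\ $\dt_H(S)=\dt_G(S)$, as needed. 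For the ``only if'' direction, assume $H$ is feasible, so $(\star)$ holds, and fix an arbitrary fault-set $F$ and a set $S$ with $|S|\leq\thresh(|F|)$. The workhorse is the observation that \emph{any set $Y$ with $\dt_H(Y)\sse F$ satisfies $\dt_H(Y)=\dt_G(Y)$}: indeed $|\dt_H(Y)|\leq|F|$, so $|Y|\leq|S|\leq\thresh(|F|)\leq\thresh(|\dt_H(Y)|)$, and $(\star)$ applies. When $S$ is a component of $G-F$, or of $H-F$, we have $\dt_H(S)\sse\dt_G(S)\sse F$, so the observation gives $\dt_H(S)=\dt_G(S)$; thus $S$ is boundary-free in $G-F$ exactly when it is boundary-free in $H-F$. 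Since $E_H(S)\sm F\sse E_G(S)\sm F$, connectivity of $H[S]-F$ implies connectivity of $G[S]-F$, which handles ``$S$ a component of $H-F$'' $\implies$ ``$S$ a component of $G-F$''. For the converse, if $S$ is a component of $G-F$ but $H[S]-F$ were disconnected, split into non-empty $T\sqcup T'$ with no $(H-F)$-edge across, then every edge of $\dt_H(T)$ runs to $T'$ (hence lies in $F$) or to $\bS$ (hence lies in $\dt_H(S)\sse F$), so $\dt_H(T)\sse F$ and the observation yields $\dt_H(T)=\dt_G(T)$; but connectivity of $G[S]-F$ furnishes an edge of $E_G(S)\sm F$ between $T$ and $T'$, which lies in $\dt_G(T)=\dt_H(T)\sse F$ --- a contradiction.

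The one genuinely delicate point is this last step: $\crfndp$-feasibility constrains boundary sizes $|\dt_H(S)|$ and says nothing directly about edges \emph{inside} $S$, so it is not obvious why $H[S]-F$ should inherit connectivity from $G[S]-F$. The resolution is to apply feasibility (in the $(\star)$ form) to sub-sets $T\subsetneq S$ whose $H$-boundary is swallowed by $F$; this is exactly where the hypothesis that $\thresh$ is non-increasing is used, via $\thresh(|\dt_H(T)|)\geq\thresh(|F|)\geq|S|>|T|$. Everything else is routine bookkeeping about components and boundaries.
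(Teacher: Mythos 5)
Your proof is correct and follows essentially the same route as the paper's: both directions hinge on the equivalence $f^\grace(S)>\ell\iff|S|\le\thresh(\ell)$ and on applying the feasibility constraint to subsets of $S$ whose $H$-boundary is absorbed by the fault-set (your cut $T\sqcup T'$ argument is a minor variant of the paper's trick of applying the already-proved implication to a smaller component $A\subsetneq S$). If anything, your treatment of the converse direction is slightly more careful than the paper's, since you explicitly decompose $S$ into components of $H-F$ rather than asserting that $S$ itself is one.
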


\begin{proof}
The function $f^{\grace}$ is {\em downwards-monotone}: if $\es\neq T\sse S$, then 
$f^\grace(T)\geq f^{\grace}(S)$. This follows from the definition, since $\thresh$ is a 
non-increasing function.
Weak-supermodularity of $f^{\grace}$ is now immediate since we have
$f^\grace(A)+f^\grace(B)\leq f^\grace(A-B)+f^\grace(B-A)$, if $A-B, B-A\neq\es$, and
otherwise $f^\grace(A)+f^\grace(B)=f^\grace(A\cap B)+f^\grace(A\cup B)$.

Note that $f^\grace$ is not symmetric or normalized, but as indicated by
Lemma~\ref{wsupm-props}, we can consider \crfndp with the symmetric, normalized version of 
$f^{\grace}$ as the base cut-requirement function, without changing the problem.

To prove the feasibility characterization in part (b), we note that the function
$f^{\grace}$ is constructed so that for any $S\sse V$ and $F\sse E$, we have 
$|S|\leq\thresh(|F|)$ iff $f^{\grace}(S)>|F|$, so that $|\dt_H(S)|\geq f^{\grace}(S)$
implies that $S$ is not a component of $H-F$. 
Suppose $H$ is such that
$|\dt_H(S)|\geq\min\bigl\{f^{\grace}(S),|\dt_G(S)|\bigr\}$ for all $S\sse V$. 
Consider any fault-set $F\sse E$ and any $S\sse V$ such that $|S|\leq\thresh(|F|)$.  
By the above, if $S$ is a component of $H-F$, then $|\dt_H(S)|<f^{\grace}(S)$, and
so $S$ must also be a component of $G-F$. But this also implies that if $S$ is a component
of $G-F$, then it is a component of $H-F$; otherwise, there is some 
$A\subsetneq S$ that is a component of $H-F$ with $|A|<|S|\leq\thresh(|F|)$, and so $A$
must be a component of $G-F$, which is not the case.

Conversely, suppose that for every $F\sse E$, and $S\sse V$ with
$|S|\leq\thresh(|F|)$, we have that $S$ is a component of $G-F$ iff it is a component of
$H-F$. 
If $|\dt_H(S)|<\min\bigl\{f^\grace(S),|\dt_G(S)|\bigr\}$ for some $S\sse V$,
then take $F=\dt_H(S)$. Since $f^\grace(S)>|F|$, we have $|S|\leq\thresh(|F|)$. But $S$
is a component of $H-F$ and not $G-F$, which yields a contradiction.
\end{proof}

\begin{theorem} \label{gengraceful-thm}
Let $\prop:2^V\mapsto\R_+$ be a monotone function, i.e., $\prop(T)\leq\prop(S)$ if 
$T\sse S$, as in Example~\ref{gengraceful}.
Consider \crfndp, where the base cut-requirement function $f$ is given by 
$f(S):=\min\,\{\ell: \thresh(\ell)<\prop(S)\}$, for all $S\neq\es$.
Then
\begin{enumerate}[label=(\alph*), topsep=0.1ex, noitemsep, leftmargin=*]
\item $f$ is weakly-supermodular; 
\item $H$ is feasible for $\crfndp$ iff for every fault-set $F\sse E$, and every $S\sse V$
with $\prop(S)\leq\thresh(|F|)$, $S$ is (the node-set of) a component of $G-F$ iff it is a
component of $H-F$.
\end{enumerate}
\end{theorem}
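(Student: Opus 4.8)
The plan is to mirror the proof of Theorem~\ref{graceful-thm}, with $\prop$ playing the role that cardinality plays there; monotonicity of $\prop$ supplies exactly the property of $|\cdot|$ that was used, namely that $f$ is \emph{downwards-monotone}. For part (a), I would first record that $f$ is downwards-monotone: if $\es \neq T \sse S$, then $\prop(T) \le \prop(S)$, so $\{\ell : \thresh(\ell) < \prop(T)\} \sse \{\ell : \thresh(\ell) < \prop(S)\}$, and taking minima gives $f(T) \ge f(S)$. Weak-supermodularity then follows exactly as in Theorem~\ref{graceful-thm}: for $A, B \sse V$ with $A - B, B - A \neq \es$, downwards-monotonicity gives $f(A) + f(B) \le f(A - B) + f(B - A)$, and if instead (say) $A \sse B$ then $f(A) + f(B) = f(A \cap B) + f(A \cup B)$; boundary cases with empty sets are absorbed by passing to the symmetric, normalized version of $f$, which is legitimate by Lemma~\ref{wsupm-props}.

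The workhorse for part (b) is the identity built into the definition of $f$: since $\thresh$ is non-increasing, $\{\ell : \thresh(\ell) < \prop(S)\}$ is upward-closed, hence equals $\{f(S), f(S) + 1, \dots\}$, so for all $S \sse V$ and $F \sse E$ we have $f(S) > |F|$ iff $\prop(S) \le \thresh(|F|)$. For the forward direction, assume $|\dt_H(S)| \ge \min\{f(S), |\dt_G(S)|\}$ for all $S$, and fix $F \sse E$, $S \sse V$ with $\prop(S) \le \thresh(|F|)$, i.e.\ $f(S) > |F|$. If $S$ is a component of $H - F$ then $\dt_H(S) \sse F$, so $|\dt_H(S)| \le |F| < f(S)$; feasibility then forces $|\dt_H(S)| \ge |\dt_G(S)|$, hence $\dt_H(S) = \dt_G(S)$ and $\dt_{G - F}(S) = \es$; since $S$ is connected in $H - F$ and hence in the supergraph $G - F$, $S$ is a component of $G - F$. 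For the converse within the forward direction, if $S$ is a component of $G - F$ but not of $H - F$, choose a component $A$ of $H - F$ with $\es \neq A \subsetneq S$; by monotonicity $\prop(A) \le \prop(S) \le \thresh(|F|)$, so the argument just given applied to $A$ makes $A$ a component of $G - F$, contradicting $A \subsetneq S$.

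For the backward direction, suppose $H$ violates $|\dt_H(S)| \ge \min\{f(S), |\dt_G(S)|\}$ for some $S$, and take a violating $S$ of minimum $|S|$. I would argue that $H[S]$ is connected: if not, write $S = S_1 \cup S_2$ with $S_1, S_2 \neq \es$ and no $H$-edge between them; then $\dt_H(S)$ is the disjoint union of $\dt_H(S_1)$ and $\dt_H(S_2)$ while $\dt_G(S) \sse \dt_G(S_1) \cup \dt_G(S_2)$, so, using downwards-monotonicity of $f$, one of $S_1, S_2$ is a strictly smaller violating set. Hence $S$ is a single component of $H - \dt_H(S)$. Setting $F = \dt_H(S)$ we have $|F| = |\dt_H(S)| < f(S)$, so $\prop(S) \le \thresh(|F|)$, and the hypothesis forces $S$ to be a component of $G - F$ as well, i.e.\ $\dt_G(S) \sse F = \dt_H(S)$, contradicting $|\dt_H(S)| < |\dt_G(S)|$.

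I expect the only real obstacle to be the bookkeeping in the backward direction: for an arbitrary violating $S$, setting $F = \dt_H(S)$ only yields $\dt_{H - F}(S) = \es$, not that $S$ is a \emph{single} component of $H - F$, so one must either pass to a minimum-cardinality violating set as above, or split $S$ into its $H$-components and track which one remains violating (using that $f(A) \ge f(S)$ for $A \sse S$). Everything else is a routine transcription of the Theorem~\ref{graceful-thm} argument with $\prop$ in place of $|\cdot|$, so I would present part (b) by stating the identity $f(S) > |F| \iff \prop(S) \le \thresh(|F|)$ once and then reusing the two implications above verbatim.
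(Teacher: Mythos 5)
Your proposal is correct and follows essentially the same route as the paper: downwards-monotonicity of $f$ (from monotonicity of $\prop$ and $\thresh$ non-increasing) gives weak supermodularity, and part (b) rests on the equivalence $f(S)>|F|\iff\prop(S)\leq\thresh(|F|)$ with the same two implications. Your extra care in the backward direction is warranted — with $F=\dt_H(S)$ an arbitrary violating $S$ is only a \emph{union} of components of $H-F$, a point the paper's proof elides — and either your minimum-cardinality argument or applying the hypothesis to each $H-F$ component inside $S$ closes that gap.
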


\begin{proof}
We mimic the proof of Theorem~\ref{graceful-thm}. Since $\prop$ is monotone, we 
obtain that $f$ is downwards-monotone, and hence weakly-supermodular.

For part (b), similar to before, we have that $\prop(S)\leq\thresh(|F|)$ iff $f(S)>|F|$. 
Suppose $H$ is feasible for \crfndp.
Consider any fault-set $F\sse E$ and any $S\sse V$ be such that $\prop(S)\leq\thresh(|F|)$.  
By the above, $f(S)>|F|$, so if $S$ is a component of $H-F$, then it must also be a
component of $G-F$. Again, this also implies that if $S$ is a component of $G-F$, then it
is a component of $H-F$: otherwise, there is some $A\subsetneq S$ that is a component of
$H-F$, and $\prop(A)\leq\prop(S)\leq\thresh(|F|)$, which means that $A$ is a component of $G-F$.

Conversely, suppose that for every $F\sse E$, and $S\sse V$ with
$\prop(S)\leq\thresh(|F|)$, we have that $S$ is a component of $G-F$ iff it is a component
of $H-F$. 
If $|\dt_H(S)|<\min\bigl\{f(S),|\dt_G(S)|\bigr\}$ for some $S\sse V$,
then take $F=\dt_H(S)$. Since $f(S)>|F|$, we have $\prop(S)\leq\thresh(|F|)$. But $S$
is a component of $H-F$ and not $G-F$, which yields a contradiction.
\end{proof}

Our intent with Examples~\ref{graceful} and~\ref{gengraceful} is to illustrate that
\crfndp can be used to model some interesting problems. 
The choice of the $\thresh$ and/or $\prop$ functions can of course impact our ability to 
solve the LP-relaxation \ref{crlp} efficiently.

\section{Structure of feasible \boldmath \crfndp solutions: a decomposition result}
\label{decomp}
We now prove our decomposition result, showing that \crfndp can be reduced to \fndp
with suitably-defined weakly-supermodular cut-requirement functions.
Recall that the base cut-requirement function $f$ is weakly supermodular (and symmetric,
normalized), but the cut-requirement function $g^{\crfndp}$ underlying \crfndp need not be.
Given $x\in\R^E$ and $S\sse V$, we use $x^S$ to denote the
restriction of $x$ to edges in $S$, i.e., the vector $(x_e)_{e\in E(S)}\in\R^{E(S)}$.

\begin{theorem}[{\bf Decomposition result}] \label{redn-cor} 
There is a partition $V_1,V_2,\ldots,V_k$ of $V$ and weakly-supermodular, symmetric, and
normalized functions $f_i:2^{V_i}\mapsto\Z$ such that the following hold.

\begin{enumerate}[label=(\alph*), topsep=0.1ex, itemsep=0.2ex, leftmargin=*]
\item $F\sse E$ is a feasible solution to $\crfndp[(f,G)]$ iff
$F\supseteq\dt_G(V_i)$ and $F(V_i)$ is a feasible solution to $\fndp[{(f_i,\gs[V_i])}]$ for
all $i\in[k]$.

\item $x\in\R^E$ is feasible to $\lpname[f,G]$ iff 
for all $i\in[k]$, we have
$x_e=1$ for all $e\in\dt_G(V_i)$, and $x^{V_i}$ is feasible to $\fndlp[{f_i,\gs[V_i]}]$.

\item 
$x$ is an extreme-point solution to $\lpname[f,G]$ iff
for all $i\in[k]$, we have
$x_e=1$ for all $e\in\dt_G(V_i)$, and 
$x^{V_i}$ is an extreme-point solution to $\fndlp[{f_i,\gs[V_i]}]$. 
\end{enumerate}
\end{theorem}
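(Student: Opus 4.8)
The plan is to establish Theorem~\ref{redn-cor} by iterating a single splitting step, which I expect the paper to isolate as a lemma (referred to above as Theorem~\ref{main-redn}): if $\scut$ is a small cut of maximum deficiency with respect to the weakly-supermodular base function $f$ on graph $D$, then there are weakly-supermodular, symmetric, normalized functions $f_\scut:2^\scut\to\Z$ and $f_{\scutc}:2^{\scutc}\to\Z$ such that feasibility for $\crfndp[(f,D)]$ is equivalent to containing $\dt_D(\scut)$ together with feasibility for $\crfndp[(f_\scut,D[\scut])]$ and $\crfndp[(f_{\scutc},D[\scutc])]$, and likewise for the LP-relaxations and their extreme points. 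Granting that lemma, the first step is the easy base case: if $G$ has no small cut with respect to $f$, then $g^{\crfndp}=f$ on all of $\ts$, so $\crfndp[(f,G)]$ is literally $\fndp[(f,G)]$ and we take $k=1$, $V_1=V$, $f_1=f$. (Here $\dt_G(V_1)=\es$, so the "$F\supseteq\dt_G(V_1)$" clause is vacuous.)

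For the inductive step, suppose $G$ does have a small cut; pick $\scut$ of maximum deficiency and apply the splitting lemma. The functions $f_\scut,f_{\scutc}$ are weakly supermodular and symmetric/normalized — this is exactly what Lemma~\ref{wsupm-props}\ref{wsupm-res},\ref{wsupm-proj} is set up to deliver, since $f_\scut$ should be (the symmetrization/normalization of) the projection of a function of the form $f(\cdot)-|\dt_{Z}(\cdot)|$ onto $\scut$. Crucially, $|\scut|,|\scutc|<|V|$, so I can invoke the theorem inductively on $D[\scut]$ with $f_\scut$ and on $D[\scutc]$ with $f_{\scutc}$, obtaining partitions $\{V_i\}$ of $\scut$ and $\{V_j'\}$ of $\scutc$ together with functions $f_i,f_j'$. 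The output partition of $V$ is the concatenation of these two partitions, and the output functions are the corresponding $f_i$'s and $f_j'$'s. What remains is to chain the equivalences: part (a) of the lemma reduces $\crfndp[(f,G)]$-feasibility of $F$ to "$F\supseteq\dt_G(\scut)$, $F(\scut)$ feasible for $\crfndp[(f_\scut,G[\scut])]$, $F(\scutc)$ feasible for $\crfndp[(f_{\scutc},G[\scutc])]$"; the induction hypothesis rewrites the latter two as containing all the induced cuts $\dt_{G[\scut]}(V_i)$ (resp.\ $\dt_{G[\scutc]}(V_j')$) and being feasible for the corresponding $\fndp[(f_i,G[V_i])]$; and finally I observe that for $V_i\sse\scut$ we have $\dt_G(V_i)=\dt_{G[\scut]}(V_i)\cup\bigl(\dt_G(\scut)\cap\dt_G(V_i)\bigr)$, so "$F\supseteq\dt_G(\scut)$ and $F\supseteq\dt_{G[\scut]}(V_i)$ for all $i$" is the same as "$F\supseteq\dt_G(V_i)$ for all $i$". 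That collapses everything to the statement of part (a) for $G$. Parts (b) and (c) follow by the identical chaining, using the LP-version and extreme-point-version of the splitting lemma in place of part (a); for (c) one additionally uses that if $x_e=1$ is forced for $e\in\dt_G(\scut)$, then an extreme point of $\lpname[f,G]$ restricts to extreme points of the two smaller LPs and conversely a product of extreme points (with the boundary edges set to $1$) is an extreme point.

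The main obstacle is entirely inside the splitting lemma, not in this induction, so in the write-up I would state clearly which parts of Theorem~\ref{main-redn} I am quoting. The delicate point there — which this plan defers — is proving that setting $\dt_G(\scut)$ to $1$ and satisfying the projected requirements on $\scut$ and $\scutc$ suffices to satisfy the original cut constraint for every set $T$ that \emph{crosses} $\scut$. This is where maximum deficiency of $\scut$ is used: by Lemma~\ref{wsupm-props}\ref{wsupm-res} the deficiency $\defc=f-|\dt_G(\cdot)|$ is weakly supermodular, so for a crossing $T$ one of the two uncrossing inequalities holds, and since $\defc(\scut)$ is maximum one can bound $\defc(T)$ by deficiencies of the non-crossing sets among $\{T\cap\scut,T\cup\scut,T-\scut,\scut-T\}$ — i.e.\ sets lying inside $\scut$ or inside $\scutc$ (after complementation) — whose requirements are handled by $f_\scut$ or $f_{\scutc}$, plus the edges of $\dt_G(\scut)$ that we have fully taken. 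Assuming that lemma, the theorem itself is just a clean two-way induction on $|V|$, and I would present it at that level of detail.
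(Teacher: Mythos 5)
Your plan is correct and matches the paper's proof of Theorem~\ref{redn-cor}: the paper proves exactly this induction on $|V|$, quoting the splitting result (Theorem~\ref{main-redn}, whose proof carries the real technical content and is given separately) and concatenating the recursively obtained partitions of $\scut$ and $\scutc$, with the same observation that $F\supseteq\dt_G(\scut)$ together with $F\supseteq\dt_{\gs[\scut]}(V_i)$ yields $F\supseteq\dt_G(V_i)$. The only cosmetic difference is that the paper's Theorem~\ref{main-redn} has no extreme-point part --- part (c) is instead derived in one shot from part (b) of Theorem~\ref{redn-cor} via a convex-combination (product-of-polytopes) argument over the final partition --- whereas you propagate the extreme-point equivalence through each binary split; both rest on the same fact and are equally valid.
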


Before delving into the proof of Theorem~\ref{redn-cor}, 
we state the following
immediate corollary of Theorem~\ref{redn-cor} (c), which directly leads to a
$2$-approximation algorithm for \crfndp whenever \ref{lp} can be solved efficiently.  

\begin{corollary} \label{extpoint}
Let $\hx$ be an extreme-point solution to $\lpname[f,G]$. There exists
some $e\in E$ for which $\hx_e\geq 1/2$.
\end{corollary}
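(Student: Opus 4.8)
The plan is to derive Corollary~\ref{extpoint} directly from part (c) of Theorem~\ref{redn-cor} together with Jain's celebrated half-integrality result for \fndp with weakly-supermodular cut-requirement functions. First I would observe that by Theorem~\ref{redn-cor}(c), any extreme-point solution $\hx$ to \lpname[f,G] restricts to an extreme-point solution $\hx^{V_i}$ of $\fndlp[{f_i,\gs[V_i]}]$ for each $i\in[k]$, and moreover $\hx_e=1$ for every $e\in\dt_G(V_i)$. So it suffices to exhibit, for \emph{some} $i$, an edge $e$ with $\hx_e\ge 1/2$.

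The main case split is on whether the partition is trivial or whether some $V_i$ carries a nontrivial requirement. If $\dt_G(V_i)\neq\es$ for some $i$, then picking any $e\in\dt_G(V_i)$ immediately gives $\hx_e=1\ge 1/2$ and we are done. Otherwise, the partition classes are the connected components of $G$ (no cross edges at all), and since \lpname[f,G] is feasible and we may assume the instance is nontrivial (there is at least one constraint \eqref{cuts} with a positive right-hand side, else the all-zero solution is the unique extreme point and the statement is vacuous or we handle it trivially), there is some $i$ for which $\fndlp[{f_i,\gs[V_i]}]$ is a non-trivial \fndp LP. Here $f_i$ is weakly supermodular (as guaranteed by Theorem~\ref{redn-cor}), so Jain's theorem applies: in any extreme-point solution to the LP-relaxation of an \fndp instance with a weakly-supermodular cut-requirement function, there is an edge $e$ with $\hx^{V_i}_e\ge 1/2$. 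This edge $e$ then satisfies $\hx_e\ge 1/2$ in the original solution, completing the proof.

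The one point requiring a little care is the degenerate situation in which every $\hx_e$ could a priori be strictly less than $1/2$ only if the LP has an extreme point that is the all-zero vector on a component with no nontrivial cut requirement; but such a component contributes no fractional edges, so it cannot be the ``witness'' component — we simply discard components $V_i$ with $E(V_i)=\es$ or with $f_i\equiv 0$ and focus on one that actually has a fractional edge. Since $\hx$ is an extreme point of a non-empty polytope, if $\hx$ has \emph{any} coordinate that is neither $0$ nor $1$, that coordinate lies in some $\hx^{V_i}$ which is then a nontrivial extreme point, and Jain's bound yields an edge of value $\ge 1/2$ there; and if $\hx$ is integral, then either it is all-zero (only when the instance is trivial, in which case there is nothing to round and the claim can be read as vacuous) or it has a coordinate equal to $1$, which certainly is $\ge 1/2$. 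I expect the only genuine obstacle to be making sure the invocation of Jain's result is legitimate, i.e., that each $f_i$ is indeed weakly supermodular and that $\fndlp[{f_i,\gs[V_i]}]$ is literally the LP-relaxation to which Jain's iterative-rounding analysis applies — but both of these are exactly what Theorem~\ref{redn-cor} has been set up to provide, so the corollary should follow in a few lines.
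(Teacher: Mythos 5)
Your proposal is correct and follows essentially the same route as the paper: invoke Theorem~\ref{redn-cor}(c) to see that $\hx$ restricts to extreme points of the $\fndlp[{f_i,\gs[V_i]}]$ instances (with $\hx_e=1$ on $\dt_G(V_i)$), then apply Jain's half-integrality theorem to a nontrivial piece. The extra care you take with degenerate/trivial instances is fine (and slightly more explicit than the paper, which implicitly assumes the instance is nontrivial since the corollary is only invoked while the current solution is infeasible).
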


\begin{proof}
By Theorem~\ref{redn-cor} (c),
there is a partition of $V$ and \fndp-instances defined on each part such that $\hx$
induces an extreme-point solution in each \fndp-instance. By Jain's result, this
implies that in each of these instances, there is some edge $e$ with $\hx_e\geq 1/2$.
\end{proof} 

The rest of this section is devoted to the proof of Theorem~\ref{redn-cor}.
For a set $S\sse V$, define its 
{\em deficiency} $\defc(S):=f(S)-|\dt_G(S)|$. We use $\defc_{f,G}(S)$ (or $\defc_{f,E}(S)$)
when we want to make the cut-requirement function and graph (or edge-set) explicit. 
By Lemma~\ref{wsupm-props}~\ref{wsupm-res},
$\defc(S)$ is weakly supermodular; also, it is symmetric (and normalized). We say that $S$
is a {\em small cut} (or more explicitly a {\em small-$(f,G)$-cut}) if $\defc(S)>0$. Clearly,
if there is no small cut, then \crfndp becomes equivalent to \fndp. 
Our chief insight and key structural result is that if there is a small cut, then we can
simplify the \crfndp instance and move to smaller \crfndp instances, by ``splitting'' the
instance along a suitable small cut (Theorem~\ref{main-redn}). 
By repeating this operation, 
we eventually obtain an instance with no small cuts, which leads to Theorem~\ref{redn-cor}.
The splitting operation relies on the following modification of $f$.

\begin{definition} \label{restriction} 
Let $\es\neq S\subsetneq V$. Recall that $\bS:=V-S$. 
Let $Z=\dt_G(S)$.
Define the function $f_S:2^S\mapsto\Z$, 
which we call the {\em restriction of $f$ to $S$},
as follows.
\[
f_S(T)=\begin{cases}
0; & \text{if $T=\es$ or $T=S$} \\
\max\bigl\{f(T)-|\dt_Z(T)|,f(T\cup\bS)-|\dt_Z(T\cup\bS)|\bigr\} & \text{otherwise}
\end{cases} \qquad \forall T\sse S. 
\]
\end{definition}

The intuition here is that for $T\sse S$, $f_S(T)$ is a lower bound on the residual
requirement that needs to be met from $\dt_{\gs}(T)$ given that we pick all edges of
$Z$. 
Note that since $f$ and $|\dt_Z(.)|$ are symmetric,
$f(T\cup\bS)-|\dt_Z(T\cup\bS)|=f(S-T)-|\dt_Z(S-T)|$, and 
so $f_S$ is the symmetric, normalized version of of the residual requirement
function $f-|\dt_Z(\cdot)|$. We use the definition above, as this will be more convenient 
to work with later in Section~\ref{decomp-props}, where we establish other useful
properties of the decomposition given  by Theorem~\ref{redn-cor}. 

In the splitting operation, we pick a small cut $\scut$ of {\em maximum} deficiency, and
move to \crfndp instances on the smaller graphs $\gs[\scut]$, and $\gs[\scutc]$
obtained by restricting $f$ to $\scut$, and restricting $f$ to $\scutc$ respectively 
(Theorem~\ref{main-redn}).  
Lemma~\ref{split-wsupm} states some basic properties of the restriction of $f$. Part (b)
will be quite useful in proving that the splitting operation maintains feasibility, and part
(a) allows us to repeat the splitting operation on the smaller instances.
For a set $S\sse V$, and set $F\sse E$ of edges, we use $F(S)$ to denote the edges of $F$
with both ends in $S$. 

\begin{lemma} \label{split-wsupm}
Let $\es\neq S\subsetneq V$. Let $f_S$ be the restriction of $f$ to $S$. 
Let $\dt_G(S)\sse F\sse E$.

\begin{enumerate}[label=(\alph*), topsep=0.1ex, itemsep=0.1ex, leftmargin=*]
\item $f_S$ is weakly supermodular, symmetric, and normalized.

\item Let $\es\neq T\subsetneq S$. 
Then, $\defc_{f_S,F(S)}(T)=\max\bigl\{\defc_{f,F}(T),\defc_{f,F}(T\cup\bS)\bigr\}$.
\end{enumerate}
\end{lemma}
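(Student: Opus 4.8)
The plan is to handle the two parts somewhat separately, with part (b) being the workhorse and part (a) following from it together with Lemma~\ref{wsupm-props}. For part (b), fix $\es\neq T\subsetneq S$ and set $Z=\dt_G(S)$, so $Z\sse F$. The key observation is that since $Z\sse F$, every edge of $Z$ incident to $T$ lies in $F$; moreover, the edges of $F(S)$ crossing $T$ within $\gs[S]$ are exactly the edges of $F$ crossing $T$ that have both endpoints in $S$, and these, together with the $Z$-edges crossing $T$, account for {\em all} of $\dt_F(T)$ — because any edge of $F$ incident to $T$ either stays inside $S$ or leaves $S$ (hence lies in $Z\cap F = Z$). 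So $|\dt_{F(S)}(T)| = |\dt_F(T)| - |\dt_Z(T)|$. Plugging into the definition of deficiency, $\defc_{f_S,F(S)}(T) = f_S(T) - |\dt_{F(S)}(T)| = \max\{f(T)-|\dt_Z(T)|,\, f(T\cup\bS)-|\dt_Z(T\cup\bS)|\} - \bigl(|\dt_F(T)| - |\dt_Z(T)|\bigr)$. The first term in the max contributes $f(T) - |\dt_F(T)| = \defc_{f,F}(T)$. For the second term I need $|\dt_Z(T\cup\bS)| - |\dt_Z(T)| = -\bigl(|\dt_F(T\cup\bS)| - |\dt_F(T)|\bigr) + \text{(something that cancels)}$; more carefully, I should show directly that $f(T\cup\bS) - |\dt_Z(T\cup\bS)| - |\dt_F(T)| + |\dt_Z(T)| = f(T\cup\bS) - |\dt_F(T\cup\bS)| = \defc_{f,F}(T\cup\bS)$, which reduces to the claim $|\dt_F(T\cup\bS)| = |\dt_Z(T\cup\bS)| + |\dt_F(T)| - |\dt_Z(T)|$. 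This last identity is the crux: it says that $|\dt_F(T\cup\bS)| - |\dt_Z(T\cup\bS)| = |\dt_F(T)| - |\dt_Z(T)|$, i.e., the number of ``internal-to-$S$'' $F$-edges crossing $T\cup\bS$ equals the number crossing $T$. This holds because $\dt_G(T\cup\bS)$ and $\dt_G(T)$, restricted to edges with both endpoints in $S$, are literally the same edge-set: within $\gs[S]$, the boundary of $T$ equals the boundary of $S-T$, and $T\cup\bS$ has the same ``$S$-side'' as $S-T$. I expect this bookkeeping — keeping straight which edges are counted by $\dt_Z$ versus $\dt_F$ versus $\dt_{F(S)}$ — to be the main obstacle, in the sense that it is the one place where a careless argument goes wrong; but it is ultimately a short case-free computation once the edge partition $F = Z \,\dot\cup\, F(S) \,\dot\cup\, F(\bS)$ is written down.

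For part (a), weak-supermodularity and symmetry of $f_S$ follow from Lemma~\ref{wsupm-props}. Indeed, $f$ is weakly supermodular and symmetric by assumption, so by part~\ref{wsupm-res} of that lemma the function $f(\cdot) - |\dt_Z(\cdot)|$ on $2^V$ is weakly supermodular and symmetric; then part~\ref{wsupm-sym} gives that its symmetric-normalized version is weakly supermodular. Since $f$ and $|\dt_Z(\cdot)|$ are symmetric, $f(T\cup\bS) - |\dt_Z(T\cup\bS)| = f(S-T) - |\dt_Z(S-T)|$ for $T\sse S$, so $f_S$ is exactly the restriction to $2^S$ of the symmetric-normalized version of $f - |\dt_Z(\cdot)|$ — more precisely, $f_S(T) = \max\{(f-|\dt_Z|)(T),\, (f-|\dt_Z|)(S-T)\}$ for $\es\neq T\subsetneq S$ and $f_S(\es)=f_S(S)=0$. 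It remains to note that restricting a symmetric weakly-supermodular function on $2^V$ to the sublattice $2^S$ (identifying $S$ with the ground set) preserves both properties: symmetry with respect to $S$ is the displayed formula, normalization is by fiat, and weak-supermodularity is inherited because for $A,B\sse S$ the four sets $A\cap B, A\cup B, A-B, B-A$ all lie in $2^S$, so the weak-supermodularity inequality for $f - |\dt_Z|$ (and its symmetrization) applied to $A,B$ stays within $2^S$. This is the standard fact that an order ideal of a lattice inherits these closure-type properties, and I would state it in one or two sentences rather than belabor it.

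The write-up order I would follow is: (i) record the edge partition of $F$ induced by $S$, and derive the two counting identities $|\dt_{F(S)}(T)| = |\dt_F(T)| - |\dt_Z(T)|$ and $|\dt_F(T\cup\bS)| - |\dt_Z(T\cup\bS)| = |\dt_F(T)| - |\dt_Z(T)|$; (ii) substitute into the definition of $f_S$ and deficiency to conclude part (b); (iii) observe the symmetry formula for $f_S$ and invoke Lemma~\ref{wsupm-props}\ref{wsupm-res} and \ref{wsupm-sym} for part (a). I do not anticipate any genuine difficulty beyond the edge-counting, which is why I would present that portion carefully and the rest telegraphically.
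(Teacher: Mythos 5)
Your proposal is correct and follows essentially the same route as the paper: part (b) via the edge-counting identity $|\dt_F(Y)|=|\dt_{F(S)}(T)|+|\dt_Z(Y)|$ for $Y\in\{T,T\cup\bS\}$ (using $\dt_G(S)\sse F$), and part (a) by recognizing $f_S$ as the symmetrized, normalized version (with respect to ground set $S$) of $f-|\dt_Z(\cdot)|$ and invoking Lemma~\ref{wsupm-props}~\ref{wsupm-res} and~\ref{wsupm-sym}. The paper additionally notes an alternative for part (a) via the projection operation of Lemma~\ref{wsupm-props}~\ref{wsupm-proj} with $\scol=\{\es,\bS\}$, but its primary argument is the one you give.
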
 

\begin{proof} 
Let $Z=\dt_G(S)$.
Define $f':2^V\mapsto\R$ by $f'(T):=f(T)-|\dt_Z(T)|$. Then, $f'$ is weakly supermodular
and symmetric (Lemma~\ref{wsupm-prop}~\ref{wsupm-res}). 

For part (a), we can proceed in two ways. As noted earlier, $f_S$ is the symmetric,
normalized version $f'$, so part (a) follows from Lemma~\ref{wsupm-prop}~\ref{wsupm-sym}. 
Alternatively, observe that $f_S$ is the projection of $f'$
onto $S$ with respect to $\scol:=\{\es,\bS\}$, which 
is closed under set intersections, unions, differences, and complements with
respect to $V-S$. So by Lemma~\ref{wsupm-prop}~\ref{wsupm-proj}, we obtain that
$f_S$ is weakly supermodular and symmetric. Also, $f_S$ is normalized by definition. 

For part (b), the stated equality follows by simply plugging in $f_S$, and noting that for
$Y\in\{T,T\cup\bS\}$, we have $|\dt_F(Y)|=|\dt_{F(S)}(T)|+|\dt_Z(Y)|$. 
\end{proof}

\begin{theorem}[{\bf Splitting along a small cut}] \label{main-redn}
Suppose that there is some small cut. Let $\scut\sse V$ be 
a maximum-deficiency cut. 
Note that $\defc(\scut)>0$ and $\es\neq\scut\subsetneq V$. 
\begin{enumerate}[label=(\alph*), topsep=0.2ex, itemsep=0.2ex, leftmargin=*]
\item 
$F\sse E$ is a feasible solution to $\crfndp[(f,G)]$ iff
$F\supseteq\dt_G(\scut)$, $F(\scut)$ is a feasible solution to $\crfndp[{(f_{\scut},\gs[\scut])}]$,
and $F(\scutc)$ is a feasible solution to $\crfndp[{(f_{\scutc},\gs[\scutc])}]$.

\item 
$x\in\R_+^E$ is feasible to $\lpname[f,G]$ iff $x_e=1$ for all $e\in\dt_G(\scut)$, 
$x^{\scut}$ is feasible to $\lpname[{f_{\scut},\gs[\scut]}]$, and 
$x^{\scutc}$ is feasible to $\lpname[{f_{\scutc},\gs[\scutc]}]$.
\end{enumerate}
\end{theorem}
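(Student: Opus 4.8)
The plan is to prove part (a) and part (b) in tandem, using a common forward/backward argument, since the LP statement (b) is really the "fractional version" of (a) and the key inequality needed in both directions is the same. I would first record the structural fact that drives everything: since $\scut$ is a small cut, $\defc(\scut)=f(\scut)-|\dt_G(\scut)|>0$, so $f(\scut)>|\dt_G(\scut)|=|\dt_G(\scut)|$, which forces $x\bigl(\dt_G(\scut)\bigr)\geq\min\{f(\scut),|\dt_G(\scut)|\}=|\dt_G(\scut)|$ in any feasible LP solution, hence $x_e=1$ for all $e\in\dt_G(\scut)$; and in the integral case $F\supseteq\dt_G(\scut)$. This disposes of the "$x_e=1$ on $\dt_G(\scut)$ / $F\supseteq\dt_G(\scut)$" clause in both parts, and from now on I may assume $Z:=\dt_G(\scut)$ is fully bought. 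Also note $\scut$ small implies $\es\neq\scut\subsetneq V$, as stated.

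Next, the \emph{easy} direction (feasibility of the two sub-instances $\Rightarrow$ feasibility of the original). Take $T\sse V$ and I must check the \crfndp constraint at $T$, i.e. $|\dt_F(T)|\geq\min\{f(T),|\dt_G(T)|\}$ (resp. the fractional analogue). If $T\sse\scut$ or $T\sse\scutc$ (equivalently $T$ does not cross $\scut$, after possibly replacing $T$ by $\bar T$ using symmetry/normalization of all the functions involved), then the constraint at $T$ is, essentially by Definition~\ref{restriction} and the identity $|\dt_F(T)|=|\dt_{F(\scut)}(T)|+|\dt_Z(T)|$ recorded in Lemma~\ref{split-wsupm}(b)'s proof, implied by the $\crfndp[(f_\scut,\gs[\scut])]$-constraint at $T$ (or its $\gs[\scutc]$ counterpart) together with $x_e=1$ on $Z$. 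The only real work is when $T$ \emph{crosses} $\scut$, i.e. $T\cap\scut,\,T\cap\scutc,\,\scut-T,\,\scutc-T$ are all nonempty. Here I use weak-supermodularity of $\defc$ (Lemma~\ref{wsupm-props}(b): $\defc=f-|\dt_G(\cdot)|$ is weakly supermodular, symmetric, normalized) applied to the pair $(T,\scut)$: $\defc(T)+\defc(\scut)\leq\max\{\defc(T\cap\scut)+\defc(T\cup\scut),\ \defc(T-\scut)+\defc(\scut-T)\}$. Since $\scut$ has \emph{maximum} deficiency, whichever of the four sets $T\cap\scut,\,T\cup\scut,\,T-\scut,\,\scut-T$ appears, each has deficiency $\leq\defc(\scut)$; combined with the inequality this yields $\defc(T)\leq\defc(\text{some non-crossing set})$, and in fact one can conclude $\defc(T)\leq 0$ when the relevant non-crossing sets have their constraints satisfied, OR more precisely that the requirement at $T$ reduces to requirements at non-crossing sets. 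The clean statement I would aim for: $f(T)-|\dt_G(T)|\leq f(T')-|\dt_G(T')|$ for a non-crossing $T'$ (one of the four sets, using symmetry to pick the one inside $\scut$ or $\scutc$), so $\min\{f(T),|\dt_G(T)|\}\leq|\dt_G(T)|-\bigl(|\dt_G(T')|-f(T')\bigr)^+\cdots$; I then chase edge counts through $Z$ to show $|\dt_F(T)|\geq\min\{f(T),|\dt_G(T)|\}$. I expect \textbf{this crossing-cut bookkeeping — converting the deficiency inequality into the cut-covering inequality at $T$ via careful counting of $Z$-edges on $T\cap\scut$, $\scut-T$, etc. — to be the main obstacle}, and the place where maximality of $\defc(\scut)$ is used in an essential way.

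For the \emph{forward} direction (feasibility of original $\Rightarrow$ feasibility of each sub-instance): given feasible $F$ (resp. $x$), I must show $F(\scut)$ satisfies $\crfndp[(f_\scut,\gs[\scut])]$, i.e. $|\dt_{F(\scut)}(T)|\geq\min\{f_\scut(T),|\dt_{\gs[\scut]}(T)|\}$ for all $\es\neq T\subsetneq\scut$ (the cases $T=\es,\scut$ being trivial by normalization). Unfolding $f_\scut(T)=\max\{f(T)-|\dt_Z(T)|,\ f(T\cup\bS)-|\dt_Z(T\cup\bS)|\}$, I split into which term achieves the max; say it is $f(T)-|\dt_Z(T)|$ (the other case is symmetric since $f,|\dt_Z(\cdot)|$ are symmetric and $T\cup\bS=V-(\scut-T)$). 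Then using feasibility of $F$ at the set $T\sse V$ itself, $|\dt_F(T)|\geq\min\{f(T),|\dt_G(T)|\}$, and the decomposition $|\dt_F(T)|=|\dt_{F(\scut)}(T)|+|\dt_Z(T)|$ (valid because $Z=\dt_G(\scut)$ and $T\subsetneq\scut$, so every $G$-edge leaving $T$ either stays in $\scut$ or crosses $\scut$), I subtract $|\dt_Z(T)|$ from both sides: $|\dt_{F(\scut)}(T)|\geq\min\{f(T),|\dt_G(T)|\}-|\dt_Z(T)|=\min\{f(T)-|\dt_Z(T)|,\ |\dt_G(T)|-|\dt_Z(T)|\}=\min\{f(T)-|\dt_Z(T)|,\ |\dt_{\gs[\scut]}(T)|\}\geq\min\{f_\scut(T),|\dt_{\gs[\scut]}(T)|\}$ — wait, the last step needs $f(T)-|\dt_Z(T)|\le f_\scut(T)$, which holds by definition of $f_\scut$ as the max — good, so actually I get $\geq\min\{f_\scut(T),\ |\dt_{\gs[\scut]}(T)|\}$ directly when the first term is the active one; when $f_\scut(T)=f(T\cup\bS)-|\dt_Z(T\cup\bS)|$ I instead invoke feasibility of $F$ at the set $T\cup\bS$ and use $|\dt_F(T\cup\bS)|=|\dt_{F(\scut)}(S-T)|+|\dt_Z(T\cup\bS)|$ together with $\dt_{F(\scut)}(S-T)=\dt_{F(\scut)}(T)$ within $\gs[\scut]$. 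The symmetric claim for $F(\scutc)$ is identical with $\scut$ and $\scutc$ swapped. For part (b), every step above is linear in $F$'s indicator / in $x$, so repeating the same manipulations with $x(\cdot)$ in place of $|\cdot|$ gives the fractional statement verbatim; I would just remark that the argument is "the same inequalities, now read fractionally", and note that we already pinned $x_e=1$ on $Z$ at the outset so the counting identities transfer. Finally I would note that Lemma~\ref{split-wsupm}(a) guarantees $f_\scut,f_{\scutc}$ are again weakly supermodular, symmetric, normalized, so the sub-instances are genuine instances of the same form — this is what licenses the recursion that yields Theorem~\ref{redn-cor}.
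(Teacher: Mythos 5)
Your forward direction (feasible $F$ or $x$ for the original instance yields feasible restrictions) is correct and matches the paper: use the constraints at $T$ and at $T\cup\scutc$, subtract $|\dt_Z(T)|$ resp.\ $|\dt_Z(T\cup\scutc)|$ using $F\supseteq Z$, and combine via the max in Definition~\ref{restriction}. The initial observation that $\defc(\scut)>0$ forces $x_e=1$ on $Z$ is also right. The problem is in the other direction, for crossing sets $T$, and it is exactly the step you flag as ``the main obstacle'': you never actually close it, and the route you sketch (uncross with the deficiency $\defc_{f,G}=f-|\dt_G(\cdot)|$ and then ``chase edge counts through $Z$'') does not go through as stated. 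Knowing $\defc_{f,G}(T)\le\min\{\defc_{f,G}(X),\defc_{f,G}(Y)\}$ for non-crossing $X,Y$ says nothing by itself about whether $F$ covers $T$ adequately; e.g.\ if $\defc_{f,G}(T)\le 0$ the constraint at $T$ is $|\dt_F(T)|\ge f(T)$, and no amount of information about $G$-cuts alone certifies this. Your parenthetical ``one can conclude $\defc(T)\le 0$ when the relevant non-crossing sets have their constraints satisfied'' conflates deficiency with respect to $G$ and with respect to $F$.

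The missing idea is to uncross with the deficiency measured against the \emph{solution}: set $\defc_{f,F}(S):=f(S)-|\dt_F(S)|$, which is weakly supermodular and symmetric by Lemma~\ref{wsupm-props}~\ref{wsupm-res} applied with edge set $F$ (the lemma is stated for arbitrary $Z\sse E$, not just $G$). The constraint at $T$ is precisely ``$\defc_{f,F}(T)\le 0$ or $F\supseteq\dt_G(T)$'', so all the $Z$-edge bookkeeping is absorbed into weak supermodularity of $\defc_{f,F}$. Uncrossing $T$ with $\scut$ gives
$\defc_{f,F}(T)+\defc_{f,F}(\scut)\le\defc_{f,F}(X)+\defc_{f,F}(Y)$ with $X\sse\scut$, $Y\sse\scutc$ and $\dt_G(T)\sse\dt_G(X)\cup\dt_G(Y)\cup Z$. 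From the non-crossing case you have $\defc_{f,F}(S)\le\max\{0,\defc_{f,G}(S)\}$ for $S\in\{X,Y\}$; since $F\supseteq Z$ gives $\defc_{f,F}(\scut)=\defc_{f,G}(\scut)>0$, maximality of $\defc_{f,G}(\scut)$ yields $\defc_{f,F}(S)\le\defc_{f,F}(\scut)$ for $S\in\{X,Y\}$. Then either both $X$ and $Y$ are small-$(f,G)$-cuts, in which case $F\supseteq\dt_G(X)\cup\dt_G(Y)\cup Z\supseteq\dt_G(T)$, or one of them is not, in which case subtracting gives $\defc_{f,F}(T)\le\min\{\defc_{f,F}(X),\defc_{f,F}(Y)\}\le 0$. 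This two-way case split (which your sketch also omits) is needed because $X$ or $Y$ may itself be a small cut with positive deficiency. The fractional statement (b) then follows verbatim with $\wdef_{f,x}$ in place of $\defc_{f,F}$, as you anticipated.
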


\begin{proof}
Part (a) can be seen as the special case of part (b), where $x$ is integral.
We prove part (a) here, as the proof is somewhat (notational) simpler. The proof of part
(b) is quite similar, and is deferred to 
Appendix~\ref{append-decomp}. For notational simplicity, let 
$(f_1,G_1,F_1)=\bigl(f_{\scut},\gs[\scut],F(\scut)\bigr)$ and
$(f_2,G_2,F_2)=\bigl(f_{\scutc},\gs[\scutc],F(\scutc)\bigr)$.
Let $Z=\dt_G(\scut)$.

The ``only if'' direction follows easily from the definition of the restriction of $f$.
Let $F$ be feasible to $\crfndp[(f,G)]$. 
Since $\scut$ is a small cut, we must have 
$F\supseteq Z$. 
Consider any $\es\neq T\subsetneq\scut$. Since $F$ is feasible,
we have $|\dt_F(T)|\geq\min\bigl\{f(T),|\dt_G(T)|\bigr\}$, which implies that 
$|\dt_{F_1}(T)|\geq\min\bigl\{f(T)-|\dt_Z(T)|,|\dt_{G_1}(T)|\bigr\}$. Similarly, we have 
$|\dt_F(T\cup\scutc)|\geq\min\bigl\{f(T\cup\scutc),|\dt_G(T\cup\scutc)|\bigr\}$, so
$|\dt_{F_1}(T)|\geq\min\bigl\{f(T\cup\scutc)-|\dt_Z(T\cup\scutc)|,|\dt_{G_1}(T)|\bigr\}$. Combining
the inequalities, 
and using the definition of restriction, we
obtain that $|\dt_{F_1}(T)|\geq\min\bigl\{f_1(T),|\dt_{G_1}(T)|\bigr\}$.
Since $T$ was arbitrary, this shows that $F_1$ is feasible to
$\crfndp[{(f_1,G_1)}]$.

A symmetric argument 
shows that $F_2$ is a feasible solution to $\crfndp[{(f_2,G_2)}]$.

\medskip
Conversely, suppose that $F\supseteq\dt_G(\scut)$, $F_1=F(\scut)$ is a
feasible solution to 
$\crfndp[{(f_1,G_1)}]$, and $F_2=F(\scutc)$ is a feasible solution to
$\crfndp[{(f_2,G_2)}]$. Recall that $Z=\dt_G(\scut)$.
Consider any $\es\neq T\subsetneq V$. 
Lemma~\ref{split-wsupm} (b) can be used to readily argue 
that if $T\sse\scut$ or $T\sse\scutc$, then the constraint for set $T$ in
$\crfndp[(f,G)]$ is satisfied. We show this for $T\sse\scut$; a symmetric argument applies
for $T\sse\scutc$.  
If $T$ is a small-$(f_1,G_1)$-cut, then $F_1\supseteq\dt_{G_1}(T)$. Since
$F\supseteq Z$, this implies that $F\supseteq\dt_G(T)$. 
Otherwise, we have $\defc_{f_1,F_1}(T)\leq 0$, so
by Lemma~\ref{split-wsupm} (b), we have that
$\defc_{f,F}(T)\leq\defc_{f_1,F_1}(T)\leq 0$.
Thus, we always have $|\dt_F(T)|\geq\min\bigl\{f(T),|\dt_G(T)|\bigr\}$.

Now consider $T$ such that $T\cap\scut, T\cap\scutc\neq\es$. 
We exploit the weak supermodularity of $\defc_{f,F}$, and that $\scut$ has
maximum deficiency, to bound $\defc_{f,F}(T)$ in terms of the deficiency of sets that do
not cross $A$, and thus show that the constraint for $T$ is satisfied.
We have

\vspace*{-3ex}
\begin{equation*}
\begin{split}
\defc_{f,F}(T)+\defc_{f,F}(\scut) & \leq\defc_{f,F}(T\cap\scut)+\defc_{f,F}(T\cup\scut)
=\defc_{f,F}(T\cap\scut)+\defc_{f,F}(\scutc-T) \ \ \text{or} \\
\defc_{f,F}(T)+\defc_{f,F}(\scut) & \leq\defc_{f,F}(\scut-T)+\defc_{f,F}(T-\scut)
\end{split}
\vspace*{-2ex}
\end{equation*}

\noindent where in the first inequality we also use the fact that $\defc_{f,F}$ is
symmetric. In both cases, we have 
$\defc_{f,F}(T)+\defc_{f,F}(\scut)\leq\defc_{f,F}(X)+\defc_{f,F}(Y)$
{where $X\sse\scut$, $Y\sse\scutc$ and
$\dt_G(T)\sse\dt_{G}(X)\cup\dt_{G}(Y)\cup Z$}.
As shown previously, for $S\in\{X,Y\}$,  
we have
$|\dt_F(S)|\geq\min\bigl\{f(S),|\dt_G(S)|\bigr\}$, or equivalently, 
$\defc_{f,F}(S)\leq\max\bigl\{0,\defc_{f,G}(S)\bigr\}$.
Note that $\defc_{f,F}(\scut)=\defc_{f,G}(\scut)>0$ since $F\supseteq\dt_G(\scut)$.
Since $\scut$ is a maximum-deficiency small cut, this also implies that
$\defc_{f,F}(\scut)\geq\defc_{f,F}(S)$ for $S\in\{X,Y\}$: we have
$\defc_{f,F}(S)\leq\max\bigl\{0,\defc_{f,G}(S)\bigr\}
\leq\max\bigl\{0,\defc_{f,G}(\scut)\bigr\}=\defc_{f,F}(\scut)$.

If both $X$ and $Y$ are small-$(f,G)$-cuts, then we are done since then we have that
$F\supseteq\dt_{G}(X)\cup\dt_{G}(Y)\cup Z\supseteq\dt_G(T)$.
Otherwise, since $\defc_{f,F}(\scut)\geq\max\bigl\{\defc_{f,F}(X),\defc_{f,F}(Y)\bigr\}$, we
have $\defc_{f,F}(T)\leq\min\bigl\{\defc_{f,F}(X),\defc_{f,F}(Y)\bigr\}\leq 0$, since at
least one of $X,Y$ is not a small-$(f,G)$-cut.
\end{proof}

\begin{proofof}{Theorem~\ref{redn-cor}}
Parts (a) and (b) follow easily by induction on the number of nodes, using parts (a) and (b) of
Theorem~\ref{main-redn}. 
Consider part (a). 
The base case is when there is
no small-$(f,G)$-cut, in which case, we have $V_1=V$, and $f_1=f$. Otherwise, let
$\scut=\argmax_{S\sse V}\defc_{f,G}(S)$. Since $f_{\scut}$ and $f_{\scutc}$ are
weakly-supermodular, we can recurse and induct on $\crfndp[{(f_{\scut},\gs[\scut])}]$ 
and $\crfndp[{(f_{\scutc},\gs[\scutc])}]$ instances, both of which have fewer nodes, to
obtain partitions 
of $\scut$ and 
$\scutc$; we combine these to obtain $V_1,\ldots,V_k$.
Theorem~\ref{main-redn} (a) and
the induction hypothesis then readily yield part (a) here. 
The only thing to note is that 
since the induction hypothesis yields $F\supseteq\dt_{\gs[\scut]}(V_i)$ for a part 
$V_i\sse\scut$, and $F\supseteq\dt_G(\scut)$ by Theorem~\ref{main-redn} (a),
we have $F\supseteq \dt_G(V_i)$. Similarly, $F\supseteq\dt_G(V_i)$ for any part
$V_i\sse\scutc$. 

Part (b) follows similarly from Theorem~\ref{main-redn} (b). 

\medskip
Part (c) follows from part (b), because it is not hard to see, using part (b), that  
any convex combination of feasible solutions to \ref{lp} yields a convex combination of
feasible $\fndlp[{f_i,\gs[V_i]}]$-solutions, and vice versa.

Suppose $x$ is an extreme-point
solution to $\lpname[f,G]$. By feasibility of $x$ and part (b), 
we have $x_e=1$ for all $e\in\dt_G(V_j)$, and $x^{V_j}$ is feasible to
$\fndlp[{f_j,\gs[V_j]}]$ for all $j\in[k]$. 
Suppose that $x^{V_i}$ is not an
extreme-point solution to $\fndlp[{f_i,\gs[V_i]}]$ for some $i\in[k]$. 
Then, we can write $x^{V_i}$ as a convex
combination $\ld y^1+(1-\ld)y^2$ of two solutions $y^1\neq x^{V_i}$, $y^2\neq x^{V_i}$
that are feasible to $\fndlp[{f_i,\gs[V_i]}]$, where $\ld\in[0,1]$. But then 
we can write $x$ as a convex combination of two feasible solutions to $\lpname[f,G]$, as
follows. 
Define $\bx,\hx\in\R^E$, where $\bx_e=\hx_e=x_e$ for all 
$e\in E-E(V_i)$, and $\bx_e=y^1_e$,\, $\hx_e=y^2_e$ for all $e\in E(V_i)$. 
So we have $\bx_e=\hx_e=1$ for all $e\in\dt_G(V_j)$ for all $j\in[k]$, 
$\bx^{V_j}=\hx^{V_j}$ is feasible to $\fndlp[{f_j,\gs[V_j]}]$ for all $j\in[k], j\neq i$;
and $\bx^{V_i}=y^1$, $\hx^{V_i}=y^2$ are feasible to $\fndlp[{f_i,\gs[V_i]}]$.
Again, by part (b), this yields that $\bx$ and $\hx$ are feasible to $\lpname[f,G]$. But
then we have $x=\ld \bx+(1-\ld)\hx$ for feasible solutions $\bx\neq x$, $\hx\neq x$, which
contradicts that $x$ is an extreme-point solution to $\lpname[f,G]$.

Conversely, suppose that $x_e=1$ for all $e\in\dt_G(V_j)$, and $x^{V_j}$ is an extreme-point
solution to $\fndlp[{f_j,\gs[V_j]}]$, for all $j\in[k]$. 
Then, $x$ is feasible to $\lpname[f,G]$ by part (b). 
Suppose $x$ is not an extreme-point solution to $\lpname[f,G]$, and $x=\ld\bx+(1-\ld)\hx$
for feasible solutions $\bx\neq x$, $\hx\neq x$. Then, again by part (b), we have
$\bx_e=\hx_e=1$ for all $e\in\dt_G(V_j)$ for all $j\in[k]$, and so
$\bx^{V_i}\neq\hx^{V_i}$ for some $i\in[k]$. Also, $\bx^{V_i}$ and $\hx^{V_i}$ are
feasible to $\fndlp[{f_i,\gs[V_i]}]$ and $x^{V_i}=\ld\bx^{V_i}+(1-\ld)\hx^{V_i}$,
contradicting that $x^{V_i}$ is an extreme-point solution to $\fndlp[{f_i,\gs[V_i]}]$.
\end{proofof}

One may wonder if the $\fndp[{(f_i,\gs[V_i])}]$-instances in the decomposition of 
Theorem~\ref{redn-cor} can be specified more directly or succinctly.
We prove some properties of the decomposition given by its proof 
in Section~\ref{decomp-props}, 
which as a {\em consequence}, 
enable one to do so (see Lemma~\ref{decomp-partn}). 
But 
we need our splitting approach to argue why this succinct description fulfills the
properties in Theorem~\ref{redn-cor}. 
We discuss this, and other subtleties in Section~\ref{succinct}.
The properties we establish also imply that the decomposition in Theorem~\ref{redn-cor} 
can be computed efficiently 
given suitable algorithmic access to the base cut-requirement function $f$.

\section{Algorithmic implication: LP-relative \boldmath $2$-approximation algorithm} 
\label{algresults}
We now exploit the decomposition result stated in Theorem~\ref{redn-cor}, and
Corollary~\ref{extpoint}, to obtain a $2$-approximation algorithm. 
We assume that we have a separation oracle for \ref{lp}, which implies that we can
efficiently find an extreme-point optimal solution to \ref{lp} and any LP that arises in
a subsequent iteration after fixing some $x_e$ variables to $1$ 
(Fact~\ref{ellipsoid}). 

\bigskip
{\hrule \nopagebreak
\small
\smallskip\noindent 
{\bf Algorithm \boldmath \crndpalg.} The input is 
$\bigl(G=(V,E),\{c_e\geq 0\},f:2^V\mapsto\Z\bigr)$, where $f$ is weakly-supermodular (and
symmetric, normalized). We assume there is a separation oracle for \ref{lp}. 

\begin{enumerate}[label=A\arabic*., topsep=0.2ex, itemsep=0.1ex, leftmargin=*]
\item Initialize $F\assign\es$, $G'=(V,E')\assign G$, and $f'\assign f$.
\item While $F$ is not a feasible solution (which can be detected using the separation
  oracle), repeat:
\begin{enumerate}[label*=\arabic*, nosep, leftmargin=*]
\item Find an extreme-point optimal solution $\hx$ to $\lpname[f',G']$.
\item Let $Z_1=\{e\in E': \hx_e\geq 1/2\}$. 
Update $F\assign F\cup Z_1$, $E'\assign E'-Z_1$, 
and $f'(S)=f'(S)-|\dt_{Z_1}(S)|$ for all $S\sse V$.
\end{enumerate}
\item Return $F$.
\end{enumerate}
\hrule}

\medskip
\begin{remark} \label{alg-remk}
The collection of small cuts {\em does not change across iterations}; that is,
$S$ is a small-$(f,G)$-cut iff it is a small-$(f',G')$-cut in some other iteration. This
is because, for any $S\sse V$, we change $|\dt_{G'}(S)|$ and $f'(S)$ by the
same amount 
in every iteration. 
In the first iteration, 
we have $\hx_e=1$ for every $e\in\dt_G(S)$ and every small cut $S$, 
and so 
$\dt_{G'}(S)=\es$ in any subsequent iteration. 
So we do not have any constraints for small cuts in subsequent iterations, 
and this causes problems with uncrossing tight constraints. Nevertheless, we are able to
argue, due to our decomposition result, that there is always some edge with 
$\hx_e\geq 1/2$.
\end{remark}

\begin{theorem} \label{approx}
Let $\OPT$ denote the optimal value of \ref{lp}. Algorithm \crndpalg returns a feasible
\crfndp-solution $F\sse E$ such that $c(F)\leq 2\cdot\OPT$.
\end{theorem}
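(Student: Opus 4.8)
The plan is to analyze Algorithm \crndpalg\ by tracking two quantities across iterations: the feasibility of the accumulated edge-set $F$, and the cost incurred. First I would argue termination and feasibility. In each iteration, the set $Z_1=\{e\in E':\hx_e\geq 1/2\}$ is nonempty by Corollary~\ref{extpoint} applied to the extreme-point solution $\hx$ of $\lpname[f',G']$ (using that $f'$ remains weakly supermodular, symmetric, and normalized by Lemma~\ref{wsupm-props}~\ref{wsupm-res}, so that Corollary~\ref{extpoint} applies). Hence $E'$ strictly shrinks each iteration, so the algorithm halts in at most $|E|$ iterations. On halting, the while-condition guarantees $F$ is a feasible \crfndp-solution; I should also check the loop invariant that $\lpname[f',G']$ always remains feasible (so that step A2.1 makes sense), which holds because fixing $\hx_e=1$ for $e\in Z_1$ keeps the residual LP feasible — indeed the LP-relaxation $\lpname[f',G']$ of the residual instance is exactly $\lpname[f,G]$ with the $Z_1$-variables set to $1$, and $\hx$ restricted appropriately is a feasible point for it.

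Next I would handle the cost bound, via the standard iterative-rounding charging argument. Let $\OPT$ be the optimal value of \ref{lp}, and for each iteration $j$ let $\hx^{(j)}$ be the extreme point found, $Z_1^{(j)}$ the rounded edges, and $\OPT_j$ the optimal value of the residual LP $\lpname[f^{(j)},G^{(j)}]$ at the start of iteration $j$. The key claim is the telescoping inequality
\begin{equation*}
c\bigl(Z_1^{(j)}\bigr)+\OPT_{j+1}\leq 2\,\OPT_j.
\end{equation*}
For this I first observe $\OPT_j=\sum_{e\in E^{(j)}}c_e\hx^{(j)}_e$ since $\hx^{(j)}$ is an optimal solution, and split the sum over $Z_1^{(j)}$ and $E^{(j)}\setminus Z_1^{(j)}$. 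Since every $e\in Z_1^{(j)}$ has $\hx^{(j)}_e\geq 1/2$, we get $c\bigl(Z_1^{(j)}\bigr)\leq 2\sum_{e\in Z_1^{(j)}}c_e\hx^{(j)}_e$. It remains to show $\OPT_{j+1}\leq\sum_{e\in E^{(j)}\setminus Z_1^{(j)}}c_e\hx^{(j)}_e$, i.e. the restriction of $\hx^{(j)}$ to $E^{(j+1)}=E^{(j)}\setminus Z_1^{(j)}$ is feasible for the residual LP $\lpname[f^{(j+1)},G^{(j+1)}]$. This follows because $\hx^{(j)}$ is feasible for \ref{lp} on $(f^{(j)},G^{(j)})$, and subtracting the fixed-to-one edges $Z_1^{(j)}$ from both the graph and the cut-requirement function $f^{(j)}$ preserves the constraints $x(\dt_{G^{(j+1)}}(S))\geq\min\{f^{(j+1)}(S),|\dt_{G^{(j+1)}}(S)|\}$ — this is exactly the equivalence between \eqref{rescon} and \eqref{lpp} noted in Section~\ref{prelim}. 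Chaining the telescoping inequality over all iterations $j=1,\ldots,T$ (with $\OPT_1=\OPT$ and $\OPT_{T+1}\geq 0$) gives $c(F)=\sum_j c\bigl(Z_1^{(j)}\bigr)\leq 2\,\OPT$.

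I expect the main subtlety — though not a deep obstacle given the preceding results — to be making precise why $\hx^{(j)}$ restricted to $E^{(j+1)}$ is feasible for the residual LP and why that residual LP is genuinely of the form $\lpname[f^{(j+1)},G^{(j+1)}]$ with $f^{(j+1)}$ still weakly supermodular, symmetric, and normalized (so that the decomposition result and hence Corollary~\ref{extpoint} keep applying in the next iteration). Both points are already established in Section~\ref{prelim} (the $\eqref{lpp}\equiv\lpname[f',G']$ discussion) and Lemma~\ref{wsupm-props}~\ref{wsupm-res}; the proof is really just assembling these facts into the loop invariant. A final cosmetic point to address is that $F$ may contain some edges picked ``unnecessarily'' at the very end, but since the while-loop exits as soon as $F$ is feasible, no harm is done — feasibility of the returned $F$ is immediate from the exit condition, and the cost bound is unaffected.
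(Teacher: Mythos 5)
Your proposal is correct and follows essentially the same argument as the paper: the paper phrases the cost bound as an induction on the number of remaining iterations, which is exactly the unrolled form of your telescoping inequality $c(Z_1^{(j)})+\OPT_{j+1}\leq 2\,\OPT_j$, and it relies on the same two facts you identify (Corollary~\ref{extpoint} applied to the residual LP, and feasibility of the restriction of $\hx$ to $E'-Z_1$ for $\lpname[f',G']$). No gaps.
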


\begin{proof}
By Corollary~\ref{extpoint}, at least one edge gets added to $F$ in every iteration, so
the algorithm terminates in at most $|E|$ iterations. The cost bound follows by induction
on  
the number of iterations to termination. The base case is trivial.
Suppose $F=Z_1\cup F'$, where $F'$ is the solution found in subsequent iterations for the
instance $\crfndp[(f',G')]$. Let $\OPT'$ be the optimal solution to
$\lpname[G',f']$. Then, by induction, we have 
$c(F_1)\leq 2\cdot\OPT'$ and $\OPT'\leq\sum_{e\in E'}c_e\hx_e$, since 
$(\hx_e)_{e\in E'}$ is a feasible solution to $\lpname[f',G']$. We also have 
$c(Z_1)\leq 2\cdot\sum_{e\in Z_1}c_e\hx_e$. 
Therefore, $c(F)\leq 2c^T\hx=2\cdot\OPT$.
\end{proof}

\noindent
Since $\lpname[{f^{\sndp},G}]$ admits a polytime separation oracle
(Lemma~\ref{crlp-solve}), we obtain the following result as a corollary. 

\begin{theorem} \label{crsndp-thm}
There is a $2$-approximation algorithm for \crsndp.
\end{theorem}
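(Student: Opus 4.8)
The plan is to assemble the pieces that the excerpt has already set up: the decomposition result (Theorem~\ref{redn-cor}), its corollary that an extreme point always has an edge of value at least $\frac{1}{2}$ (Corollary~\ref{extpoint}), the iterative-rounding algorithm \crndpalg with its $2$-approximation guarantee relative to the LP (Theorem~\ref{approx}), and the fact that $\lpname[{f^\sndp},G]$ admits a polynomial-time separation oracle (Lemma~\ref{crlp-solve}). Since \crsndp is exactly $\crfndp[{(f^\sndp,G)}]$ and $f^\sndp$ is weakly supermodular, the hypotheses of \crndpalg are met, so the theorem is essentially immediate.

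Concretely, first I would recall that $f^\sndp(S):=\max\{r_i:|S\cap\{s_i,t_i\}|=1\}$ is weakly supermodular (this is the classical property underlying Jain's algorithm), and, after symmetrization and normalization via Lemma~\ref{wsupm-props}\ref{wsupm-sym}, we may feed it to \crndpalg as the base cut-requirement function. Second, I would invoke Lemma~\ref{crlp-solve} to get a polynomial-time separation oracle for $\lpname[{f^\sndp},G]$, which by Fact~\ref{ellipsoid} lets each iteration of \crndpalg find an extreme-point optimal solution to the (residual) LP in polynomial time; I should also note the residual LP stays of the same form because $f'(S)=f^\sndp(S)-|\dt_{Z_1}(S)|$ remains weakly supermodular and symmetric by Lemma~\ref{wsupm-props}\ref{wsupm-res}, so the separation oracle keeps applying. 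Third, Corollary~\ref{extpoint} guarantees progress in every iteration (some $\hx_e\geq\frac12$), bounding the iteration count by $|E|$ and hence the total running time polynomially. Finally, Theorem~\ref{approx} gives $c(F)\leq 2\cdot\OPT$ where $\OPT$ is the optimal value of $\lpname[{f^\sndp},G]$, which lower-bounds the integer optimum, yielding the claimed $2$-approximation.

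Honestly there is no real obstacle here — all the hard work (the decomposition technique circumventing the lack of weak supermodularity of $g^{\crsndp}$, the extreme-point structure, the separation oracle) is done in the earlier results. The only thing to be careful about is the bookkeeping: confirming that \crndpalg's preconditions (weakly supermodular, symmetric, normalized $f$; separation oracle for \ref{lp}) are literally satisfied by the \crsndp instance, and that the returned $F$ is genuinely feasible for \crsndp (which follows from the termination condition of the while-loop, checked via the same separation oracle). So the write-up is a short paragraph stitching Lemma~\ref{crlp-solve}, Theorem~\ref{approx}, and the identification $\crsndp=\crfndp[{(f^\sndp,G)}]$ together.

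\begin{proof}
Recall that \crsndp is precisely $\crfndp[{(f^\sndp,G)}]$, where
$f^\sndp(S)=\max\{r_i:|S\cap\{s_i,t_i\}|=1\}$ is weakly supermodular; by
Lemma~\ref{wsupm-props}\ref{wsupm-sym} we may assume it is also symmetric and
normalized. By Lemma~\ref{crlp-solve}, $\lpname[{f^\sndp},G]$ admits a polytime
separation oracle, so the preconditions of Algorithm \crndpalg are met. Running
\crndpalg on $\bigl(G,\{c_e\},f^\sndp\bigr)$ thus yields, by Theorem~\ref{approx},
a feasible \crsndp-solution $F$ with $c(F)\leq 2\cdot\OPT$, where $\OPT$ is the
optimal value of \ref{lp}; since $\OPT$ lower-bounds the cost of an optimal integral
solution, this is a $2$-approximation. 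Finally, \crndpalg runs in polynomial time:
each iteration finds an extreme-point optimal solution to the current residual LP in
polytime (Fact~\ref{ellipsoid}, using the separation oracle, which remains valid
since the residual cut-requirement function $f'(S)=f^\sndp(S)-|\dt_{Z_1}(S)|$ stays
weakly supermodular and symmetric by Lemma~\ref{wsupm-props}\ref{wsupm-res}), and by
Corollary~\ref{extpoint} at least one edge is added to $F$ in every iteration, so
there are at most $|E|$ iterations.
\end{proof}
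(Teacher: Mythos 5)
Your proof is correct and follows exactly the paper's route: the theorem is obtained as an immediate corollary of Theorem~\ref{approx} applied to \crndpalg, using Lemma~\ref{crlp-solve} for the separation oracle and the identification of \crsndp with $\crfndp$ for the weakly supermodular base function $f^\sndp$. The extra bookkeeping you include (residual LPs staying in the same form, iteration bound via Corollary~\ref{extpoint}) is already handled inside Theorem~\ref{approx} and its surrounding discussion, so nothing is missing or superfluous.
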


\section{Properties of the decomposition in Theorem~\ref{redn-cor}}
\label{decomp-comp} \label{append-decompcomp} \label{decomp-props}
Throughout this section, $f$ and $G$ always denote the base
cut-requirement function $f$ and graph $G$ {\em that are given as input to \crfndp}.
Theorem~\ref{redn-cor} does not pinpoint a specific decomposition, and only mentions
some properties that the $(f_i,V_i)$ tuples should satisfy. 
By ``the decomposition in Theorem~\ref{redn-cor},'' we mean the decomposition given by its
proof, wherein we obtain a maximum deficiency set $\scut\sse V$ as in
Theorem~\ref{main-redn}, and recurse on the $\crfndp[{(f_{\scut},\gs[\scut])}]$ and
$\crfndp[{(f_{\scutc},\gs[\scutc])}]$-instances.
This process can be conveniently described in terms of a decomposition
tree as follows. Each node of the tree is labeled by a tuple $(\cdfunc,S)$, 
where $S\sse V$ and $\cdfunc:2^S\mapsto\Z$ is a weakly-supermodular, symmetric, normalized
cut-requirement function on $S$; we will often simply say node $(\cdfunc,S)$. 
The root of the tree is labeled $(f,V)$. For each (current) leaf node $(\cdfunc,S)$, 
we find $\scut\sse S$ such that 
$\defc_{\cdfunc,\gs}(\scut)=\max_{T\sse S}\defc_{\cdfunc,\gs}(T)$. If $\defc_{f,\gs}(\scut)\leq 0$, then there
is no small-$(\cdfunc,\gs)$-cut and $(\cdfunc,S)$ gets fixed as a leaf node. 
Otherwise, we create two children of
$(\cdfunc,S)$ labeled $(\cdfunc_1,\scut)$ and $(\cdfunc_2,S-\scut)$, 
where $\cdfunc_1$ and $\cdfunc_2$ are the
restrictions of $\cdfunc$ to $\scut$ and $S-\scut$ respectively (see
Definition~\ref{restriction}), and continue.
We call this tree the {\em canonical} decomposition tree, and refer to
the decomposition given by its leaves as the canonical decomposition.

We establish some useful properties of the canonical decomposition tree that may be of
independent interest. 
In particular, one can infer from these properties that the canonical decomposition tree
can be computed efficiently given suitable algorithmic access to $f$. 
This suggests another way of 
obtaining an algorithmic result for \crfndp, namely, by computing the canonical
decomposition efficiently and then (approximately) solving the 
$\fndp[{(f_i,\gs[V_i])}]$-instances. 
In particular, if the resulting $f_i$ functions have some additional structure, then 
these $\fndp[{(f_i,\gs[V_i])}]$-instances might admit other, possibly
more-efficient approximation algorithms (e.g., via the primal-dual method), which would
yield an alternate (to iterative rounding) way of obtaining an approximation result for
\crfndp.  

Another consequence of the properties we establish is that they allow us to succinctly
describe the $(f_i,V_i)$ tuples given by the leaves of the canonical decomposition
tree (see Lemmas~\ref{decomp-partn} and~\ref{decomp-fns}), as also a closely-related  
decomposition satisfying the properties stated in Theorem~\ref{redn-cor}. 
While this yields a self-contained, succinct description of the canonical decomposition, 
we emphasize that in order to prove that the $\fndp$-instances given by this succinct
description satisfy the properties stated in Theorem~\ref{redn-cor},
we need to go via the canonical decomposition and exploit Lemma~\ref{decomptree} (and
thus our approach of iteratively simplifying the instance via splitting along
maximum-deficiency small cuts). So having this succinct description does not
simplify the underlying arguments; 
see also Section~\ref{altdecomp-remk}.

\begin{lemma} \label{decomptree}
Let $(\cdfunc,S)$ be a node of the canonical decomposition tree. 
Let the nodes on the path from $(\cdfunc,S)$ to the root be labeled 
$(\cdfunc_0,S_0)=(\cdfunc,S),(\cdfunc_1,S_1),\ldots,(\cdfunc_\ell,S_\ell)=(f,V)$. 
Define $R_I:=\bigcup_{i\in I}(S_i-S_{i-1})$. Note that $R_{\es}:=\es$. 
For $i\in[\ell]$, we abbreviate $R_{\{i\}}$ to $R_i$, so $R_i=S_i-S_{i-1}$.
\begin{enumerate}[label=(\alph*), topsep=0.1ex, itemsep=0.1ex, leftmargin=*]
\item For any $\es\neq T\subsetneq S$, we have 
$\defc_{\cdfunc,\gs}(T)=\max\,\bigl\{\defc_{f,G}(T\cup R_I): I\sse[\ell]\bigr\}$.

\item For any $i,j\in\{0\}\cup[\ell]$ with $i\leq j$, and any $T\sse S_i$, we have
$\defc_{\cdfunc_j,\gs[S_j]}(T)\leq\defc_{\cdfunc_i,\gs[S_i]}(T)$.

\item Let $s,t\in S$. 
We have
\vspace*{-1ex}
\[
\max\,\bigl\{\defc_{\cdfunc,\gs}(T):\ T\sse S\text{ is an $s$-$t$ cut}\bigr\}
= \max\,\bigl\{\defc_{f,G}(T):\ T\sse V\text{ is an $s$-$t$ cut}\bigr\}.
\]
\noindent
Moreover, if an $s$-$t$ cut $T\sse V$ attains the maximum on the right, then $T\cap S$ 
attains the maximum on the left.
\end{enumerate}
\end{lemma}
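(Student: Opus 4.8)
I plan to prove the three parts in order, with part~(a) carrying most of the weight, part~(b) following immediately from it, and part~(c) needing one additional ``single-split'' lemma. The overall strategy is to peel the canonical decomposition tree one level at a time, using Lemma~\ref{split-wsupm}(b) to relate the deficiency function at a node to that at its parent.

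For part~(a) I would induct on $\ell$, the depth of $(\cdfunc,S)$ below the root. The base case $\ell=0$ is $(\cdfunc,S)=(f,V)$, where $[\ell]=\es$ and the claim reads $\defc_{f,G}(T)=\defc_{f,G}(T)$. For the step, let $(\cdfunc_1,S_1)$ be the parent of $(\cdfunc_0,S_0)=(\cdfunc,S)$, so $S_0$ is one of its two children, $\cdfunc_0$ is the restriction of $\cdfunc_1$ to $S_0$, and $R_1=S_1-S_0$. Applying Lemma~\ref{split-wsupm}(b) inside the instance $(\cdfunc_1,\gs[S_1])$ (taking $F=E(S_1)$) gives, for every $\es\neq T\subsetneq S_0$, the identity $\defc_{\cdfunc_0,\gs[S_0]}(T)=\max\{\defc_{\cdfunc_1,\gs[S_1]}(T),\,\defc_{\cdfunc_1,\gs[S_1]}(T\cup R_1)\}$. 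Since $T\subsetneq S_0\subsetneq S_1$ and $T\cup R_1\subsetneq S_1$ are both proper and nonempty, I apply the induction hypothesis for $(\cdfunc_1,S_1)$ — whose peels are $R_2,\dots,R_\ell$ — to each term: the first contributes $\max\{\defc_{f,G}(T\cup R_I):I\sse\{2,\dots,\ell\}\}$ and the second $\max\{\defc_{f,G}(T\cup R_{\{1\}\cup I}):I\sse\{2,\dots,\ell\}\}$ (using $T\cup R_1\cup R_I=T\cup R_{\{1\}\cup I}$); their maximum is exactly $\max\{\defc_{f,G}(T\cup R_I):I\sse[\ell]\}$, as the two cases correspond to $1\notin I$ and $1\in I$. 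This gives part~(a).

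Part~(b) is then immediate for $\es\neq T\subsetneq S_i$: by part~(a), $\defc_{\cdfunc_i,\gs[S_i]}(T)=\max\{\defc_{f,G}(T\cup R_I):I\sse\{i+1,\dots,\ell\}\}$ and $\defc_{\cdfunc_j,\gs[S_j]}(T)$ is the analogous maximum over $I\sse\{j+1,\dots,\ell\}$, an index-subset of the former since $i\le j$; the case $T=\es$ is trivial. For part~(c), the inequality ``$\le$'' follows from part~(a): each $R_i$ lies in $V-S$ and $s,t\in S$, so every set $T\cup R_I$ produced by part~(a) for an $s$-$t$ cut $T\sse S$ is again an $s$-$t$ cut of $V$, hence $\defc_{\cdfunc,\gs}(T)\le\max\{\defc_{f,G}(T'):T'\text{ an }s\text{-}t\text{ cut of }V\}$, and taking the max over $T$ gives ``$\le$''. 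For the reverse inequality and the ``moreover'' claim I would prove the following single-split statement and then compose it along the root path: \emph{if $(\cdfunc',S')$ is an internal node of the canonical tree with child $(\cdfunc'',S'')$, then for every $s,t\in S''$ and every $s$-$t$ cut $\tilde T\sse S'$ we have $\defc_{\cdfunc'',\gs[S'']}(\tilde T\cap S'')\ge\defc_{\cdfunc',\gs[S']}(\tilde T)$.} Write $\defc:=\defc_{\cdfunc',\gs[S']}$, which is symmetric, normalized, and weakly supermodular by Lemma~\ref{wsupm-props}(b). Since $(\cdfunc',S')$ splits along a maximum-deficiency cut and $\defc$ is symmetric, $S''$ (being that cut or its complement) is itself a maximum-deficiency set of $\defc$; applying weak supermodularity to $\tilde T$ and $S''$, maximality of $\defc(S'')$ discards the term $\defc(\tilde T\cup S'')$ in one case and $\defc(\tilde T-S'')$ in the other, leaving $\defc(\tilde T)\le\defc(\tilde T\cap S'')$ or $\defc(\tilde T)\le\defc(S''-\tilde T)$. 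On the other hand, Lemma~\ref{split-wsupm}(b) gives $\defc_{\cdfunc'',\gs[S'']}(\tilde T\cap S'')=\max\{\defc(\tilde T\cap S''),\,\defc((\tilde T\cap S'')\cup(S'-S''))\}$, and by symmetry of $\defc$ the second term equals $\defc(S''-\tilde T)$; so this quantity is $\ge\defc(\tilde T)$ in either case. (Note $\tilde T\cap S''$ is a proper nonempty subset of $S''$ because $\tilde T$ separates $s,t\in S''$, so Lemma~\ref{split-wsupm}(b) applies.) Composing this down the path $S=S_0\subsetneq\cdots\subsetneq S_\ell=V$, starting from an optimal $s$-$t$ cut $T^*\sse V$ and using $(T^*\cap S_k)\cap S_{k-1}=T^*\cap S_{k-1}$ at each step, yields $\defc_{\cdfunc,\gs}(T^*\cap S)\ge\defc_{f,G}(T^*)$; with ``$\le$'' this forces equality and exhibits $T^*\cap S$ as a maximizer on the left, which is the ``moreover'' claim.

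I expect the main obstacle to be the single-split statement for part~(c): one must ensure that whichever of the two surviving terms $\defc(\tilde T\cap S'')$ or $\defc(S''-\tilde T)$ the weak-supermodularity dichotomy hands back, it is recovered by the restricted deficiency $\defc_{\cdfunc'',\gs[S'']}(\tilde T\cap S'')$, which is precisely where the symmetry of $\defc$, the maximum-deficiency choice, and the exact form of Definition~\ref{restriction} must be combined carefully; parts~(a) and~(b) should be routine once the peeling identity of Lemma~\ref{split-wsupm}(b) is in place.
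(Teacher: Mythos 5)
Your proposal is correct and follows essentially the same route as the paper: induction on depth via the restriction identity of Lemma~\ref{split-wsupm}(b) for part (a), part (b) as an immediate corollary, and for part (c) a level-by-level descent that combines weak supermodularity of the deficiency, the maximality of the split cut, symmetry, and the restriction identity. Your only (harmless) repackaging is to isolate the descent step as a ``single-split'' lemma and to invoke the max-identity of Lemma~\ref{split-wsupm}(b) directly where the paper instead chains through part (b) and symmetry.
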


\begin{proof}
We prove part (a) by induction on $\ell$, which we call the depth of $(\cdfunc,S)$. 
The base case when $\ell=0$ 
is trivially true. Suppose the statement holds whenever the depth is at most $d$,
and consider a node $(\cdfunc,S)$ with depth $\ell=d+1$. 
Then $\cdfunc$ is the restriction of $\cdfunc_1$ to $S$, so letting $Z=\dt_{\gs[S_1]}(S)$, we have 
\begin{alignat}{1}
\defc_{\cdfunc,\gs}(T) & =
\max\bigl\{\cdfunc_1(T)-|\dt_{Z}(T)|,\cdfunc_1(T\cup S_1-S)-|\dt_Z(T\cup S_1-S)|\bigr\}-|\dt_{\gs}(T)|
\notag \\
& = \max\bigl\{\defc_{\cdfunc_1,\gs[S_1]}(T),\defc_{\cdfunc_1,\gs[S_1]}(T\cup R_1)\bigr\}.
\label{decompineq1}
\end{alignat}
By the induction hypothesis, we have 
\begin{equation*}
\begin{split}
\defc_{\cdfunc_1,\gs[S_1]}(T) & =\max\,\bigl\{\defc_{f,G}(T\cup R_I): I\sse\{2,\ldots,\ell\}\bigr\},
\qquad \text{and} \\
\defc_{\cdfunc_1,\gs[S_1]}(T\cup R_1) & =\max\,\bigl\{\defc_{f,G}(T\cup R_1\cup R_I): I\sse\{2,\ldots,\ell\}\bigr\}.
\end{split}
\end{equation*}
Plugging these in \eqref{decompineq1}, and noting that $R_1\cup R_I=R_{\{1\}\cup I}$ for
$I\sse\{2,\ldots,\ell\}$ yields 
$\defc_{\cdfunc,\gs}(T)=\max\,\bigl\{\defc_{f,G}(T\cup R_I): I\sse[\ell]\bigr\}$,
completing the induction step, and the proof.

\medskip
Part (b) follows directly from part (a), since the collection of sets 
$\bigl\{T\cup R_I: I\sse\{i+1,\ldots,\ell\}\bigr\}$ considered in the computation of
$\defc_{\cdfunc_i,\gs[S_i]}(T)$ is a superset of the collection 
$\bigl\{T\cup R_I: I\sse\{j+1,\ldots,\ell\}\bigr\}$ considered for
$\defc_{\cdfunc_j,\gs[S_j]}(T)$. 

\medskip
We now prove part (c).
Let LHS = $\max\,\bigl\{\defc_{\cdfunc,\gs}(T):\ T\sse S\text{ is an $s$-$t$ cut}\bigr\}$
and RHS = $\max\,\bigl\{\defc_{f,G}(T):\ T\sse V\text{ is an $s$-$t$ cut}\bigr\}$.
By part (a), for any $s$-$t$ cut $T\sse S$, we can find some $s$-$t$ cut $T_1\sse V$
such that $\defc_{\cdfunc,\gs}(T)=\defc_{f,G}(T_1)$. So LHS $\leq$ RHS.

For the other direction, 
fix any $s$-$t$ cut $T_1\sse V$. 
We show that $\defc_{f,G}(T_1)\leq\defc_{\cdfunc,\gs}(T_1\cap S)$, 
which shows that RHS $\leq$ LHS, and also yields the desired statement 
relating maximizers of the RHS and the LHS.

Recall that $(\cdfunc_1,S_1),(\cdfunc_2,S_2),\ldots,(\cdfunc_\ell,S_\ell)=(f,V)$ are the nodes on the path
from $(\cdfunc,S)$ to the root, with $(\cdfunc_1,S_1)$ being the parent of $(\cdfunc,S)$.
Let $a\in\{0,1,\ldots,\ell\}$ be the smallest index such that $T_1\sse S_a$.
Define $X_i:=T_1\cap S_i$ for all $i\in[a]$ (so $X_a:=T_1$).
We exploit the weak-supermodularity (and symmetry) of the $\defc_{\cdfunc_i,\gs[S_i]}(\cdot)$ 
functions to show that 
\begin{equation}
\defc_{\cdfunc_i,\gs[S_i]}(X_i)\leq\defc_{\cdfunc_{i-1},\gs[S_{i-1}]}(X_{i-1}) \qquad 
\forall i\in[a].
\label{decomp-inv}
\end{equation}
Clearly, \eqref{decomp-inv} implies that 
$\defc_{\cdfunc_a,\gs[S_a]}(T_1)\leq\defc_{\cdfunc,\gs}(T_1\cap S)$.
We also have $\defc_{f,G}(T_1)\leq\defc_{\cdfunc_a,\gs[S_a]}(T_1)$, which finishes the proof.

We now prove \eqref{decomp-inv}.
We have $\defc_{\cdfunc_i,\gs[S_i]}(X_i)+\defc_{\cdfunc_i,\gs[S_i]}(S_{i-1})\leq
\defc_{\cdfunc_i,\gs[S_i]}(U_{i-1})+\defc_{\cdfunc_i,\gs[S_i]}(Y_{i-1})$
where $U_{i-1}=X_i\cap S_{i-1}$, $Y_{i-1}=X_i\cup S_{i-1}$, or $U_{i-1}=S_{i-1}-X_i$, 
$Y_{i-1}=X_i-S_{i-1}$.
By the definition of $S_{i-1}$, we have 
$\defc_{\cdfunc_i,\gs[S_i]}(S_{i-1})\geq\defc_{\cdfunc_i,\gs[S_i]}(Y)$ for all 
$Y\sse S_i$. So the above inequality yields that 
$\defc_{\cdfunc_i,\gs[S_i]}(X_i)\leq\defc_{\cdfunc_i,\gs[S_i]}(U_{i-1})$.
Also, by part (b), we have 
$\defc_{\cdfunc_i,\gs[S_i]}(U_{i-1})\leq\defc_{\cdfunc_{i-1},\gs[S_{i-1}]}(U_{i-1})$.
We know that $U_{i-1}$ is $X_i\cap S_{i-1}=X_{i-1}$ or $S_{i-1}-X_i$. 
So since $\defc_{\cdfunc_{i-1},\gs[S_{i-1}]}(\cdot)$ is symmetric, we therefore obtain that
$\defc_{\cdfunc_{i-1},\gs[S_{i-1}]}(U_{i-1})=\defc_{\cdfunc_{i-1},\gs[S_{i-1}]}(X_i\cap S_{i-1})
=\defc_{\cdfunc_{i-1},\gs[S_{i-1}]}(X_{i-1})$. 
Combining everything, we obtain that \eqref{decomp-inv} holds for index $i$.  
\end{proof}

Lemma~\ref{decomptree} (c) indicates that finding maximum-deficiency $s$-$t$
cuts is the natural algorithmic problem that one needs to solve in order to find the
canonical decomposition tree, and the resulting decomposition.
Formally, in the {\em max-deficiency $s$-$t$ cut} problem, we are
given $s,t\in V$ and we need to return an $s$-$t$ cut $\scut\sse V$ that maximizes
$\defc_{f,G}(T)$ over all $s$-$t$ cuts $T\sse V$ 
(and its deficiency $\defc_{f,G}(\scut)$).
More generally, in the {\em weighted max-deficiency $s$-$t$ cut problem}, we are also given
$w\in\R_+^E$, and we want find an $s$-$t$ cut $\scut\sse V$ that
maximizes $\wdef_{f,G}(S):=f(T)-w\bigl(\dt_{G}(T)\bigr)$ over all $s$-$t$ cuts $T\sse V$.

\begin{theorem} \label{poly-decomp}
Given a polytime algorithm $\alg$ for the max-deficiency $s$-$t$ cut problem,
we can efficiently compute the canonical decomposition tree, 
and hence the decomposition in Theorem~\ref{redn-cor}.
\end{theorem}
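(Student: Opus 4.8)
The plan is to build the canonical decomposition tree top-down, exactly as in its definition: start from the root $(f,V)$ and, at each current leaf $(\cdfunc,S)$, decide whether there is a small-$(\cdfunc,\gs)$-cut and, if so, produce a maximum-deficiency such cut $\scut\sse S$, attaching children $(\cdfunc_1,\scut)$ and $(\cdfunc_2,S-\scut)$. Thus everything reduces to carrying out this single step at each node, and the one real difficulty — which Lemma~\ref{decomptree}~(c) resolves — is that $\alg$ is an oracle only for the \emph{original} instance $(f,G)$, whereas the recursion asks us to maximize deficiency in the \emph{derived} instances $(\cdfunc,\gs)$.

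First I would observe that, since $\cdfunc$ is normalized, $\defc_{\cdfunc,\gs}(\es)=\defc_{\cdfunc,\gs}(S)=0$, so a small-$(\cdfunc,\gs)$-cut exists iff $\max\{\defc_{\cdfunc,\gs}(T):\es\neq T\subsetneq S\}>0$, and any maximizer of this quantity is an admissible choice of $\scut$ when the maximum is positive. Next, every such $T$ is an $s$-$t$ cut of $S$ for any $s\in T$, $t\in S-T$, and conversely every $s$-$t$ cut of $S$ is such a set, so
\[
\max\{\defc_{\cdfunc,\gs}(T):\es\neq T\subsetneq S\}=\max_{s,t\in S,\ s\neq t}\ \max\{\defc_{\cdfunc,\gs}(T):\ T\sse S\ \text{is an $s$-$t$ cut}\}.
\]
By Lemma~\ref{decomptree}~(c), for each pair $s,t\in S$ the inner maximum equals $D_{st}:=\max\{\defc_{f,G}(T):\ T\sse V\ \text{is an $s$-$t$ cut}\}$, which is exactly the value $\alg(s,t)$ returns; moreover, if $T_{st}\sse V$ is the maximizing $s$-$t$ cut returned by $\alg(s,t)$, then, again by Lemma~\ref{decomptree}~(c), $T_{st}\cap S$ attains the inner maximum, and $T_{st}\cap S$ is a legitimate $s$-$t$ cut of $S$ (it contains exactly one of $s,t$, hence $\es\neq T_{st}\cap S\subsetneq S$). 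So at node $(\cdfunc,S)$ it suffices to find $(s^*,t^*)\in\argmax_{s,t\in S}D_{st}$: if $D_{s^*t^*}\leq 0$ the node is a leaf, and otherwise $\scut:=T_{s^*t^*}\cap S$ is a maximum-deficiency cut and we split along it. Note that this step never evaluates $\cdfunc$ itself — every deficiency value is routed back, via $\alg$, to a deficiency in $(f,G)$.

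The algorithm is then: first invoke $\alg(s,t)$ once for each pair $s,t\in V$ to tabulate $D_{st}$ and a maximizer $T_{st}$ ($O(|V|^2)$ oracle calls); then grow the tree from $(f,V)$ using the rule above, where each split only needs a maximum over the precomputed table restricted to pairs inside the current node-set. Correctness is immediate from the two displayed identities, as the decision made and the cut $\scut$ chosen at each node agree with the definition of the canonical decomposition tree. For the running time: each internal node has two nonempty children whose node-sets partition the parent's node-set (because $\es\neq\scut\subsetneq S$), so the leaves partition $V$ and the tree has $O(|V|)$ nodes; per node we do only $O(|V|^2)$ work on the precomputed table. Hence the whole computation is polynomial, and reading off the leaves gives the partition $V_1,\ldots,V_k$ and the associated restriction functions $f_i$ (each an iterated restriction in the sense of Definition~\ref{restriction}), i.e., the decomposition of Theorem~\ref{redn-cor}.

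The step I expect to be the crux is purely conceptual: realizing that one max-deficiency-cut oracle for the base instance $(f,G)$ already suffices to drive the recursion through all of the restricted instances along the tree. This is exactly what Lemma~\ref{decomptree}~(c) provides (its proof in turn exploiting the weak-supermodularity and symmetry of the deficiency functions at each node), and once that lemma is available the remaining work — ranging over pairs, intersecting the returned cut with the current node-set, and comparing against $0$ — is routine bookkeeping.
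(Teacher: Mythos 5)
Your proposal is correct and follows essentially the same route as the paper's proof: both reduce the search for a maximum-deficiency cut at a node $(\cdfunc,S)$ to max-deficiency $s$-$t$ cut computations in the original instance $(f,G)$ via Lemma~\ref{decomptree}~(c), intersecting the returned cut with $S$. The only (immaterial) difference is that the paper fixes one $s\in S$ and varies $t$, whereas you range over all pairs and add explicit bookkeeping on the tree size.
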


\begin{proof}
We only need to show that given a node $(\cdfunc,S)$ of the current canonical decomposition tree, 
how to find $\scut\sse S$ such that $\defc_{\cdfunc,\gs}(\scut)=\max_{T\sse S}\defc_{\cdfunc,\gs}(T)$. 
Fix some $s\in S$.
Clearly, if we can find $\max\,\{\defc_{\cdfunc,\gs}(T): T\sse S\text{ is an $s$-$t$ cut}\}$,
then we can find $\scut$ by taking the maximum over these $s$-$t$ cuts. 
Using Lemma~\ref{decomptree} (c), we can optimize over $s$-$t$ cuts contained in $S$ by
finding a max $s$-$t$ deficient cut with respect to $(f,G)$ using $\alg$, and taking its
intersection with $S$. 
\end{proof}

Furthermore, note that if can solve weighted max $s$-$t$ deficient cut in polytime, then
for any leaf $(f_i,S)$ of the canonical decomposition tree, we can give an efficient
separation oracle for the LP $\fndlp[{f_i,\gs}]$. This is because exactly as in
Lemma~\ref{decomptree} (c), we have
$\max\,\bigl\{\wdef_{f_i,\gs}(T):\ T\sse S\text{ is an $s$-$t$ cut}\bigr\}
= \max\,\bigl\{\wdef_{f,G}(T):\ T\sse V\text{ is an $s$-$t$ cut}\bigr\}$, and if $T$
attains the maximum on the right, then $T\cap S$ attains the maximum on the left. 
So we can find $\max_{T\sse S}\bigl(f_i(T)-x\bigl(\dt_{\gs}(T)\bigr)$ by performing $|S|$ 
weighted max-$s$-$t$-deficient-cut computations (with respect to $(f,G)$).

For \crsndp, where the base cut-requirement function is $f^\sndp$, finding a weighted
$s$-$t$ deficient cut amounts to a suitable min-cut problem. Recall that the \sndp input
specifies $k$ tuples $(s_i,t_i,r_i)$, where $s_i,t_i\in V$, and $r_i\in\Z_+$ for all $i$,
and $f^\sndp(S)=\max\,\{r_i: |S\cap\{s_i,t_i\}|=1\}$. 
So to solve $\max\,\{f^{\sndp}(T)-w\bigl(\dt_G(T)\bigr):\ T\sse V\text{ is an $s$-$t$ cut}\bigr\}$,
we consider each $i\in[k]$, and 
find $T_i$ that minimizes $w\bigl(\dt_G(T)\bigr)$ over all sets $T\sse V$ such that
$|T\cap\{s_i,t_i\}|=1=|T\cap\{s,t\}|$, by solving a min-cut problem.
We then return $\max_{i\in[k]}\bigl(r_i-w\bigl(\dt_G(T_i)\bigr)$. We therefore obtain the
following corollary.

\begin{corollary} \label{decomp-sndp}
For \crsndp, we can find the canonical decomposition tree in polytime. 
Also, for any leaf $(f_i,S)$ of this tree, we have a polytime separation oracle for
$\fndlp[{f_i,\gs}]$. 
\end{corollary}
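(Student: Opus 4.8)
The plan is to assemble this from pieces already in hand: Theorem~\ref{poly-decomp} reduces computing the canonical decomposition tree to a polynomial number of max-deficiency $s$-$t$ cut computations with respect to $(f,G)=(f^\sndp,G)$, and the observation immediately preceding the corollary reduces a separation oracle for $\fndlp[{f_i,\gs}]$ at a leaf $(f_i,S)$ to polynomially many \emph{weighted} max-deficiency $s$-$t$ cut computations with respect to $(f^\sndp,G)$ (via Lemma~\ref{decomptree}(c) applied with $\wdef$ in place of $\defc$, whose proof goes through verbatim since it only uses weak supermodularity and symmetry of the deficiency functions, which $\wdef_{f,G}$ inherits by Lemma~\ref{wsupm-props}\ref{wsupm-res}). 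So the only thing left to supply is a polynomial-time algorithm for the weighted max-deficiency $s$-$t$ cut problem on $(f^\sndp,G)$; the unweighted version is then the special case $w=\bon$.

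First I would describe that algorithm. Fix $s,t\in V$ and $w\in\R_+^E$; we want $\max\bigl\{f^\sndp(T)-w(\dt_G(T)):T\text{ is an $s$-$t$ cut}\bigr\}$. Since $f^\sndp(T)=\max\{r_i:|T\cap\{s_i,t_i\}|=1\}$, I would guess the index $i\in[k]$ attaining this inner maximum. For a fixed $i$, the relevant quantity is $r_i-\min\bigl\{w(\dt_G(T)):|T\cap\{s,t\}|=1,\ |T\cap\{s_i,t_i\}|=1\bigr\}$, and the minimization is a min-cut problem: exactly as in the proof of Lemma~\ref{crlp-solve}, requiring $T$ to simultaneously separate the pair $\{s,t\}$ and the pair $\{s_i,t_i\}$ amounts to deciding which side $s_i$ and $t_i$ fall on, so I would compute the minimum $\{s,s_i\}$-$\{t,t_i\}$ cut and the minimum $\{s,t_i\}$-$\{t,s_i\}$ cut under capacities $w$ and take the cheaper one, giving a value $w(\dt_G(T_i))$ with an attaining cut $T_i$. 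Then return $\max_{i\in[k]}\bigl(r_i-w(\dt_G(T_i))\bigr)$ together with the corresponding $T_i$.

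With this primitive in place, both claims of the corollary follow immediately: Theorem~\ref{poly-decomp} gives the canonical decomposition tree in polynomial time using $w=\bon$, and for a leaf $(f_i,S)$ the separation-oracle construction preceding the corollary expresses $\max_{T\sse S}\bigl(f_i(T)-x(\dt_{\gs}(T))\bigr)$ as $|S|$ weighted max-deficiency $s$-$t$ cut computations on $(f^\sndp,G)$, each of which is a min-cut call as above.

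The only genuine subtlety is the bookkeeping in the min-cut reduction: one must handle the degenerate cases where the four terminals $s,t,s_i,t_i$ are not all distinct (e.g.\ $s=s_i$, which makes one of the two side-assignments infeasible and forces the problem to a single ordinary $s$-$t$ min-cut, or $\{s,t\}=\{s_i,t_i\}$, which collapses the two constraints into one), and the case where no $i$ gives a positive deficiency (then the answer is determined by the all-$V$/empty cut as in the normalized convention). This is routine but needs care so that each min-cut instance is well defined; everything else is a direct composition of the results established above.
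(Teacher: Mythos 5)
Your proposal matches the paper's argument essentially verbatim: the paper also combines Theorem~\ref{poly-decomp} (for building the tree), the observation that a separation oracle for $\fndlp[{f_i,\gs}]$ reduces via the weighted analogue of Lemma~\ref{decomptree}(c) to weighted max-deficiency $s$-$t$ cut computations with respect to $(f^\sndp,G)$, and the same min-cut subroutine that guesses $i\in[k]$ and computes the cheaper of the $\{s,s_i\}$-$\{t,t_i\}$ and $\{s,t_i\}$-$\{t,s_i\}$ cuts. Your extra care about degenerate terminal coincidences is a minor refinement the paper leaves implicit; nothing else differs.
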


\subsection{Succinct description of the canonical decomposition} \label{succinct}
As noted earlier, one structural consequence of Lemma~\ref{decomptree} is that one can
give a succinct description of the $(f_i,V_i)$ tuples obtained at the leaves of the
canonical decomposition tree. This also leads to an alternate decomposition satisfying the
properties stated in Theorem~\ref{redn-cor}. 
Define $\smcuts:=\bigcup_{S\sse V:\defc_{f,G}(S)>0}\dt_G(S)$ to be the set of all 
edges on the boundary of some small-$(f,G)$-cut; we sometimes refer to $\smcuts$ as small-cut
edges. Let $\spart_1,\ldots,\spart_p$ be the node-sets of the components of $G-\smcuts$. 
Note that for a component $\spart_i$, while $\dt_G(\spart_i)\sse\smcuts$, 
{\em it need not be that $\spart_i$ is a small-$(f,G)$-cut}. (For instance, if the base
cut-requirement function $f$ models an $s$-$t$ connectivity problem, then every
small-$(f,G)$-cut must be an $s$-$t$ cut; so unless there are only two small cuts,
it will not be that every $\spart_i$ set is a small cut.) 

Say that components $\spart_i$ and $\spart_j$ are {\em non-separable} if for every
small-$(f,G)$-cut $\scut$, we have that $\spart_i\sse\scut$ iff
$\spart_j\sse\scut$. Clearly, non-separability is an equivalence relation. We show
that the $V_1,\ldots,V_k$ partition is precisely the partition formed by the equivalence
classes of the non-separability relation.
We can also succinctly specify the $f_i:2^{V_i}\mapsto\Z$ functions for $i\in[k]$
(Lemma~\ref{decomp-fns}).  

\begin{lemma} \label{decomp-partn}
The leaf sets $V_1,\ldots,V_k$ of the canonical decomposition are the
equivalence classes of the non-separability relation. That is, for each $V_i$, we have a
set $\eqclass_i\sse[p]$ of indices such that: (i) $V_i=\bigcup_{j\in\eqclass_i}\spart_j$; and
(ii) for any two indices $j\in\eqclass_i$, $j'\in[p]$, components $\spart_j$ and
$\spart_{j'}$ are non-separable iff $j'\in\eqclass_i$.
In particular, note that there are no edges of $\smcuts$ internal to a leaf-set.
\end{lemma}

\begin{proof}
We first argue that the 
$\spart_1,\ldots,\spart_p$ partition is a refinement of the $V_1,\ldots,V_k$ partition.
To do so, we need to show that $\dt_G(V_i)\sse\smcuts$ for every leaf-set $V_i$. 
Consider any node $(\cdfunc,S)\neq (f,V)$ of the canonical decomposition tree,
and let $(\cdfunc_1,S_1)$ be the parent of $(\cdfunc,S)$. 
Applying part (a) of Lemma~\ref{decomptree} to $(\cdfunc_1,S_1)$ taking
$T=S$, since $\defc_{\cdfunc_1,\gs[{S_1}]}(S)>0$, we obtain that there is some
small-$(f,G)$-cut $A\sse V$ such that $S=A\cap S_1$. Thus, 
$\dt_{\gs[{S_1}]}(S)\sse\dt_G(A)\sse\smcuts$.
Now for any leaf-set $V_i$, every edge in $\dt_G(V_i)$
belongs to $\dt_{\gs[{S_1}]}(S)$, where $(\cdfunc,S)$ is some ancestor of $(f_i,V_i)$ and
$(\cdfunc_1,S_1)$ is its parent.
So we obtain that $\dt_G(V_i)\sse\smcuts$ for every leaf-set $V_i$. 

Now, consider some leaf-set $V_i$, and consider any two components 
$\spart_j,\spart_{j'}$, where $\spart_j\sse V_i$. We have to show that $\spart_j$ and 
$\spart_{j'}$ are non-separable iff $\spart_{j'}\sse V_i$. Suppose first that
$\spart_{j'}\nsubseteq V_i$. Let $\spart_{j'}\sse V_q$ for some leaf-set $V_q\neq V_i$. 
Let $(\cdfunc,S)$ be the least common ancestor of $(f_i,V_i)$ and $(f_q,V_q)$, and
let $(\cdfunc_1,\scut)$, $(\cdfunc_2,S-\scut)$ be the children of $(\cdfunc,S)$. Clearly,
$\scut$ separates $V_i$ and $V_q$.
We have that $0<\defc_{\cdfunc,\gs}(\scut)=\max\,\{\defc_{\cdfunc,\gs}(T): T\sse S\}$.
So for any $s\in \scut$, $t\in S-\scut$, by Lemma~\ref{decomptree} (c),
we may assume that $\scut=T\cap S$, where $\defc_{f,G}(T)=\defc_{\cdfunc,\gs}(\scut)$.
It follows that $T$ is a small-$(f,G)$-cut that separates $V_i$ and $V_q$, and hence
also separates $\spart_j$ and $\spart_{j'}$.

Now suppose that $\spart_{j'}\sse V_i$. 
Suppose there is some small-$(f,G)$-cut $T$ that separates $\spart_j$ and $\spart_{j'}$.
Then taking any node $s\in\spart_j$ and any $t\in\spart_{j'}$, by Lemma~\ref{decomptree}
(c), there is some small-$(f_i,\gs[{V_i}])$-cut $T'\subsetneq V_i$ separating $s$ and
$t$. This yields a contradiction since $(f_i,V_i)$ is a leaf of the canonical
decomposition tree. 
\end{proof}

\begin{lemma} \label{decomp-fns}
Let $(f_1,V_1),\ldots,(f_k,V_k)$ be the leaves of the canonical decomposition tree.
Consider $i\in[k]$, and
let $\scol_i:=\bigl\{\bigcup_{j\in I}V_j: I\sse[k]-\{i\}\bigr\}$.
Define $\nf_i:2^{V_i}\mapsto\Z$ as follows:
\begin{equation*}
\nf_i(\es)=\nf_i(V_i):=0, \qquad
\nf_i(S):=\max\,\bigl\{f(S\cup X)-|\dt_{\smcuts}(S\cup X)|: X\in\scol_i\bigr\} 
\quad \forall \es\neq S\subsetneq V_i.
\end{equation*}
Then $\nf_i$ coincides with $f_i$. 
the function obtained from the canonical decomposition, for all $i\in[k]$.
\end{lemma}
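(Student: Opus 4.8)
The plan is to reduce the identity $\nf_i = f_i$ to a purely deficiency-theoretic statement and then close it with Lemma~\ref{decomptree}. Fix $i\in[k]$. For $S\in\{\es,V_i\}$ both $\nf_i(S)$ and $f_i(S)$ equal $0$ (the latter because $f_i$ is normalized, by Theorem~\ref{redn-cor}), so it remains to treat $\es\neq S\subsetneq V_i$.

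The first step I would carry out is an edge-accounting identity: for every $X\in\scol_i$,
\[
\bigl|\dt_G(S\cup X)\bigr| = \bigl|\dt_{\gs[V_i]}(S)\bigr| + \bigl|\dt_{\smcuts}(S\cup X)\bigr|.
\]
This uses Lemma~\ref{decomp-partn} together with the second half of its proof, which shows that $\smcuts$ is precisely the set of edges of $G$ joining two distinct leaf-sets (no $\smcuts$-edge is internal to a leaf-set). Since $X$ is a union of leaf-sets other than $V_i$ and $S\sse V_i$, we have $(S\cup X)\cap V_i=S$, so an edge of $\dt_G(S\cup X)$ is either internal to $V_i$ --- in which case it lies in $\dt_{\gs[V_i]}(S)$ --- or it joins two distinct leaf-sets --- in which case it lies in $\smcuts$, hence in $\dt_{\smcuts}(S\cup X)$ --- and these two alternatives are mutually exclusive. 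Substituting this into the definition of $\nf_i$, and recalling $\defc_{f,G}(\cdot)=f(\cdot)-|\dt_G(\cdot)|$ and $f_i(S)=\defc_{f_i,\gs[V_i]}(S)+|\dt_{\gs[V_i]}(S)|$, the claim $\nf_i(S)=f_i(S)$ becomes equivalent to
\[
\defc_{f_i,\gs[V_i]}(S) = \max_{X\in\scol_i}\ \defc_{f,G}(S\cup X).
\]

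For the inequality ``$\le$'' here I would apply Lemma~\ref{decomptree}(a) to the leaf node $(f_i,V_i)$: letting $R_1,\dots,R_\ell$ be the induced partition of $V-V_i$ along the root-path, $\defc_{f_i,\gs[V_i]}(S)=\max_{I\sse[\ell]}\defc_{f,G}(S\cup R_I)$; and each $R_j$ is the node-set of a subtree not containing $(f_i,V_i)$, hence a union of leaf-sets distinct from $V_i$, so each $R_I$ lies in $\scol_i$. The reverse inequality ``$\ge$'' is the step I expect to be the crux. I would obtain it from the pointwise bound proved inside the proof of Lemma~\ref{decomptree}(c): for any $s,t\in V_i$ and any $s$-$t$ cut $T_1\sse V$, one has $\defc_{f,G}(T_1)\le\defc_{f_i,\gs[V_i]}(T_1\cap V_i)$ (the proof of part~(c) establishes exactly $\defc_{f,G}(T_1)\le\defc_{\cdfunc,\gs}(T_1\cap S)$ for every such $T_1$, not merely for maximizers). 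Choosing $s\in S$ and $t\in V_i-S$ --- both non-empty --- and applying this to $T_1:=S\cup X$ for an arbitrary $X\in\scol_i$, which is an $s$-$t$ cut precisely because $X\cap V_i=\es$ and which satisfies $T_1\cap V_i=S$, yields $\defc_{f,G}(S\cup X)\le\defc_{f_i,\gs[V_i]}(S)$, as needed.

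The two places needing care are: extracting the per-cut form of Lemma~\ref{decomptree}(c) rather than its maximizer form (immediate from that proof); and verifying in the first step that every edge of $\dt_G(S\cup X)$ falls into exactly one of the two stated categories, which is where $S\subsetneq V_i$ and the structure of $\scol_i$ (unions of whole leaf-sets) are used, along with Lemma~\ref{decomp-partn}.
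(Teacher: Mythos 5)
Your proposal is correct and follows essentially the same route as the paper's proof: the ``$\le$'' direction via Lemma~\ref{decomptree}(a) applied to the root-path of the leaf $(f_i,V_i)$ (noting each $R_I\in\scol_i$), and the ``$\ge$'' direction via the per-cut inequality \eqref{decomp-inv} established inside the proof of Lemma~\ref{decomptree}(c), applied to $T_1=S\cup X$. Your explicit edge-accounting identity is just a spelled-out version of the paper's observation that $\dt_G(T\cup X)$ is the disjoint union of $\dt_{\smcuts}(T\cup X)$ and $\dt_{\gs[V_i]}(T)$.
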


\begin{proof}
Let $(\cdfunc_1,S_1),\ldots,(\cdfunc_\ell,S_\ell)=(f,V)$ be the nodes
of the canonical decomposition tree on the path from $(f_i,V_i)$ to the root, with
$(\cdfunc_1,S_1)$ being the parent of $(f_i,V_i)$. 
Consider any $\es\neq T\subsetneq V_i$. 
We first show that 
$\nf_i(T)\geq f_i(T)$, or equivalently $\defc_{\nf_i,\gs[V_i]}(T)\geq\defc_{f_i,\gs[V_i]}(T)$.
This is because, by Lemma~\ref{decomptree} (a), one can infer that $f_i$ is defined
similarly to $\nf_i$, except that we restrict the sets $X\in\scol_i$ over which the
maximum is taken. Recall that for $I\sse[\ell]$, we
define $R_I:=\bigcup_{j\in I}(S_j-S_{j-1})$, and $R_j$ denotes $R_{\{j\}}:=S_j-S_{j-1}$,
for $j\in[\ell]$. Observe that for any $j\in[\ell]$, we have that $S_j-S_{j-1}$ is a union
of some leaf-sets other than $V_i$. Therefore, $S_j-S_{j-1}\in\scol_i$ for every
$j\in[\ell]$, and hence $R_I\in\scol_i$ for all $I\sse[\ell]$.
We have
\begin{equation*}
\defc_{f_i,\gs[V_i]}(T)=\max\,\bigl\{\defc_{f,G}(T\cup R_I): I\sse[\ell]\bigr\}
\leq\max\,\bigl\{\defc_{f,G}(T\cup X): X\in\scol_i\bigr\}
=\defc_{\nf_i,\gs[V_i]}(T).
\end{equation*}
The first inequality is because $R_I\in\scol_i$ for every $I\sse[\ell]$, and the last
equation follows because $\dt_G(T\cup X)$ is the union of two disjoint sets,
$\dt_{\smcuts}(T\cup X)$ and $\dt_{\gs[V_i]}(T)$.

We now argue that $\nf_i(T)\leq f_i(T)$. 
Let $\defc_{\nf_i,\gs[V_i]}(T)=\defc_{f,G}(T\cup X)$, where $X\in\scol_i$. Let $T_1=T\cup X$.
In the proof of part (c) of Lemma~\ref{decomptree}, we argued that
$\defc_{f,G}(T_1)\leq\defc_{f_i,\gs[V_i]}(T_1\cap V_i)$ (see \eqref{decomp-inv}). But
$T_1\cap V_i=T$, so $\defc_{\nf_i,\gs[V_i]}(T)\leq\defc_{f_i,\gs[V_i]}(T)$.
\end{proof}

\subsubsection{Further remarks and discussion} \label{altdecomp-remk}

\paragraph*{Alternate ``direct'' proof.}
The reader may wonder whether one can directly argue that the $(\nf_i,V_i)$
tuples specified succinctly in Lemmas~\ref{decomp-partn}
and~\ref{decomp-fns} satisfy the properties stated in Theorem~\ref{redn-cor}. It is
easy to show that if $F$ is feasible for $\crfndp[{(f,G)}]$, then we
must have $F\supseteq\dt_G(V_i)$ and that $F(V_i)$ is a feasible solution to
$\fndp[{(\nf_i,\gs[V_i])}]$, for all $i\in[k]$. This is simply because, 
for any $i\in[k]$, any set $\es\neq S\subsetneq V_i$, and any $X\in\scol_i$, we 
have $|\dt_F(S\cup X)|\geq\min\bigl\{f(S\cup X),|\dt_G(S\cup X)|\bigr\}=f(S\cup X)$ 
(the last equality is because $\dt_G(S\cup X)\not\subseteq\smcuts$), which
implies that 
$|\dt_{F(V_i)}(S)|\geq f(S\cup X)-|\dt_{\smcuts}(S\cup X)|$. 

The other direction is more tricky.
Any small-$(f,G)$-cut $\scut$ must be a union of leaf-sets, because if 
$\scut\cap V_i, V_i-\scut$ are both non-empty for some leaf-set $V_i$, then
Lemma~\ref{decomptree} (c) implies that there is some small-$(f_i,\gs[{V_i}])$-cut, which
contradicts that $(f_i,V_i)$ is a leaf of the canonical decomposition tree.
Since $F\supseteq\bigcup_{i\in[k]}\dt_G(V_i)$, we can focus on sets that are not
a union of leaf-sets, and hence are not small-$(f,G)$-cuts.
Recall that we say that two sets $A, B$ cross if $A-B, A\cap B$ and $B-A$ are all
non-empty. Suppose $S\sse V$ crosses some $r$ leaf sets. 
Similar to the proof of Lemma~\ref{main-redn}, one potential approach would be to 
utilize the weak supermodularity of $\defc_{f,F}$ 
to move to sets that cross fewer leaf sets, and thereby use induction on $r$ to
establish that $\defc_{f,F}(S)\leq 0$. 
{\em There are some subtle issues that arise here.}
 
The base cases, $r=0$ or $r=1$ are easy enough to argue, using 
the fact that $F(V_i)$ is a feasible $\fndp[{(\nf_i,\gs[V_i])}]$-solution for all
$i\in[k]$. 
So suppose $r\geq 2$, 
A natural idea 
would be to pick some leaf set that $S$ crosses, say $V_i$, and apply weak supermodularity
to $S$ and $V_i$, since we know that all four sets $S\cap V_i, S\cup V_i, S-V_i, V_i-S$
cross fewer than $r$ leaf sets, and none of these sets is a union of leaf-sets and hence a 
small-$(f,G)$-cut. This would yield, for  
$\{X,Y\}=\{S\cap V_i, S\cup V_i\}$ or $\{X,Y\}=\{S-V_i, V_i-S\}$, that
$\defc_{f,F}(S)+\defc_{f,F}(V_i)\leq\defc_{f,F}(X)+\defc_{f,F}(Y)\leq 0$. 
However, since $V_i$ need not be a small-$(f,G)$-cut, this does not let us infer anything
about $\defc_{f,F}(S)$. 
To fix this, we would like to pick some {\em small-$(f,G)$-cut} $T\supseteq V_i$, such
that all the sets $S\cap T, S\cup T, S-T, T-S$ cross fewer than $r$ leaf sets (so that 
one can apply the induction hypothesis to these four sets). 
In particular, we need to avoid the situation where $T$ contains all the same leaf-sets
that $S$ crosses, because then $S\cap T$ and $T-S$ would also cross all of these leaf
sets. 

Fortunately, the proof of Lemma~\ref{decomp-partn} shows that any two leaf-sets can be
separated by some small-$(f,G)$-cut. (This utilizes insights from
Lemma~\ref{decomptree}; we do not know of any other way of proving this.)
This ``leaf-set separation'' property supplies
the key missing ingredient needed to make the above induction proof work. 
Let $V_i$ and $V_j$ be two leaf sets that $S$ crosses. Now let $T$ be a small-$(f,G)$-cut
separating $V_i$ and $V_j$. Now, all four sets $S\cap T, S\cup T, S-T, T-S$ cross 
fewer than $r$ leaf sets, so by weak-supermodularity and the induction hypothesis, we
obtain that $\defc_{f,F}(S)+\defc_{f,F}(T)\leq 0$, which implies that 
$\defc_{f,F}(S)\leq 0$, since $T$ is a small-$(f,G)$-cut.

\paragraph*{Alternate decomposition.}
The succinct description of the $\nf_i$ functions in Lemma~\ref{decomp-fns} suggests
another decomposition into $\fndp$ instances, where the cut-requirement functions are now
defined directly on the components of $G-\smcuts$ in a manner similar to the $\nf_i$
functions. Recall that $\spart_1,\ldots,\spart_p$ are the components of $G-\smcuts$.   
To elaborate, for $j\in[p]$, define $\newf_j:2^{\spart_j}\mapsto\Z$ as follows:
$\newf_j(\es)=\newf_j(\spart_j):=0$, and
\begin{equation*}
\newf_j(S):=\max\,\Bigl\{f(S\cup X)-|\dt_{\smcuts}(S\cup X)|: X=\bigcup_{j'\in I}\spart_{j'},\ I\sse[p]-\{j\}\Bigr\}
\quad \forall \es\neq S\subsetneq\spart_j.
\end{equation*}
We can show that the $(\newf_1,\spart_1),\ldots,(\newf_p,\spart_p)$ tuples satisfy the
properties stated in Theorem~\ref{redn-cor}. Consider a component $\spart_j$ and leaf-set
$V_i$ containing $\spart_j$. Let $\eqclass_i\sse[p]$ be such that 
$V_i=\bigcup_{j'\in\eqclass_i}\spart_{j'}$. To gain intuition, consider any
$S\sse\spart_j$ and consider the contraints imposed by $\nf_i$ on $\dt_{\gs[{V_i}]}(S)$.
Observe that $\dt_{\gs[{V_i}]}(\spart_{j'})=\es$ for any $j'\in\eqclass_i$, since
$\dt_G(\spart_{j'})\sse\smcuts$ and there
are no edges of $\smcuts$ internal to $V_i$.
So for any $Y=\bigcup_{j'\in J}\spart_{j'}$, where $J\sse\eqclass_i-\{j\}$, we can infer
that $\nf_i$ imposes that we pick $\nf_i(S\cup Y)$ edges from $\dt_{\gs[{\spart_j}]}(S)$. 
We argue that $\newf_j$ is simply the function that aggregates these
constraints. 
More precisely, we 
show that 
\begin{equation}
\newf_j(S)=\max\,\Bigl\{\nf_i(S\cup Y): Y=\bigcup_{j'\in J}\spart_{j'},\ J\sse\eqclass_i-\{j\}\Bigr\}
\quad \forall \es\neq S\subsetneq\spart_j. \label{newdecomp}
\end{equation}
This also implies that the $(\newf_j,\spart_j)$ tuples satisfy the properties in 
Theorem~\ref{redn-cor}. 

As in the proof of Lemma~\ref{decomp-fns}, it is easy to see that for any $S\sse\spart_j$,
we have $\newf_j(S)$ is at least the RHS of \eqref{newdecomp}.
Consider any $Y=\bigcup_{j'\in J}\spart_{j'}$, $J\sse\eqclass_i-\{j\}$.
(Recall that $\scol_i=\bigl\{\bigcup_{j\in I}V_j: I\sse[k]-\{i\}\bigr\}$, where
$V_1,\ldots,V_k$ are the leaf sets of the canonical decomposition.)
If $\nf_i(S\cup Y)=f(S\cup Y\cup X)-|\dt_{\smcuts}(S\cup Y\cup X)|$, where $X\in\scol_i$,
then we can see that $Y\cup X$ is of the form $\bigcup_{j'\in I}\spart_{j'}$ for some
$I\sse[p]-\{j\}$, and so 
$\newf_j(S)\geq f(S\cup Y\cup X)-|\dt_{\smcuts}(S\cup Y\cup X)|=\nf_i(S\cup Y)$.
To show that $\newf_j(S)$ is at most the RHS of \eqref{newdecomp},
suppose $\newf_j(S)=f(S\cup X)-|\dt_{\smcuts}(S\cup X)|$, where $X=\bigcup_{j'\in I}\spart_{j'}$,
$I\sse[p]-\{j\}$. Let $T=S\cup X$.
So $\defc_{\newf_j,\gs[{V_i}]}(S)=\defc_{f,G}(T)$. As in the proof
of Lemma~\ref{decomp-fns}, by Lemma~\ref{decomptree} (c), we have 
$\defc_{f,G}(T)\leq\defc_{f_i,\gs[{V_i}]}(T\cap V_i)$. But $T\cap V_i$ is of the form
$S\cup Y$, where $Y=\bigcup_{j'\in J}\spart_{j'}$, $J\sse\eqclass_i-\{j\}$, so we obtain
that $\defc_{\newf_j,\gs[{V_i}]}(S)\leq\defc_{f_i,\gs[{V_i}]}(S\cup Y)$, which translates
to $\newf_j(S)\leq f_i(S\cup Y)=\nf_i(S\cup Y)$.

\section{Hardness result} \label{hardness}
We now show that 
cut-relative \sndp and path-relative \sndp are \apxhard, even when the base \sndp instance 
involves only one $s$-$t$ pair, 
by showing that the $k$-edge connected subgraph (\kecss) problem can be cast as a special
case of these problems. (This hints at the surprising 
amount of modeling power of \{cut, path\}- relative \sndp.)
Previously, even \nphard{}ness of path-relative \sndp in the $s$-$t$ case was not known.

Let $\bigl(G=(V,E),\{c_e\}_{e\in E}\bigr)$ be an instance of \kecss. We may assume that
$G$ is at least $k$-edge connected, i.e., $|\dt_G(S)|\geq k$ for all 
$\es\neq S\subsetneq V$, as 
otherwise there is no feasible solution (and this can be efficiently detected).

Consider the following path-relative \sndp instance. We add a source $s$ and sink $t$, and
edges $sv, vt$ of cost $0$, for all $v\in V$. Let $G'=(V',E')$ denote this graph. The
base-\sndp instance asks for $(k+n)$-edge connectivity between $s$ and $t$, where
$n=|V|$. That is, the base cut-requirement function $f$ is given by $f(S')=k+n$ if 
$S'\sse V'$ is an $s$-$t$ cut, and $f(S')=0$ otherwise. 
We show that path-relative \sndp and cut-relative \sndp are actually
equivalent in this case, and that they encode \kecss on the graph $G$.

\begin{theorem} \label{apxhard}
$H\sse E'$ is a feasible solution to the above \{cut, path\}-relative \sndp instance $\iff$
$H\supseteq\dt_{G'}(s)\cup\dt_{G'}(t)$ and $H(V)$ is a feasible \kecss-solution for $G$.
Hence, this \{cut,path\}-relative \sndp instance is the same as \kecss on the graph $G$.
\end{theorem}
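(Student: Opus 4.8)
The plan is to show that the cut-relative and path-relative versions of this instance are \emph{both} equivalent to the structural condition $(\star)$: ``$H\supseteq\dt_{G'}(s)\cup\dt_{G'}(t)$, and $H(V)$ is a feasible \kecss-solution for $G$''. First I would record two elementary facts about $G'$. Since we added the edges $sv,vt$ for all $v\in V$, we have $\dt_{G'}(s)=\{sv:v\in V\}$ and $\dt_{G'}(t)=\{vt:v\in V\}$, each of size $n$. Moreover, for any $s$-$t$ cut $S'\sse V'$, say with $s\in S'$, writing $S:=S'\cap V$, a direct count gives $\dt_{G'}(S')=\dt_G(S)\cup\{sv:v\notin S\}\cup\{vt:v\in S\}$, hence $|\dt_{G'}(S')|=|\dt_G(S)|+n$. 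Using that $G$ is $k$-edge connected (so $|\dt_G(S)|\geq k$ for $\es\neq S\subsetneq V$), this shows that the only small cuts of the instance are $\{s\}$ and $\{t\}$ (equivalently, their complements), both of deficiency $k$, while every other $s$-$t$ cut $S'$ has $|\dt_{G'}(S')|\geq k+n=f(S')$.

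For the cut-relative version I would argue $\crsndp$-feasibility $\iff(\star)$ straight from the cut formulation $|\dt_H(S')|\geq\min\{f(S'),|\dt_{G'}(S')|\}$. The constraints for the small cuts $\{s\}$ and $\{t\}$ read $|\dt_H(s)|\geq n$ and $|\dt_H(t)|\geq n$, which (since $|\dt_{G'}(s)|=|\dt_{G'}(t)|=n$) force $\dt_{G'}(s)\cup\dt_{G'}(t)\sse H$. Given this, for any $s$-$t$ cut $S'=S\cup\{s\}$ with $\es\neq S\subsetneq V$, all $n$ of the $sv,vt$ edges crossing $S'$ lie in $H$, so $|\dt_H(S')|=|\dt_{H(V)}(S)|+n$, and the requirement $|\dt_H(S')|\geq\min\{k+n,|\dt_G(S)|+n\}=k+n$ is equivalent to $|\dt_{H(V)}(S)|\geq k$. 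So $\crsndp$-feasibility is precisely $(\star)$ (recall $H(V)=H\cap E$ here).

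For the path-relative version, one direction is free: as noted in Section~\ref{intro}, any $\crsndp$-feasible solution is path-relative feasible, so $(\star)\implies$ path-relative feasible. For the converse, assuming $H$ is path-relative feasible I would exhibit fault-sets witnessing $(\star)$. To see $\dt_{G'}(s)\sse H$: if some $sv_0\notin H$, take $F=\dt_{G'}(s)\sm\{sv_0\}$, of size $n-1<k+n$; then $G'-F$ has an $s$-$t$ path through $v_0$ (using that $G$ is connected and the edges $vt$ survive), but $s$ is isolated in $H-F$, a contradiction; symmetrically $\dt_{G'}(t)\sse H$. To see $H(V)$ is a feasible \kecss-solution: if $|\dt_{H(V)}(S)|<k$ for some $\es\neq S\subsetneq V$, put $S'=S\cup\{s\}$ and $F=\dt_H(S')$; then $|F|=|\dt_{H(V)}(S)|+n<k+n$ (using $\dt_{G'}(s)\cup\dt_{G'}(t)\sse H$), and $H-F$ has no $s$-$t$ path since $F$ severs $S'$ from its complement in $H$. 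On the other hand, since $|\dt_G(S)|\geq k>|\dt_{H(V)}(S)|$, there is an edge $u_0v_0\in\dt_G(S)\sm H$ with $u_0\in S$, $v_0\notin S$; this edge is not in $F$, and together with $su_0$ and $v_0t$ (neither of which crosses $S'$, hence neither lies in $F$) it gives an $s$-$t$ path in $G'-F$ — contradicting path-relative feasibility. Hence path-relative feasible $\implies(\star)$.

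Combining, $(\star)\implies\crsndp$-feasible $\implies$ path-relative feasible $\implies(\star)$, so all three notions coincide with $(\star)$. Finally, since the added edges $sv,vt$ all have cost $0$, a minimum-cost feasible solution (restricted to $E$) of either relative-\sndp instance is exactly a minimum-cost \kecss-solution for $G$, which gives the ``Hence'' sentence. I expect the path-relative converse to be the only genuine obstacle — specifically, verifying that the fault-set $F=\dt_H(S\cup\{s\})$ leaves an $s$-$t$ path in $G'-F$, which is precisely where $k$-edge connectivity of $G$ (forcing an uncovered edge of $\dt_G(S)$) is used.
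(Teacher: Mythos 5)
Your proposal is correct and follows essentially the same route as the paper: the same cut-counting identity $|\dt_{G'}(S')|=|\dt_G(S)|+n$, the same identification of $\{s\},\{t\}$ as the only small cuts, and the same two fault-set constructions for the path-relative converse (your $F=\dt_H(S\cup\{s\})$ is exactly the paper's $\dt_{H(V)}(S)\cup\{sw:w\in V-S\}\cup\{wt:w\in S\}$ once $\dt_{G'}(s)\cup\dt_{G'}(t)\sse H$ is established). The only cosmetic difference is that you phrase the argument as the explicit cycle $(\star)\Rightarrow$ cut-feasible $\Rightarrow$ path-feasible $\Rightarrow(\star)$, whereas the paper folds the first two implications into its ``$\Leftarrow$'' direction.
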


\begin{proof} 
Recall that we are assuming that $G$ is (at least) $k$-edge connected.
The following observation will be handy. The only small cuts in $G'$ are $\{s\}$ and
$\{t\}$ (and their complements). For any other $s$-$t$ cut $S'\sse V'$, 
taking $S=S'\cap V$, we have that $\es\neq S\subsetneq V$. So 
we have $|\dt_{G'}(S')|=|\dt_G(S)|+|S|+|V-S|\geq k+n$.

\medskip\noindent
\underline{\bf \boldmath $\Leftarrow$ direction.}
We argue that $H$ is feasible for cut-relative \sndp, which also
implies that it is feasible for path-relative \sndp. 
Clearly, $H$ covers both small cuts.
For any other $s$-$t$ cut $S'\sse V'$, taking $S=S'\cap V$, we have 
$|\dt_H(S')|\geq|\dt_{H(V)}(S)|+|S|+|V-S|\geq k+n$, since $H(V)$ is a feasible
\kecss-solution for $G$. 

\medskip\noindent
\underline{\bf\boldmath $\Rightarrow$ direction.}
Suppose that $H$ is feasible for path-relative \sndp. If there is some edge $sv\notin H$,
then consider the fault-set $F=\dt_{G'}(s)-\{sv\}$. Then $H-F$ clearly has no $s$-$t$
path, since $\dt_{H-F}(s)=\es$. But $G'-F$ has an $s$-$t$ path $sv,vt$. So we must have
$H\supseteq\dt_{G'}(s)$. A similar argument shows that $H\supseteq\dt_{G'}(t)$. 

We next argue that $|\dt_H(S')|\geq n+k$ for every $s$-$t$ cut $S'$ for which 
$\es\neq S:=S'\cap V\subsetneq V$. This is equivalent to showing that 
$|\dt_{H(V)}(S)|\geq k$, since $|\dt_H(S')\cap\dt_{G'}(s)|+|\dt_H(S')\cap\dt_{G'}(t)|=n$.
Assume that $s\in S'$ without loss of generality. Suppose $|\dt_{H(V)}(S)|<k$. Since
$|\dt_G(S)|\geq k$, there is some edge $uv\in\dt_G(S)-\dt_{H(V)}(S)$. Suppose $u\in S$. 
Consider the fault-set $F=\dt_{H(V)}(S)\cup\{sw: w\in V-S\}\cup\{wt: w\in S\}$.
Then $|F|<k+n$ and $H-F$ has no $s$-$t$ path, since we have constructed $F$ to ensure that
$\dt_{H-F}(S')=\es$. However, $G-F$ has an $s$-$t$ path: $su,uv,vt$. This contradicts that
$H$ is feasible for path-relative \sndp. 

So we have shown both that $H(V)$ is feasible \kecss solution for $G$, and that $H$ is
feasible for cut-relative \sndp.

Since the cost of the edges in $E'-E$ is $0$, the above \{cut, path\}-relative \sndp
instance is exactly the same as \kecss on the graph $G$. 
\end{proof}

\bibliography{rsndp}

\appendix

\section{Non-equivalence of path-relative \sndp and cut-relative \sndp}
\label{append-nonequiv} \label{nonequiv}

We consider the same graph and base-SNDP problem as in Fig.~\ref{nolaminar-fig}.
The graph is reproduced below for convenience, 
and recall that the base cut-requirement function models $s$-$t$ $2$-edge connectivity: 
so $f(S)=2$ if $S$ is an $s$-$t$ cut, and $f(S)=0$ otherwise.  

\begin{figure}[ht!]
\centering
\includegraphics[width=2.25in]{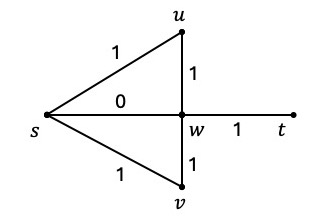}
\caption{An instance where path-relative \sndp and cut-relative \sndp are not
  equivalent. The number labeling an edge gives the cost of the edge.}
\label{nonequiv-fig}
\end{figure}

A feasible solution to path-relative \sndp is given by the edge-set $H_1=\{su,uw,sw,wt\}$,
which has cost $3$. (This is in fact an optimal solution to path-relative \sndp.)

However, {\em $H_1$ is not feasible for cut-relative \sndp}. This is because, we require
$\dt_H(\{s,u,w\})\geq\min\{2,3\}=2$ in any feasible \crsndp solution $H$. Moreover, one
can infer that any feasible \crsndp-solution must include all but one of the edges from
$\{su,uw,sw,sv,vw\}$. This is because for any $F\sse\{su,uw,sw,sv,vw\}$ with $|F|=2$, one
can verify that there is an $s$-$t$ cut $S$ such that $|\dt_G(S)|=3$ and $F\sse\dt_G(S)$; 
so $|\dt_{G-F}(S)|\leq 1$, which implies that there is no feasible solution that excludes
$F$. A feasible \crsndp-solution must also include the edge $wt$, so the optimal value for
cut-relative \sndp is $4$.

\smallskip
While cut-relative \sndp and path-relative \sndp are not in general equivalent, they are 
closely related. Dinitz et al.~\cite{DinitzKKN23} showed (see Lemma A.1
in~\cite{DinitzKKN23}) that one can give a cut-based formulation for path-relative \sndp,
where $H$ is feasible iff  
$|\dt_H(S)|\geq\min\bigl\{f(S),|\dt_G(S)|\}$ for every $S\sse V$ {\em such that $\gs$ and
$\gs[V'-S]$ are connected graphs}, where $V'\sse V$ is the vertex-set of the component 
containing $S$. That is, we require the cut-constraints of \crsndp to hold for a suitable
collection of node-sets (as opposed to all node-sets in \crsndp). This immediately implies
that \crsndp and path-relative \sndp are equivalent when the underlying graph is a
(capacitated) complete graph. Also, as mentioned earlier, path-relative \kecss and
cut-relative \kecss are equivalent.

\begin{lemma}[Follows from~\cite{DinitzKK22}] \label{kecss-equiv}
Path-relative \kecss and cut-relative \kecss are equivalent.
\end{lemma}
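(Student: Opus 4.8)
The plan is to show that path-relative \kecss and cut-relative \kecss have exactly the same set of feasible solutions, whence (the edge costs being identical) the two minimization problems coincide. Recall that \kecss is the special case of \sndp with an $s_i$-$t_i$ pair of requirement $k$ for every pair of nodes, so its base cut-requirement function is $f^{\kecss}(S)=k$ for all $\es\neq S\subsetneq V$; hence cut-relative \kecss is, by definition, \fndp with cut-requirement function $S\mapsto\min\{k,|\dt_G(S)|\}$.

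First I would dispatch the easy inclusion: every $H\sse E$ feasible for cut-relative \kecss is feasible for path-relative \kecss. This needs nothing specific to \kecss, and is precisely the implication recorded in the footnote in Section~\ref{intro}: fix a pair $u,v$ and a fault-set $F$ with $|F|<k$ such that $G-F$ has a $u$-$v$ path; then $\dt_{G-F}(S)\neq\es$ for every $u$-$v$ cut $S$, so cut-relative feasibility gives $\dt_{H-F}(S)\neq\es$ for every such cut, which forces $H-F$ to have a $u$-$v$ path.

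The substantive direction is the converse. Here I would argue by contradiction: suppose $H$ is feasible for path-relative \kecss but $|\dt_H(S)|<\min\{k,|\dt_G(S)|\}$ for some $\es\neq S\subsetneq V$. Take the fault-set $F:=\dt_H(S)$; then $|F|<k$, and since $|F|<|\dt_G(S)|$ and $\dt_H(S)\sse\dt_G(S)$, there is an edge $uv\in\dt_G(S)\sm F$ with, say, $u\in S$ and $v\in\bS$. Thus $u$ and $v$ are adjacent in $G-F$, hence lie in a common component of $G-F$. On the other hand $\dt_{H-F}(S)=\dt_H(S)\sm F=\es$, so no edge of $H-F$ crosses $S$, and $u,v$ lie in different components of $H-F$; taking the pair $(u,v)$, which has requirement $k>|F|$, together with the fault-set $F$, this contradicts path-relative feasibility of $H$. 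Therefore $H$ is feasible for cut-relative \kecss, and the two feasible sets coincide.

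I do not expect a genuine obstacle here: the argument is short, and the only point to handle carefully is that the chosen fault-set $F=\dt_H(S)$ has size strictly less than $k$ and still leaves some edge of $G$ across $S$ — both of which are exactly what the violated inequality $|\dt_H(S)|<\min\{k,|\dt_G(S)|\}$ supplies. Alternatively, one could simply invoke the cut-based reformulation of \kefts\ due to Dinitz et al.~\cite{DinitzKK22}, namely that $H$ is feasible for path-relative \kecss iff $|\dt_H(S)|\geq\min\{k,|\dt_G(S)|\}$ for all $S\sse V$, which is verbatim the cut-relative \kecss feasibility condition; spelling out the per-fault-set argument above, however, keeps the lemma self-contained.
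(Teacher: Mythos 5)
Your proof is correct and follows essentially the same route as the paper's: the easy inclusion is the footnote argument, and the converse takes the fault-set $F=\dt_H(S)$ for a violated cut $S$, picks a surviving edge $uv\in\dt_{G-F}(S)$, and contradicts path-relative feasibility for the pair $(u,v)$. No gaps.
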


\begin{proof}
This was shown by~\cite{DinitzKK22}. We include a proof for completeness.
We have already seen that a feasible solution to \crsndp is also feasible for
path-relative \sndp. Now suppose that $H\sse E$ is a feasible solution to
path-relative \kecss. 
Suppose $\dt_H(S)<\min\bigl\{k,|\dt_G(S)|\bigr\}$ for some set $\es\neq S\subsetneq V$.   
Consider $F=|\dt_H(S)|$, and let $e=uv\in\dt_{G-F}(S)$. Then, $|F|<k$ and $G-F$ has a
$u$-$v$ path, but $H-F$ has no $u$-$v$ path, which contradicts that $H$ is feasible for 
path-relative \kecss.  
\end{proof}

\section{Proof of part (b) of Theorem~\ref{main-redn}} \label{append-decomp}

We mimic the proof of part (a). Recall that $Z=\dt_G(\scut)$, 
$(f_1,G_1,F_1)=\bigl(f_{\scut},\gs[\scut],F(\scut)\bigr)$ and
$(f_2,G_2,F_2)=\bigl(f_{\scutc},\gs[\scutc],F(\scutc)\bigr)$.

We extend the notion of deficiency to fractional solutions, or equivalently weighted
graphs. Given graph $D=(V_D,E_D)$, weights $w\in\R_+^{E_D}$ and cut-requirement function
$h:2^{V_D}\mapsto\Z$, define 
the {\em weighted-deficiency} of a set $S\sse V$ with respect to $(h,w)$
to be $\wdef_{h,w}(S):=h(S)-w\bigl(\dt_D(S)\bigr)$.
If $w_e=1$ for all $e\in Z$, then as in part (b) of Lemma~\ref{split-wsupm}, we
have
$\wdef_{f_1,w^{\scut}}(T)=\max\bigl\{\wdef_{f,w}(T),\wdef_{f,w}(T\cup\scutc)\bigr\}$
for $T\sse\scut$. Similarly, by considering $f_2$, the restriction of $f$ to $\scutc$, 
we have
$\wdef_{f_2,w^{\scutc}}(T)=\max\bigl\{\wdef_{f,w}(T),\wdef_{f,w}(T\cup\scut)\bigr\}$
for $T\sse\scutc$.

Let $x\in\R_+^E$. Suppose first that $x$ is feasible to $\lpname[f,G]$. 
Since $\scut$ is a small-$(f,G)$-cut, we must have $x_e=1$ for all $e\in Z$. Consider any
$\es\neq T\subsetneq\scut$. 
We have $x\bigl(\dt_G(T)\bigr)\geq\min\bigl\{f(T),|\dt_G(T)|\bigr\}$, so since $x_e=1$ for
all $e\in Z$, we have 
$x^{\scut}\bigl(\dt_{G_1}(T)\bigr)\geq\min\bigl\{f(T)-|\dt_Z(T)|,|\dt_{G_1}(T)|\bigr\}$.
We also have
$x\bigl(\dt_G(T\cup\scutc)\bigr)\geq\min\bigl\{f(T\cup\scutc),|\dt_G(T\cup\scutc)|\bigr\}$,
and so
$x^{\scut}\bigl(\dt_{G_1}(T)\bigr)\geq\min\bigl\{f(T\cup\scutc)-|\dt_Z(T\cup\scutc)|,|\dt_{G_1}(T)|\bigr\}$. 
Combining these, we obtain that
$x^{\scut}\bigl(\dt_{G_1}(T)\bigr)\geq\min\bigl\{f_1(T),|\dt_{G_1}(T)|\bigr\}$. 
Thus shows that $x^{\scut}$ is feasible to $\lpname[{(f_1,G_1)}]$.

Again, a symmetric argument interchanging the roles of $\scut, f_1, F_1, G_1$ with
$\scutc, f_2, F_2, G_2$ shows that
$x^{\scutc}$ is a feasible solution to $\lpname[{(f_2,G_2)}]$.

\medskip
Now suppose that $x_e=1$ for all $e\in Z$, 
$x^{\scut}$ is feasible to $\lpname[{(f_1,G_1)}]$, and
$x^{\scutc}$ is feasible to $\lpname[{(f_2,G_2)}]$.
Consider any $\es\neq T\subsetneq V$. 
Suppose $T\sse\scut$. If $T$ is a small-$(f_1,G_1)$-cut, then $x_e=1$ for all
$e\in\dt_{G_1}(T)$; since $x_e=1$ for all $e\in Z$, this implies that
$x_e=1$ for all $e\in\dt_G(T)$. Otherwise, we have $\wdef_{f_1,x^{\scut}}(T)\leq 0$,
since $x^{\scut}$ is feasible to $\lpname[{(f_1,G_1)}]$, and
so $\wdef_{f,x}(T)\leq\wdef_{f_1,x^{\scut}}(T)\leq 0$. This shows that the
constraints for sets $T\sse\scut$ in $\lpname[f,G]$ is satisfied. A symmetric argument shows
that the same holds for sets $T\sse\scutc$.

Now consider $T$ such that $T\cap\scut$ and $T\cap\scutc$ are both non-empty.
As in the proof of part (a), since $\wdef_{f,x}$ is weakly supermodular and symmetric, we
can find sets $X\sse\scut$, $Y\sse\scutc$ such that
$\wdef_{f,x}(T)+\wdef_{f,X}(\scut)\leq\wdef_{f,x}(X)+\wdef_{f,x}(Y)$ and
$\dt_G(T)\sse\dt_G(X)\cup\dt_G(Y)\cup Z$. 
We showed above that the constraints for $X$ and $Y$ in $\lpname[f,G]$ are
satisfied, which can be stated equivalently as 
$\wdef_{f,x}(S)\leq\max\bigl\{0,\defc_{f,G}(S)\bigr\}$ for $S\in\{X,Y\}$.
Also, $\wdef_{f,x}(\scut)=\defc_{f,G}(\scut)>0$. 
So for $S\in\{X,Y\}$, we have
\begin{equation}
\wdef_{f,x}(S)\leq\max\bigl\{0,\defc_{f,G}(S)\bigr\}
\leq\max\bigl\{0,\defc_{f,G}(\scut)\bigr\}=\defc_{f,G}(\scut)=\wdef_{f,x}(\scut)
\label{mainredn-ineq1}
\end{equation}
where the second inequality is because $\scut$ is a maximum-deficiency set.
If both $X$ and $Y$ are small-$(f,G)$-cuts, then we have $x_e=1$ for all
$e\in\dt_G(X)\cup\dt_G(Y)\cup Z$, and hence $x_e=1$ for all $e\in\dt_G(T)$.
Otherwise, $\wdef_{f,x}(T)+\wdef_{f,X}(\scut)\leq\wdef_{f,x}(X)+\wdef_{f,x}(Y)$, coupled
with \eqref{mainredn-ineq1} shows that
$\wdef_{f,x}(T)\leq\min\bigl\{\wdef_{f,x}(X),\wdef_{f,x}(Y)\bigr\}$, which is at most $0$
since at least one of $X$, $Y$ is not a small-$(f,G)$-cut.
This shows that $x$ is feasible to $\lpname[(f,G)]$. \hfill \qed

\end{document}